\newtheorem{bigthm}{Theorem}   
\newtheorem{thm}{Theorem}[section]   
\newtheorem{cor}[thm]{Corollary}     
\newtheorem{lem}[thm]{Lemma}         
\newtheorem{prop}[thm]{Proposition}  
\theoremstyle{definition} 
\newtheorem{defn}[thm]{Definition}   
\newtheorem{ex}[thm]{Example}        
\newcommand{\refT}[1]{Theorem~\ref{T:#1}}
\newcommand{\refC}[1]{Corollary~\ref{C:#1}}
\newcommand{\refP}[1]{Proposition~\ref{P:#1}}
\newcommand{\refPp}[1]{Theorem~\ref{P:#1}} 
\newcommand{\refD}[1]{Definition~\ref{D:#1}}
\newcommand{\refL}[1]{Lemma~\ref{L:#1}}
\newcommand{\refEx}[1]{Example~\ref{Ex:#1}}
\newcommand{\refF}[1]{Figure~\ref{F:#1}}
\newcommand{\refS}[1]{Section~\ref{S:#1}}
\def\bd{\Gamma}
\newcommand{\od}{\stackrel{\mbox {\tiny {def}}}{=}}
\def\br{\overline}
\newcommand{\sm}{\setminus}
\def\e{\ensuremath{\mathbf{e}}}
\newcolumntype{R}[1]{>{\raggedleft\let\newline\\\arraybackslash\hspace{0pt}}p{#1}}
\def\nd{stable}
\def\ndy{stability}
\def\Nd{Stable}
\def\subnd{generic}
\def\subndy{genericity}
\newcommand{\code}{\ensuremath{\operatorname{code}}}
\newcommand{\nerve}{\ensuremath{\operatorname{nerve}}}
\newcommand{\link}{\ensuremath{\operatorname{link}}}
\newcommand{\del}{\ensuremath{\operatorname{del}}}
\newcommand{\cham}{\ensuremath{\operatorname{cham}}}
\newcommand{\cC}{\ensuremath{\mathcal{C}}}
\newcommand{\C}{\ensuremath{\mathcal{C}}}
\newcommand{\cU}{\ensuremath{\mathcal{U}}}
\newcommand{\cH}{\ensuremath{\mathcal{H}}}
\newcommand{\R}{\mathbb R}
\newcommand{\ZZ}{\mathbb{Z}}
\def\cB{\mathcal{B}}
\def\a{\alpha}
\def\b{\beta}
\def\D{\Delta}
\def\s{\sigma}
\def\S{\Sigma}
\def\t{\tau}
\title{Hyperplane Neural Codes and the Polar Complex}
\author{Vladimir Itskov}
\author{Alex Kunin}
\author{Zvi Rosen}
\begin{document}

\begin{abstract}
Hyperplane codes are a class of convex codes that arise as the output of a one layer feed-forward neural network. Here we establish several natural properties of  \nd\ hyperplane codes in terms of the {\it polar complex} of the code, a simplicial complex associated to any combinatorial code. We prove that the polar complex of a \nd\ hyperplane code is shellable and show that most currently known properties of hyperplane codes follow from the shellability of the appropriate polar complex. 
\end{abstract}

\maketitle

\vspace{-8mm}
\setcounter{tocdepth}{1}
\renewcommand{\baselinestretch}{0.6}\normalsize
\tableofcontents
\renewcommand{\baselinestretch}{1.0}\normalsize

\vspace{-8mm}


\section{Introduction}

Combinatorial codes, i.e.\ subsets of the Boolean lattice, 
naturally arise as outputs of neural networks.  
 A {\it codeword}  $\sigma \subseteq  [n] 
\od \{1,\ldots,n\}$ represents an allowed subset 
of co-active neurons, while a {\it code} is a 
collection $\C \subseteq 2^{[n]}$ of codewords.  
Combinatorial codes in a number of areas of the brain are often {\it convex}, 
i.e.\ they arise as an intersection pattern of convex sets in a Euclidean space \cite{hippocampus,visual,entorhinal}.
The combinatorial code of a one-layer 
feedforward neural network is also {convex}, as it arises as 
the intersection patterns of half-spaces \cite{R62, GI14}. It is well-known that a two-layer feedforward network can approximate any measurable function \cite{Cybenko89,Hornik91}, and thus may produce any combinatorial code. In contrast, 
the codes of one-layer feedforward networks are not well-understood. The intersection 
lattices of affine hyperplane arrangements have been studied in the oriented matroid 
literature \cite{OM, AOM, COM}. However,  combinatorial codes contain less detailed information than oriented matroids, and the precise relationship is not clear. 
We are motivated by the following question:
How can one determine  if a given combinatorial code 
is realizable as the output of a one-layer feedforward neural network?

We study \nd\ hyperplane codes, codes that 
arise from the intersection patterns of  half-spaces that are 
stable under certain small perturbations.  The paper is organized as follows. Relevant background and definitions are provided in \refS{background}. In \refS{obstructions}, we establish a number of obstructions 
that prevent  a combinatorial code from being a \nd\ hyperplane code. 
In \refS{mainresults}, we show that all but one of the currently known obstructions 
to being a \nd\ hyperplane code are subsumed 
by the condition that the \emph{polar complex} of the code, defined in \refS{polarcomplex}, is shellable. Lastly, in \refS{algebra} we show how  techniques from commutative algebra can be used to computationally detect the presence of these obstructions.


\section{Background}\label{S:background}

\subsection{\Nd\ Hyperplane Codes}

We call a collection $\cU = \{U_i\}$ of $n$ subsets $U_i \subseteq X$ of a set $X$ an \emph{arrangement} $(\cU,X)$. Note that we do {\it not} require  that $\bigcup_{i\in[n]} U_i = X$.
\begin{defn}
	For $\s\subseteq[n]$, let $A^{\cU}_\sigma$ denote the \textit{atom} of $(\mathcal{U},X)$ 
	\[
		A^{\cU}_\sigma \od \Bigl( \bigcap_{i \in \sigma} U_i \Bigr) \setminus  \bigcup_{j \not \in \sigma} U_j \subseteq X, \qquad \text{ where } 
		A^{\cU}_\varnothing \od X\setminus \bigcup_{i\in [n] } U_i.
	\]
	The \emph{code} of the arrangement $(\cU,X)$ is defined as
	\[ \code(\mathcal U,X)  \od \{ \sigma \subseteq [n] \text{ such that }A^{\mathcal U}_\sigma \neq \varnothing \} \subseteq 2^{[n]}.
	\]
	A \emph{realization} of a code $\cC$ is an arrangement $(\cU,X)$ such that $\cC = \code(\cU,X)$.
	The {\em simplicial complex of the code}, denoted $\Delta(\cC)$, is the closure of
	$\cC$ under inclusion: 
	\[ \Delta(\cC) \od   \{ \tau \mid \tau \subseteq \sigma \text{ for some } \sigma  \in \cC \}.\]
\end{defn}

\medskip 
\noindent 
Note that for $\cC = \code(\cU, X)$, the simplicial
complex of the code is equal to the \emph{nerve} of the corresponding cover:
\[ \Delta(\code(\cU,X)) = \nerve(\cU) \od \Bigl\{ \sigma \subseteq [n]   \mid  \bigcap_{i\in \sigma} U_i\neq \varnothing \Bigr\}. \]
   
A natural class of codes that arises in the context of neural networks is the class of  hyperplane codes \cite{GI14}. A hyperplane is a level set $H= \{x\in\R^d \mid w \cdot x  - h=0\}$ of a non-constant affine function. An {\it oriented}  hyperplane partitions $\R^d$ into three pieces: $\R^d=H ^+ \sqcup H \sqcup H ^-$,  where $H^{\pm}$ are the open half-spaces,   e.g. $H ^+\od \{ x\in \mathbb R^d \mid w\cdot x -h  > 0 \}$.

\begin{defn}
	A code $\C\subseteq 2^{[n]}$ is a \textit{hyperplane code}, if there exists an open convex subset $X \subseteq \R^d$ and a collection $\cH = \{H_1^+,\dots,H_n^+\}$ of  open half-spaces such that $\cC = \code(\{H_i^+ \cap X\},X)$. With a slight  abuse of notation,  we denote this arrangement of subsets of $X$ by $(\cH,X)$, thus  $\code(\cH,X) = \code(\{H_i^+ \cap X\},X)$.
\end{defn}

Hyperplane codes are produced by one-layer feedforward neural 
networks \cite{GI14}, where the convex set $X$ 
is often  the positive orthant $\R^d_{\geq 0}$.
A well-behaved subset of hyperplane codes 
are the {\it \nd} hyperplane codes. Informally, these are 
  codes that are preserved under small 
perturbations of the hyperplanes and the convex set $X$. These perturbations correspond to perturbations of the parameters of the neural network \cite{R62}, i.e.\ the vectors $(w_i,h_i) \in\R^d\times\R$ in our context. Thus, we restrict our attention to the class of \emph{\nd} hyperplane codes.
	\begin{defn} \label{D:nondegeneracy}
		An arrangement $(\cH,X)$ is \emph{\nd} if $X$ is open and convex, and the hyperplanes have \emph{\subnd\ intersections} in $X$, that is if $X \cap H_\s \od X \cap  \bigcap_{i\in\s} H_i  \neq \varnothing$, then $\dim H_\s = d - |\s|$.
		
		\noindent We call a code $\cC$ a \emph{\nd\ hyperplane code} if there exists a \nd\ arrangement $(\cH,X)$ such that $\cC = \code(\cH, X)$.  
	\end{defn}

\noindent 	\Nd\ arrangements are robust to noise in the sense that all atoms have nonzero measure. 
	\begin{lem}\label{L:nondegenerateatoms}
		If $(\cH,X)$ is a \nd\ arrangement, then every nonempty atom $A_\sigma^{\mathcal U}$ of the cover $\mathcal U=\left \{H_i^+\cap X\right\}$ has a nonempty interior.
	\end{lem}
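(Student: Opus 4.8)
The plan is to exhibit a nonempty \emph{open} subset of the atom $A^{\cU}_\sigma$, which is exactly what having nonempty interior means. Write the half-spaces as $H_i^+ = \{x\in\R^d : w_i\cdot x > h_i\}$, so that $H_i = \{x : w_i\cdot x = h_i\}$. Unwinding the definitions of the cover $\cU = \{H_i^+\cap X\}$ and of its atoms, a point $x$ lies in $A^{\cU}_\sigma$ iff $x\in X$, $w_i\cdot x > h_i$ for every $i\in\sigma$, and $w_j\cdot x \le h_j$ for every $j\notin\sigma$; that is,
\[
A^{\cU}_\sigma \;=\; X \,\cap\, \bigcap_{i\in\sigma}\{x : w_i\cdot x > h_i\} \,\cap\, \bigcap_{j\notin\sigma}\{x : w_j\cdot x \le h_j\}.
\]
Let $U$ be the set obtained from this description by replacing each weak inequality $w_j\cdot x \le h_j$ ($j\notin\sigma$) with the strict inequality $w_j\cdot x < h_j$. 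Then $U$ is open, being a finite intersection of the open set $X$ with open half-spaces, and clearly $U\subseteq A^{\cU}_\sigma$. Hence it suffices to prove that $U\neq\varnothing$ whenever $A^{\cU}_\sigma\neq\varnothing$.

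So I would fix $p\in A^{\cU}_\sigma$ and record the constraints that are tight there: let $T \od \{\,j\notin\sigma : w_j\cdot p = h_j\,\}$. If $T = \varnothing$ then already $p\in U$, so assume $T\neq\varnothing$. Since $p\in X\cap H_T$ with $H_T = \bigcap_{j\in T}H_j$, the genericity condition in \refD{nondegeneracy} applies and yields $\dim H_T = d - |T|$. The crucial step is to read this off as a statement about normal vectors: $H_T$ is a \emph{nonempty} affine subspace cut out by the linear equations $\{w_j\cdot x = h_j\}_{j\in T}$, so $\dim H_T = d - \operatorname{rank}\{w_j\}_{j\in T}$, and therefore $\dim H_T = d - |T|$ forces the vectors $\{w_j\}_{j\in T}$ to be linearly independent. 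Equivalently, the linear map $\R^d \to \R^{T}$, $x\mapsto (w_j\cdot x)_{j\in T}$, is onto.

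Surjectivity lets me choose $v\in\R^d$ with $w_j\cdot v = -1$ for all $j\in T$, and I would then move off the tight walls along $v$: for $t>0$ put $p_t = p + tv$. For $j\in T$ this gives $w_j\cdot p_t = h_j - t < h_j$. The remaining inequalities all hold \emph{strictly} at $p$ --- namely $w_i\cdot p > h_i$ for $i\in\sigma$ and $w_j\cdot p < h_j$ for $j\notin\sigma\cup T$ --- and $X$ is open, so all of them, together with membership in $X$, are preserved for $t>0$ sufficiently small. Thus $p_t\in U$ for small $t>0$, so $U$ is a nonempty open subset of $A^{\cU}_\sigma$, and we are done.

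I do not expect a genuine obstacle here; the argument is short. The one point that really uses the hypothesis --- and that must be identified to make the proof work --- is that genericity of the intersections is precisely what guarantees the tight normals $\{w_j\}_{j\in T}$ are linearly independent, and hence that an escape direction $v$ with $w_j\cdot v<0$ for all $j\in T$ exists. Without this, these normals could satisfy a nontrivial nonnegative linear relation, no such $v$ need exist, and the statement genuinely fails (a lower-dimensional ``sliver'' atom can occur).
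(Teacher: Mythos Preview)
Your argument is correct and is essentially the same as the paper's: you identify the set $T$ of tight constraints at a point of the atom (the paper calls it $\tau$), use genericity to conclude the corresponding normals $\{w_j\}_{j\in T}$ are linearly independent, pick a direction $v$ with $w_j\cdot v<0$ for all $j\in T$, and perturb into the open region. The only cosmetic difference is that you choose $v$ explicitly via surjectivity with $w_j\cdot v=-1$, whereas the paper just asserts existence of such a $v$.
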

	\begin{proof}
		Let $A_\s$ be a nonempty atom of the {\nd} arrangement $(\cH,X)$ and consider a point $x \in A_\s$.
		Let $\t = \{j \mid x \in H_j\}$ index the set of hyperplanes on which $x$ lies.
		Then $x$ has an open neighborhood $V$ inside $X \cap (\bigcap_{i\in\s}H_i^+) \cap (\bigcap_{j\not\in\s\cup\t}H_j^-)$.
		By \subndy, the set $\{w_i \mid i\in\t\}$ is linearly independent.
		Therefore, there exists some $v \in \R^d$ such that $w_i \cdot v < 0$ for all $i \in \t$.
		For sufficiently small $\varepsilon > 0$, $y = x + \varepsilon v \in V$; therefore for any $i\in\t,$
			\[ w_i\cdot y - h_i = w_i\cdot(x + \varepsilon v) - h_i = w_i \cdot \varepsilon v < 0, \]
		and thus $y \in X \cap (\bigcap_{i\in\s}H_i^+) \cap (\bigcap_{j\not\in\s}H_j^-)$, which is the interior of $A_\s$.
	\end{proof}
	
\begin{figure}[t]
	\begin{tabular}{c @{\hspace*{1cm}} c}
		(a) \raisebox{-0.5\height}{\includegraphics[height=4cm]{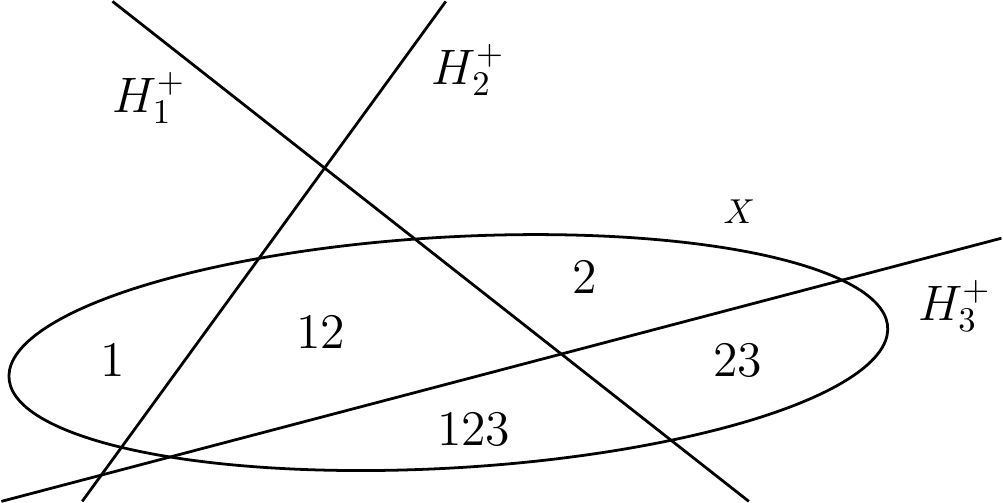}} &
		(b) \raisebox{-0.5\height}{\includegraphics[height=4cm,page=1]{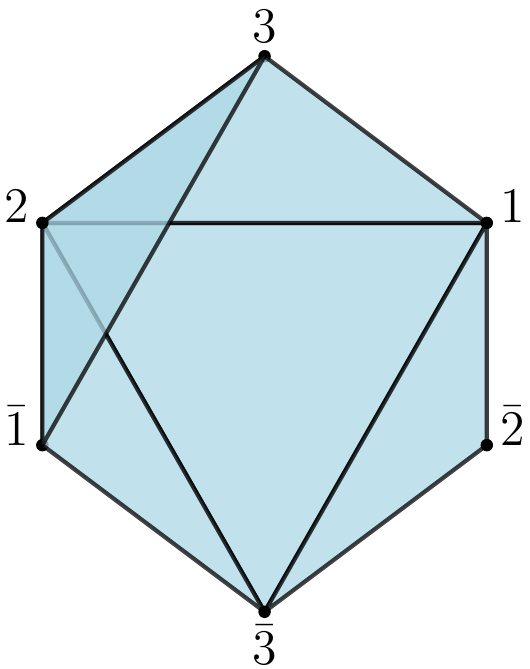}}
	\end{tabular}
	\caption{(a) \Nd\ arrangement $(\cH,X)$ with atoms labeled by their corresponding codewords. (b) The polar complex $\bd(\code(\cH,X))$, defined in \refS{polarcomplex}}
	\label{F:hyperplanecodeexample}
\end{figure}

\begin{ex}\label{Ex:code1}
	The code $\cC_1  = \{1, 12, 123, 2, 23\}$ is a \nd\ hyperplane code; a realization is illustrated in \refF{hyperplanecodeexample}(a). To avoid notational clutter, we adopt the convention of writing sets without brackets or commas, so the set $\{1,2\}$ is written $12$.
\end{ex}

\subsection{Bitflips and \nd~hyperplane codes.}  The abelian group  $(\mathbb Z _2 )^n$ acts on  $2^{[n]}$ by ``flipping bits'' of  codewords.  
Each generator $\e_i \in (\mathbb Z _2 )^n$ acts by flipping the
$i$-th bit, i.e. 
	\[ \e_i\cdot \sigma \od \begin{cases}
			\s \cup i &\text{if $  i \notin \s$} \\ 
			\sigma \setminus i &\text{if $  i \in \s$}. 
	\end{cases} \]
This  action extends to the action of $(\ZZ_2)^n$  on  codes, with 
$g \cdot\cC = \{ g\cdot \s \mid \s \in \C \}$.
The group $(\ZZ_2)^n$ also acts on oriented hyperplane arrangements. Here each generator $\e_i$ acts by reversing the orientation of the $i$-th hyperplane:
	\[ \e_i\cdot H_j ^+ \od \begin{cases}
							H_j ^+  &\text{if $  i \neq j $} \\ 
							H_j^-  &\text{if $  i=j$}. 
	\end{cases} \]
One might hope that applying bitflips commutes with taking the code of 
a hyperplane arrangement, but this is not true for arbitrary  hyperplane codes.

\begin{ex}\label{Ex:code2}
	Consider $ H_1^+,H_2^+,H_3^+\subseteq \mathbb R^2$, with $H_1^+= \{x+y>0\}$, $H_2^+= \{x - y>0\}$, and $H_3^+ = \{x>0\}$, illustrated in \refF{bitflipfailure}(a).
	By inspection, $\cC_2 = \code(\cH,\R^2)$ has  codewords $\{\varnothing, 1, 13, 123, 23, 2\}$. Meanwhile,
	\[ \code(\e_3\cdot \cH,\R^2) = \{3, 13,1,12,2,23,\varnothing\} = \e_3\cdot\code(\cH,\R^2)\cup \{ \varnothing\}. \] 
	The extra codeword appears because after flipping hyperplane $H_3$, 
	the origin no longer belongs to  the same atom as  the points to its left, 
	and thus produces	a new codeword, see \refF{bitflipfailure}(b).
\end{ex}

\begin{figure}[t]
		\begin{tabular}{cc @{\hspace*{1cm}} cc}
			(a) & \raisebox{-0.5\height}{\includegraphics[height=3cm,page=1]{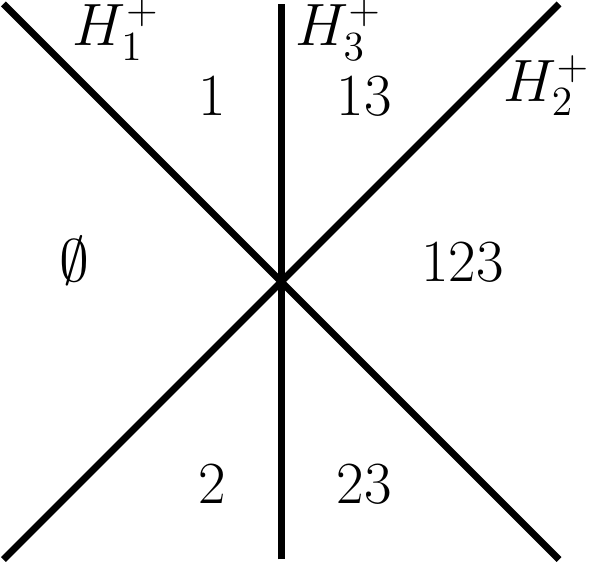}}
			&
			(b) & \raisebox{-0.5\height}{\includegraphics[height=3cm,page=2]{bitflipcounterexample.pdf}}\\
			& $(\cH,X)$ & & $(\e_3\cdot\cH,X)$\\
			\\
			(c) &  \raisebox{-0.5\height}{\includegraphics[height=3cm,page=2]{polarcomplex_examples.pdf}}
			& (d) & \raisebox{-0.5\height}{\includegraphics[height=3cm,page=3]{bitflipcounterexample.pdf}}
		\end{tabular}
		\caption{(a),(b)~The action of $(\ZZ_2)^n$ does not necessarily commute with taking the code of a non-\nd\ hyperplane arrangement. (c)~The polar complex $\bd(\code(\cH,X))$ for the arrangement in panel~(a) is an octahedron missing two opposite faces. (d)~A \nd\ realization of $\code(\mathbf{e}_3\cdot\cH,X)$, obtained from panel~(b) by translating $H_3$ to the left.}
		\label{F:bitflipfailure}
	\end{figure}

\noindent Nevertheless, the group action \emph{does} commute with taking the code of a \nd\ hyperplane arrangement.

	\begin{prop}\label{P:bitflipsinvariance}
		If $(\cH,X)$ is a \nd\ arrangement, then for every $g \in (\ZZ_2)^n$, $(g\cdot\cH,X)$ is also a \nd\ arrangement and
			\begin{align}
				\code(g\cdot\cH,X) = g\cdot\code(\cH,X). \label{eq:bitflipinvariance}
			\end{align}
	\end{prop}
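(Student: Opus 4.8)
The plan is to reduce immediately to the case of a single generator $\e_i\in(\ZZ_2)^n$. Since $(\ZZ_2)^n$ acts on arrangements and on codes, it suffices to prove $(\e_i\cdot\cH,X)$ is \nd\ and $\code(\e_i\cdot\cH,X)=\e_i\cdot\code(\cH,X)$ for each generator, and then compose along a word $g=\e_{i_1}\cdots\e_{i_k}$ — provided each intermediate arrangement stays \nd. That proviso is free: the \ndy\ condition in \refD{nondegeneracy} refers only to the underlying hyperplanes $H_k$ and the set $X$, and reversing the orientation of $H_i$ changes neither $H_i$ nor any intersection $H_\s$. Hence $(g\cdot\cH,X)$ is \nd\ for every $g$, and only the code identity for a single bitflip remains.

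For that, I would first extract from the proof of \refL{nondegenerateatoms} the following characterization of codewords of a \nd\ arrangement: writing
\[
O_\s^{\cH}\od X\cap\bigcap_{i\in\s}H_i^+\cap\bigcap_{j\notin\s}H_j^-,
\]
one has $\s\in\code(\cH,X)$ if and only if $O_\s^{\cH}\neq\varnothing$. Indeed $O_\s^{\cH}\subseteq A_\s^{\cU}$ trivially, and conversely the perturbation argument in that proof sends any point of a nonempty $A_\s^{\cU}$ into $O_\s^{\cH}$. This is exactly the step that uses \ndy, and it is where \refEx{code2} fails: in a degenerate arrangement a closed atom can be nonempty yet have empty interior. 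Next comes the one genuine computation. Writing $\cH'=\e_i\cdot\cH$, we have $(H_k')^{\pm}=H_k^{\pm}$ for $k\neq i$ while $(H_i')^{+}=H_i^{-}$ and $(H_i')^{-}=H_i^{+}$. Expanding $O_\s^{\cH'}$ and splitting into the cases $i\in\s$ and $i\notin\s$ shows directly that $O_\s^{\cH'}=O_{\e_i\cdot\s}^{\cH}$: in the first case $\e_i\cdot\s=\s\setminus i$ and the factor $H_i^{-}$ simply migrates from the ``$+$'' block of $O_\s^{\cH'}$ to the ``$-$'' block of $O_{\s\setminus i}^{\cH}$, and the second case is symmetric with $\e_i\cdot\s=\s\cup i$.

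Combining these, and using that $\e_i$ is an involution,
\[
\s\in\code(\e_i\cdot\cH,X)\iff O_\s^{\cH'}\neq\varnothing\iff O_{\e_i\cdot\s}^{\cH}\neq\varnothing\iff\e_i\cdot\s\in\code(\cH,X)\iff\s\in\e_i\cdot\code(\cH,X),
\]
which is \refEq{bitflipinvariance} for $g=\e_i$; composing over the generators (each intermediate arrangement being \nd) gives the general case. I do not expect a real obstacle here beyond bookkeeping — the only point requiring care is to run the argument through the open atoms $O_\s^{\cH}$ via \refL{nondegenerateatoms}, rather than through the closed atoms $A_\s^{\cU}$ directly, since it is precisely the merging of a closed atom with the closure of a neighbor that makes the naive statement fail for general hyperplane codes.
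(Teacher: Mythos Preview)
Your proof is correct and follows essentially the same approach as the paper: both arguments observe that \ndy\ depends only on the underlying unoriented hyperplanes, and both invoke \refL{nondegenerateatoms} to pass from the atoms $A_\s^{\cU}$ to their open interiors $O_\s^{\cH}$, which are merely relabeled under a bitflip. The paper compresses your explicit computation $O_\s^{\e_i\cdot\cH}=O_{\e_i\cdot\s}^{\cH}$ and the reduction to generators into the single sentence ``this interior is not changed by reorientation of the hyperplanes \dots\ only their labels change,'' but the content is identical.
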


	\begin{proof}
		Since the action of $(\ZZ_2)^n$ does not change the hyperplanes $H_i$ (only their orientation) nor the set $X$, the  \ndy\ is preserved. 
		By \refL{nondegenerateatoms}, each atom of $(\cH,X)$ has a nonempty interior; this interior is not changed by reorientation of the hyperplanes. 
		Thus, atoms are neither created nor destroyed by reorienting hyperplanes in a \nd\ arrangement; only their labels change,  and  $\code(g\cdot\cH,X) = g\cdot\code(\cH,X)$.
	\end{proof}

\subsection{The polar complex}\label{S:polarcomplex}
The invariance \eqref{eq:bitflipinvariance} of the class of {\nd}~hyperplane codes under the $(\ZZ_2)^n$ action 
 makes it natural to consider a simplicial complex whose structure is preserved by bitflips. The simplicial complex of the code is insufficient for this purpose: for any nontrivial code $\cC\subseteq2^{[n]}$ with a nonempty codeword, the simplicial complexes of the codes in the $(\ZZ_2)^n$-orbit of $\cC$ will include the full simplex on $n$ vertices, regardless of the structure of $\D(\cC)$. 
 
 We denote by $[n] \od \{1,\dots,n\}$ and $\br{[n]} \od \{\br{1},\dots,\br{n}\}$ two separate copies of the vertex set.  
Given a code $\cC \subseteq 2^{[n]}$,  define the \emph{polar complex},   $\Gamma(\cC)$, 	as a pure $(n-1)$-dimensional simplicial complex on vertex set  $[n] \sqcup \br{[n]}$ with facets in bijection with the codewords of $\cC$. 
\begin{defn}
	Let $\cC \subseteq 2^{[n]}$ be a combinatorial code.
	For every codeword $\sigma\in \C$ denote  
	\begin{align*}
		\Sigma(\sigma) &\od \{i \mid i\in\s\} \sqcup \{\bar{i} \mid i\not\in\s\} = \s \sqcup \br{[n]\sm\s}\\
		\intertext{and define the \emph{polar complex of $\cC$} as}
		\bd(\cC) &\od \Delta ( \{   \Sigma(\sigma)\mid \s \in  \cC \}).
	\end{align*}
\end{defn}

	Continuing \refEx{code1}, the polar complex of $\cC_1 = \{1,12,123,2,23\}$ is given by
		$ \bd(\cC_1) = \D(\{1\bar2\bar3, 12\bar3, 123, \bar12\bar3, \bar123\})$. 
It is depicted in \refF{hyperplanecodeexample}(b) as a subcomplex of the octahedron.
The polar complex $\bd(2^{[3]})$ consists of the eight boundary faces of the octahedron; generally, the polar complex of the code consisting of all $2^n$ codewords on $n$ vertices is the boundary of the $n$-dimensional {cross-polytope}.

The polar complex of code $\cC_2$ in Example \ref{Ex:code2} is depicted in \refF{bitflipfailure}(c). Note that it follows from \refPp{nondegenerateshelling} that $\cC_2$ is \emph{not} a \nd\ hyperplane code, due to the structure of $\bd(\cC_2)$. In contrast, while \refF{bitflipfailure}(b) depicts a non-\nd\ arrangement, the code of that arrangement has a \nd\ realization depicted in \refF{bitflipfailure}(d).
 
The action of the bitflips $(\ZZ_2)^n$ on the boolean lattice induces an action on the facets of the polar complex, so that 
$g\cdot  \Sigma( \sigma) = \Sigma(g\cdot \sigma) $. In particular,   $\Gamma(g\cdot \C) = g\cdot \Gamma( \C)$, and the complex $\Gamma(g\cdot \C) $ is isomorphic to $\Gamma(  \C) $.
The Stanley-Reisner ideal of $\bd(\cC)$ is closely related to the neural ideal, defined in \cite{neuralring};  this will
be elaborated in Section~\ref{S:algebra}.
Moreover, in the case of \nd\ hyperplane codes, $\bd(\cC)$ has a simple description as the nerve of a cover:
\begin{lem}\label{L:polarnerve}
	If $\cC = \code(\cH,X)$ is the code of a \nd\ hyperplane arrangement, then
		\begin{align}
			\bd(\cC) = \nerve(\{H_i^+\cap X, H_i^-\cap X\}_{i\in[n]}) \label{eq:polarnerve}
		\end{align}
\end{lem}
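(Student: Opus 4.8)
The plan is to prove that the two simplicial complexes have exactly the same faces. Write $\mathcal V=\{V_\alpha\}_{\alpha\in[n]\sqcup\br{[n]}}$ for the cover of $X$ indexed so that $V_i=H_i^+\cap X$ and $V_{\bar i}=H_i^-\cap X$; the claim \eqref{eq:polarnerve} is then $\bd(\cC)=\nerve(\mathcal V)$. Since any nerve is closed under passing to subsets and $\bd(\cC)=\D(\{\Sigma(\sigma)\mid\sigma\in\cC\})$ is generated by its facets $\Sigma(\sigma)$, for the inclusion $\bd(\cC)\subseteq\nerve(\mathcal V)$ it is enough to show that $\Sigma(\sigma)$ is a face of $\nerve(\mathcal V)$ for each $\sigma\in\cC$.

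For this inclusion I would invoke \refL{nondegenerateatoms}. If $\sigma\in\cC$, the atom $A^{\cU}_\sigma$ of the cover $\cU=\{H_i^+\cap X\}$ is nonempty, so by that lemma its interior is nonempty; as its proof records, this interior equals $X\cap\bigcap_{i\in\sigma}H_i^+\cap\bigcap_{j\notin\sigma}H_j^-$. Any point $x$ in this open set lies in $V_i$ for every $i\in\sigma$ and in $V_{\bar j}$ for every $j\notin\sigma$, so $\bigcap_{\alpha\in\Sigma(\sigma)}V_\alpha\neq\varnothing$, i.e.\ $\Sigma(\sigma)\in\nerve(\mathcal V)$.

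For the reverse inclusion, let $S\in\nerve(\mathcal V)$, so $\bigcap_{\alpha\in S}V_\alpha\neq\varnothing$. First, $S$ cannot contain both $i$ and $\bar i$, since $V_i\cap V_{\bar i}\subseteq H_i^+\cap H_i^-=\varnothing$; hence $S=\rho\sqcup\br\mu$ for disjoint $\rho,\mu\subseteq[n]$. Choose $x\in X\cap\bigcap_{i\in\rho}H_i^+\cap\bigcap_{j\in\mu}H_j^-$ and set $\sigma=\{i\in[n]\mid x\in H_i^+\}$. Then $x\in A^{\cU}_\sigma$, so $\sigma\in\code(\cU,X)=\cC$; moreover $\rho\subseteq\sigma$ and $\mu\cap\sigma=\varnothing$, so $S=\rho\sqcup\br\mu\subseteq\sigma\sqcup\br{[n]\sm\sigma}=\Sigma(\sigma)$, and therefore $S\in\bd(\cC)$. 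This direction uses only that opposite open half-spaces are disjoint, not nondegeneracy.

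The one subtlety, and the reason nondegeneracy is needed, lies in the forward inclusion: the atom $A^{\cU}_\sigma$ is cut out by the conditions $x\in H_i^+$ for $i\in\sigma$ and $x\notin H_j^+$ for $j\notin\sigma$, and the latter allows $x$ to sit on $H_j$ rather than strictly inside $H_j^-$. Passing to the interior of the atom --- guaranteed nonempty only by \refL{nondegenerateatoms} --- is exactly what upgrades $x\notin H_j^+$ to $x\in H_j^-$. For a degenerate arrangement an atom can be trapped inside a hyperplane, and then the corresponding facet of $\bd(\cC)$ need not be a face of the nerve, so the identity can genuinely fail. Everything else is routine bookkeeping with the correspondence $\sigma\leftrightarrow\Sigma(\sigma)$.
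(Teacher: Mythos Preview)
Your argument is correct and follows essentially the same route as the paper: the forward inclusion uses \refL{nondegenerateatoms} to pass to the open interior of each atom, and the reverse inclusion picks a witness point and reads off the codeword it lies in. Your treatment of the reverse inclusion is in fact slightly more careful than the paper's (you handle an arbitrary face of the nerve rather than only maximal ones), and your closing remark on why nondegeneracy is genuinely needed is a nice addition.
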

\begin{proof}
	Consider a maximal face $\S(\s) \in \bd(\cC)$. By \refL{nondegenerateatoms}, $A_\s$ has nonempty interior given by $X \cap \bigcap_{i\in\s} H_i^+ \cap \bigcap_{j\not\in\s} H_j^-$, hence $\S(\s) \in \nerve(\{H_i^+\cap X, H_i^-\cap X\}_{i\in[n]}).$ Likewise, if $F$ is maximal in the complex $\nerve(\{H_i^+\cap X, H_i^-\cap X\}_{i\in[n]})$, the subset consisting of unbarred vertices in $F$ is a codeword as the corresponding atom is nonempty.
\end{proof}


\section{Obstructions for hyperplane codes}\label{S:obstructions}

Here we describe several major \emph{hyperplane obstructions},
the  properties of a combinatorial code that are  necessary for it to be realized by a {\nd} hyperplane arrangement.

\subsection{Local obstructions and bitflips}\label{S:bitflips}

A larger class 
of codes that arises in the neuroscience context are the open convex codes \cite{neuralring,cruzetal,GI14, Carina2015}.   
 A code $\C\subset 2^{[n]}$ is called {\it open convex} 
if there exists  a collection $\cU$ of $n$ open and 
convex sets $U_i\subseteq X\subseteq \R^d$, such that 
 $\C=\code\left(\cU,X \right)$.
Not every combinatorial code is convex. One obstruction to being  
an open convex code stems from an analogue of the nerve lemma 
\cite{Bjorner:1996}, recently proved in \cite{shiu2018}; see also \cite{jeffsCUR}.

	Recall the \emph{link} of a face $\s$ in a simplicial complex $\D$ is the subcomplex defined by
		\[ \link_\s\D \od \{\nu\in\D \mid \s\cap\nu = \varnothing,\, \s\cup\nu \in \D\}. \]
	When $\s \not \in \code(\cU,X)$, yet $\s \in  \nerve(\cU)$, the subset $U_\sigma\od \bigcap_{i\in\s} U_i$ is covered by the collection  of  sets  $\left\{U_j \cap U_\sigma\right \}_{ j\not\in\s}$. It is easy to see that in this situation,  
	\begin{equation*}
	 \link_\s \nerve(\cU) =\nerve(\{U_j \cap U_\sigma \}_{ j\not\in\s}),
	 \end{equation*} 
	    see e.g. \cite{GI14,cruzetal,Carina2015}.

	\begin{defn}
		A pair of faces $(\s,\t)$ of a simplicial complex $\D$ is a \emph{free pair} if 
		$\t$ is a facet of $\D$,
		$\s\subsetneq\t$, and 
		$\s\not\subseteq\t'$ for any other  facet $\t'\neq\t$. 
		The simplicial complex
			\[ \del_\s \D \od \{ \nu\in\D \mid \nu \not\supseteq \s \}\]
		is called the \emph{collapse of $\D$ along $\s$,} and is denoted as $\D \searrow_{\s} \del_\s \D$. If a finite sequence of  collapses of $\D$ results in a new complex $\D'$, we write $\D \searrow \D'$. If $\D \searrow \{\}$, we say $\D$ is \emph{collapsible}.
	\end{defn}

	Note that  the \emph{irrelevant simplicial complex}  $\{\varnothing\}$,
	consisting of a single empty face,
	is \emph{not} collapsible, as there is no other face properly contained in $\varnothing$. However, the \emph{void complex} $\{\}$ with no faces is collapsible.
 
\begin{lem}[{\cite[Lemma 5.9]{shiu2018}},  \cite{jeffsCUR}]
\label{L:nervelemma} 
	For any collection $\mathcal U  = \{U_1,\dots,U_n\} $ of  
	open convex sets $U_i\subset \mathbb R^d$
	whose union $\bigcup_{i\in[n]}U_i$ is also convex,
	its nerve,  $\nerve(\mathcal U)$, 
	is collapsible. 
\end{lem}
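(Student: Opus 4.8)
The plan is to induct on the number of sets $n$ and show that the nerve $\nerve(\mathcal U)$ admits an explicit collapsing sequence. The base case $n=1$ is trivial: if $U_1 = \varnothing$ the nerve is the void complex, which is collapsible, and otherwise the nerve is a single point, which collapses to $\{\}$. For the inductive step, I would exploit the structure of convexity exactly as in the link identity recalled just above the statement: for each codeword-like face, the link of a vertex in the nerve is itself the nerve of a cover of a convex set (an intersection with $U_n$, say) by convex sets whose union is convex, because the relevant union is $U_n \cap \bigcup_{i} U_i$, an intersection of two convex sets.

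The key mechanism is a \emph{discrete Morse / collapsing} argument driven by a generic linear functional. Choose a linear functional $\ell$ on $\R^d$ in general position with respect to all the (finitely many) vertices of all the sets involved — more carefully, one picks a direction and sweeps a hyperplane. At the combinatorial level, the real tool is: order the vertices $1,\dots,n$ so that, roughly, later vertices correspond to sets that are ``peeled off'' last. The cleanest route is to pick the vertex $n$ and consider the decomposition of faces of $\nerve(\mathcal U)$ into those containing $n$ and those not containing $n$: the faces containing $n$ form (a cone over) $\link_n \nerve(\mathcal U) = \nerve(\{U_j \cap U_n\}_{j \neq n})$, and the union $\bigcup_{j \neq n}(U_j \cap U_n) = U_n \cap \bigcup_{j\neq n} U_j$ need not be convex in general, so a naive peeling fails — and this is precisely where the hypothesis that the \emph{full} union is convex must be used, not just pairwise data. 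The fix, following \cite{shiu2018}, is to choose the set to remove adaptively: take a supporting hyperplane of the convex union $\bigcup_i U_i$ and let $U_k$ be a set meeting a neighborhood of the contact region; then $U_k \cap \bigcup_{i \neq k} U_i$ \emph{is} convex (it equals $\bigcup_{i\neq k}(U_k \cap U_i)$, and one shows this union is convex using the supporting-hyperplane geometry — any two of its points are joined by a segment staying inside $\bigcup_i U_i$ and hence, being within $U_k$ which is convex, inside the union in question). This gives a \emph{free collapse}: remove the vertex $k$ via a sequence of elementary collapses, each indexed by a face $\s$ in $\link_k$, carried out in decreasing order of dimension; after removing $k$ we are left with $\nerve(\{U_i\}_{i \neq k})$ on $n-1$ sets, to which induction applies. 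Combined with the inductive collapsibility of the link (it is the nerve of $n-1$ convex sets with convex union), the whole complex collapses.

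Concretely, the steps I would carry out, in order, are: (1) reduce to the case where no $U_i$ is empty and the union $V = \bigcup_i U_i$ is nonempty; (2) prove the geometric lemma that there exists an index $k$ such that $U_k \cap \bigcup_{i\neq k} U_i$ is convex — this is the heart of the matter and where a supporting-hyperplane or extreme-point argument on $V$ is needed; (3) prove the combinatorial lemma that if $U_k \cap \bigcup_{i \neq k} U_i$ is convex, then $\link_k \nerve(\mathcal U)$ is collapsible by induction, and moreover $\nerve(\mathcal U) \searrow \del_k \nerve(\mathcal U) = \nerve(\{U_i\}_{i\neq k})$, by collapsing the free pairs $(\s, \s \cup k)$ as $\s$ ranges over faces of $\link_k$ in order of decreasing cardinality (checking at each stage that the pair is genuinely free because everything containing $k$ that is not yet removed still has $\s$ in a unique facet); (4) apply the inductive hypothesis to the $(n-1)$-set subarrangement $\{U_i\}_{i\neq k}$, whose union is still $V$, hence convex, so its nerve is collapsible; (5) concatenate the two collapsing sequences.

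The main obstacle I expect is step (2), the existence of the ``removable'' index $k$ with $U_k \cap \bigcup_{i\neq k}U_i$ convex: pairwise intersections of convex sets are convex, but an arbitrary union of them need not be, so one genuinely needs to use that the \emph{ambient} union $V$ is convex to pick $k$ wisely. The natural argument is to take a point $p$ on the boundary of $V$ (or an extreme-ray / supporting-hyperplane direction) and let $k$ index a set $U_k$ that ``contains'' that boundary behavior; then for $x,y \in U_k \cap \bigcup_{i\neq k} U_i$ the segment $[x,y]$ lies in $V$ by convexity of $V$, lies in $U_k$ by convexity of $U_k$, and one argues each of its points lies in some $U_i$ with $i \neq k$ — the subtlety being to rule out interior points of $[x,y]$ that belong \emph{only} to $U_k$, which is where the choice of $k$ relative to the boundary of $V$ is used. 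A secondary, more bookkeeping-heavy obstacle is verifying in step (3) that each elementary collapse is valid (the free-pair condition), which requires care about the order in which faces of the link are processed and about the interaction between ``faces containing $k$'' and the rest of the complex; but this is standard once the ordering (decreasing dimension, i.e. collapse maximal free faces first) is fixed.
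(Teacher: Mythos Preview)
The paper does not prove this lemma; it is quoted from \cite{shiu2018} and \cite{jeffsCUR} and used as a black box, so there is no in-paper argument to compare your proposal against. Your sketch must therefore stand on its own, and it has two concrete gaps.

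First, step~(4) asserts that after removing $U_k$ the union $\bigcup_{i\neq k}U_i$ ``is still $V$, hence convex.'' The equality is false in general (already for $U_1=(0,2)$, $U_2=(1,3)$ in $\R$), and---more to the point---convexity of $\bigcup_{i\neq k}U_i$ need not hold for the $k$ your heuristic selects. Take $U_1=(0,1)$, $U_2=(2,3)$, $U_3=(-1,4)$ in $\R$; the union is $V=(-1,4)$, and the only set meeting a neighborhood of either supporting hyperplane of $V$ is $U_3$, so your rule picks $k=3$. But $\bigcup_{i\neq 3}U_i=(0,1)\cup(2,3)$ is disconnected, so the inductive hypothesis does not apply to the deletion.

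Second, the same example breaks step~(2): with $k=3$ one gets $U_3\cap(U_1\cup U_2)=(0,1)\cup(2,3)$, which is not convex. Your parenthetical justification (``the segment $[x,y]$ stays in $V$ and in $U_k$, hence in the union in question'') fails exactly at the subtlety you flag but do not resolve: the midpoint $1.5$ lies in $U_3$ and in $V$ but in neither $U_1$ nor $U_2$. In this example $k=1$ or $k=2$ happens to satisfy both the link-convexity and the deletion-convexity conditions, but your supporting-hyperplane rule does not find them, and you give no argument that some $k$ with both properties always exists. Producing such a $k$ and proving it works is essentially the whole content of the lemma, so as written the proposal assumes what it needs to prove.
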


	\begin{cor}[{\cite[Theorem 5.10]{shiu2018}}]\label{C:localobstruction}
		Let $\cC = \code(\cU,X)$ with each $U_i \subseteq X \subseteq \R^d$ open and convex. Then $\link_\s\D(\cC)$ is
		collapsible
		for every nonempty $\s\in\D(\cC)\sm\cC$.
	\end{cor}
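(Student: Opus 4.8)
The plan is to deduce the statement directly from the collapsibility nerve lemma, Lemma~\ref{L:nervelemma}, using the identity $\link_\s\nerve(\cU) = \nerve(\{U_j \cap U_\s\}_{j\notin\s})$ recalled just above (which holds precisely in the situation $\s\in\nerve(\cU)\sm\code(\cU,X)$ we are in). First I would extract the two facts encoded in the hypotheses. Since $\s\in\D(\cC) = \nerve(\cU)$, the set $U_\s \od \bigcap_{i\in\s}U_i$ is nonempty; and since $\s\neq\varnothing$, it is a finite nonempty intersection of open convex sets, hence open and convex. Since $\s\notin\cC = \code(\cU,X)$, the atom $A^{\cU}_\s = U_\s\sm\bigcup_{j\notin\s}U_j$ is empty, i.e.\ $U_\s\subseteq\bigcup_{j\notin\s}U_j$; intersecting both sides with $U_\s$ gives $U_\s = \bigcup_{j\notin\s}(U_j\cap U_\s)$.

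Next I would apply the nerve lemma to the cover $\{U_j\cap U_\s\}_{j\notin\s}$ of $U_\s$. Each member is a finite intersection of open convex sets, hence open and convex; the union of the members is $U_\s$, which is convex; and the index set $[n]\sm\s$ is nonempty, since $\s=[n]$ together with $\s\in\D(\cC)$ would force $\s\in\cC$, contradicting $\s\notin\cC$. Lemma~\ref{L:nervelemma} therefore yields that $\nerve(\{U_j\cap U_\s\}_{j\notin\s})$ is collapsible, and combining this with the link--nerve identity and $\D(\cC) = \nerve(\cU)$ gives that $\link_\s\D(\cC)$ is collapsible, as claimed.

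The mathematical content is entirely carried by Lemma~\ref{L:nervelemma}, which I am treating as a black box; everything else is bookkeeping. The one subtlety worth flagging is the role of the hypothesis $\s\neq\varnothing$: it is exactly what makes $U_\s$ a \emph{nonempty} intersection of convex sets, hence convex, supplying the convexity-of-the-union hypothesis needed to invoke Lemma~\ref{L:nervelemma} for $\{U_j\cap U_\s\}_{j\notin\s}$. For $\s=\varnothing$ the claim would instead assert collapsibility of $\nerve(\cU)$ itself, which requires $\bigcup_i U_i$ to be convex and fails in general; and if one (impossibly) had $\s=[n]$, the link would be the irrelevant complex $\{\varnothing\}$, which is not collapsible. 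So the stated hypotheses on $\s$ are doing genuine work, and the proof is essentially a check that the hypotheses of the nerve lemma are met for the transplanted cover $\{U_j\cap U_\s\}_{j\notin\s}$.
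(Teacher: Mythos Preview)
Your proof is correct and follows exactly the approach the paper sets up: the paper states Lemma~\ref{L:nervelemma} and the link--nerve identity immediately before the corollary, and then cites the result without writing out a proof, leaving the reader to assemble precisely the argument you give. One small remark: the link--nerve identity $\link_\s\nerve(\cU) = \nerve(\{U_j\cap U_\s\}_{j\notin\s})$ actually holds for any $\s\in\nerve(\cU)$, not only when $\s\notin\code(\cU,X)$; the latter hypothesis is what ensures the union of the $U_j\cap U_\s$ equals $U_\s$, which is where it enters your argument.
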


	The last  observation provides a ``local obstruction'' for a code $\C$ being an open convex code: if a non-empty $\s \in \Delta(\cC) \setminus \cC$ has a non-collapsible link, then $\cC$ is nonconvex. It had been  previously known (see, for example, \cite[Theorem 3]{GI14}) that $\link_\s\D(\cC)$ is contractible under the hypotheses of \refC{localobstruction}. Since collapsibility implies contractibility but not vice versa, we refer to a face $\s\in\D(\cC)\sm\cC$ with non-collapsible link as a \emph{strong} local obstruction; if $\link_\s\D(\cC)$ is non-contractible, we refer to $\s$ as a \emph{weak} local obstruction.
	
	Half-spaces are convex, thus local obstructions to being a convex code  are also obstructions to being a 
	hyperplane code. Therefore \refP{bitflipsinvariance} implies a much stronger statement. Not only are local obstructions in $\cC$ forbidden, we must also exclude local obstructions in $g\cdot\cC$ for all bitflips $g \in (\ZZ_2)^n$, since $g\cdot\cC$ is also a stable hyperplane code. We make this precise below.

	\begin{defn}\label{D:bflo} Let $g \in (\ZZ_2)^n$ and $\t\subseteq[n]$ be a pair  such that  $\link_\t\Delta(g \cdot \cC)$ is not collapsible (respectively, contractible) and $  \t \notin   g \cdot  \cC$. Then $(g,\t)$ is called a \emph{strong} (resp.\ \emph{weak}) \emph{bitflip local obstruction.}
	\end{defn}

	\begin{bigthm}[Bitflip local property]\label{P:ndhcimpliesnobflo}
		Suppose $\cC$ is a \nd\ hyperplane code. Then $\cC$ has no strong bitflip local obstructions.
	\end{bigthm}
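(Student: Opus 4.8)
The plan is to derive a contradiction from the existence of a strong bitflip local obstruction by combining \refP{bitflipsinvariance} with \refC{localobstruction}. Suppose $\cC = \code(\cH, X)$ is a \nd\ hyperplane code and that $(g, \t)$ is a strong bitflip local obstruction, i.e.\ $\t \notin g\cdot\cC$ but $\link_\t \Delta(g\cdot\cC)$ is not collapsible. First I would invoke \refP{bitflipsinvariance}: since $(\cH, X)$ is a \nd\ arrangement, so is $(g\cdot\cH, X)$, and moreover $\code(g\cdot\cH, X) = g\cdot\code(\cH, X) = g\cdot\cC$. Thus $g\cdot\cC$ is itself realized by a \nd\ hyperplane arrangement, and in particular it is realized as $\code(\cU, X)$ where $\cU = \{H_i^{\pm}\cap X\}$ is a collection of open convex subsets of the open convex set $X \subseteq \R^d$ (the appropriate choice of sign in each slot being dictated by $g$).

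Next I would apply \refC{localobstruction} to the realization $(\cU, X)$ of $g\cdot\cC$: for every nonempty face $\s \in \Delta(g\cdot\cC)\sm(g\cdot\cC)$, the link $\link_\s \Delta(g\cdot\cC)$ is collapsible. Taking $\s = \t$, which by hypothesis lies in $\Delta(g\cdot\cC)\sm(g\cdot\cC)$ and is nonempty, we conclude $\link_\t\Delta(g\cdot\cC)$ is collapsible — directly contradicting the assumption that $(g,\t)$ is a strong bitflip local obstruction. Hence $\cC$ has no strong bitflip local obstructions.

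The only subtlety worth spelling out, and the place I expect to have to be slightly careful, is the claim that half-spaces intersected with $X$ give an arrangement of \emph{open convex} sets inside the open convex set $X$, so that \refC{localobstruction} genuinely applies: each $H_i^+$ is open and convex, each $H_i^-$ is open and convex, $X$ is open and convex by the definition of a \nd\ arrangement, and a finite intersection of open convex sets is open and convex, so each $H_i^{\pm}\cap X$ qualifies. One should also note that the empty-face case is excluded in \refC{localobstruction} precisely because $\t$ is required to be nonempty in \refD{bflo}, so there is no mismatch. I would also remark, for the benefit of the reader, that the weak analogue follows identically by replacing ``collapsible'' with ``contractible'' throughout and using the contractibility version of \refC{localobstruction} (the statement that $\link_\s\Delta(\cC)$ is contractible, known from \cite[Theorem 3]{GI14}); this gives the companion fact that a \nd\ hyperplane code has no weak bitflip local obstructions either.
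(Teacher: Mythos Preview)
Your proof is correct and follows exactly the same strategy as the paper's: use \refP{bitflipsinvariance} to see that $g\cdot\cC$ is again a \nd\ hyperplane code realized by open convex halfspaces, then invoke \refC{localobstruction} to rule out any strong local obstruction in $g\cdot\cC$. The only minor inaccuracy is your remark that \refD{bflo} requires $\t$ to be nonempty---it does not say so explicitly---but this is harmless, since the paper's own proof relies on the same implicit convention (and in any case $\t\notin\Delta(g\cdot\cC)$ forces the link to be the void complex, which is collapsible).
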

	\begin{proof} 
		Halfspaces are convex, thus $\cC$ has no strong local obstructions. By \refP{bitflipsinvariance}, $g\cdot \cC$ is a \nd\ hyperplane code for all $g\in(\ZZ_2)^n$. Hence, $g\cdot\cC$ has no strong local obstructions.
	\end{proof}
	
	The nomenclature of ``weak'' and ``strong'' local obstructions signifies that a code with no strong local obstructions has no weak local obstructions, but generally not vice-versa. In particular, a stable hyperplane code also has no weak bitflip local obstructions.

\begin{figure}[t]
	\includegraphics[height=3.5cm]{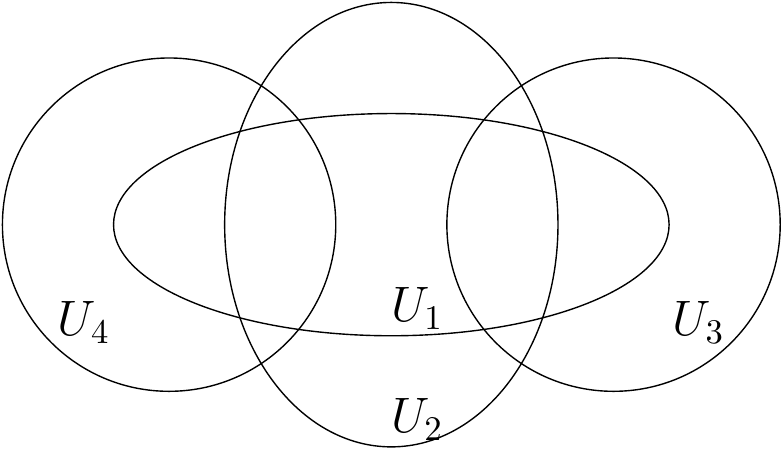}
	\caption{An open convex realization of $\cC_3$ with $X = \R^2$.}
	\label{F:convexnotndhc}\label{Fig:code3}
\end{figure}

\begin{ex}\label{Ex:code3}
	The code $\cC_3 = \{\varnothing,2,3,4,12,13,14, 23,24,123,124 \}$
	is realizable by open  convex sets in $\mathbb{R}^2$ (see \refF{convexnotndhc}), 
	and thus it cannot have local obstructions to convexity. Flipping bit 2 yields
	\[ \e_2 \cdot \cC_3 = \{2,\varnothing,23,24,1,123,124,3,4,13,14\}. \]
	The new simplicial complex $\Delta(\e_2\cdot\cC_3)$ has facets $123$ and $124$.
	The edge $12$ is \emph{not} in the code and $\link_{12}\D(\e_2\cdot \cC_3)$ is two vertices;
	therefore, $(\e_2,12)$ is a bitflip local obstruction and $\cC_3$ is not a {\nd} hyperplane code.
\end{ex}

	It is worth highlighting an essential feature of the polar complex that makes it a natural  tool for studying hyperplane codes, in light of the bitflip local property. For every $g\in(\ZZ_2)^n$, the simplicial complex $\D(g\cdot\cC)$ is isomorphic to an induced subcomplex of $\bd(\cC)$: Let $\s$ denote the support of $g$ and define
	\[ \bd(\cC)|_{([n]\sm\s) \sqcup \br{\s}} \od \{F \in \bd(\cC) \mid F \subseteq ([n]\sm\s) \sqcup \br{\s}\}. \]
	Then $\bd(\cC)|_{[n]\sm\s \sqcup \br{\s}} \cong \D(g\cdot\cC)$, with the isomorphism given by ``ignoring the bars,'' i.e.\ $i \mapsto i$ for $i \in[n]\sm\s$ and $\br{j}\mapsto j$ for $j \in\s$. Thus we can find bitflip local obstructions directly in the polar complex as follows.

\begin{prop}\label{P:bfloinpolarcomplex}
	Let $\cC\subseteq 2^{[n]}$ be a code, $g \in (\ZZ_2)^n$ with $\s$ its support, and let $\t \subseteq [n]$. Then $(g,\t)$ is a bitflip local obstruction for $\cC$ if and only if
		\[ g\cdot\t \sqcup \br{[n]\sm g\cdot\t} \not\in \bd(\cC)
			\qquad \text{and} \qquad
			\link_{g\cdot\t}\bd(\cC)|_{([n]\sm\s) \sqcup \br{\s}} \text{ is not collapsible}. \]
\end{prop}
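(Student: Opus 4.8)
The plan is to deduce the equivalence from the simplicial isomorphism $\phi\colon\bd(\cC)|_{([n]\sm\s)\sqcup\br\s}\xrightarrow{\ \sim\ }\D(g\cdot\cC)$ recorded just before the statement — ``ignoring the bars,'' so $i\mapsto i$ for $i\in[n]\sm\s$ and $\br j\mapsto j$ for $j\in\s$ — together with the observation that the facets of $\bd(\cC)$ are exactly the $n$-element sets $\S(\sigma)$ with $\sigma\in\cC$. By Definition~\ref{D:bflo}, $(g,\t)$ is a bitflip local obstruction precisely when (i) $\t\notin g\cdot\cC$ and (ii) $\link_\t\D(g\cdot\cC)$ is not collapsible, so I would translate each of these two conditions across $\phi$.

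For condition~(i): since every $\S(\sigma)$ has cardinality $n$, every $n$-element face of $\bd(\cC)$ is a facet, so for any $\rho\subseteq[n]$ one has $\S(\rho)\in\bd(\cC)$ iff $\rho\in\cC$. Applying this to $\rho=g\cdot\t$ (here $g$ acting on the Boolean lattice, i.e.\ $g\cdot\t=\t\mathbin{\triangle}\s$) and using that the $(\ZZ_2)^n$-action is an involution with $g\cdot\cC=\{g\cdot\sigma\mid\sigma\in\cC\}$, I obtain
\[ g\cdot\t\sqcup\br{[n]\sm g\cdot\t}\;=\;\S(g\cdot\t)\;\notin\;\bd(\cC)\quad\Longleftrightarrow\quad g\cdot\t\notin\cC\quad\Longleftrightarrow\quad \t\notin g\cdot\cC, \]
which is exactly condition~(i).

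For condition~(ii): I would check that the subset $\,(\t\sm\s)\sqcup\br{\t\cap\s}\,$ of the vertex set $([n]\sm\s)\sqcup\br\s$ is mapped to $\t$ under the vertex relabeling defining $\phi$ (indeed it is sent to $(\t\sm\s)\cup(\t\cap\s)=\t$), and that this subset is precisely $g\cdot\t$ for the face-relabeling action $i\leftrightarrow\br i$, $i\in\s$, of $g$ on the polar complex. Since a simplicial isomorphism carries links to links, $\link_{g\cdot\t}\bd(\cC)|_{([n]\sm\s)\sqcup\br\s}\cong\link_\t\D(g\cdot\cC)$ — both equal to the void complex when the relevant face is absent — and collapsibility is a combinatorial invariant, so one link is collapsible iff the other is. Conjoining the two translations yields the proposition.

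The argument is really just bookkeeping, and the one point I would be careful about is that ``$g\cdot\t$'' appears in two compatible but genuinely different guises: the Boolean-lattice action $\t\mapsto\t\mathbin{\triangle}\s$ in the first displayed condition, versus the vertex-relabeling action $\t\mapsto(\t\sm\s)\sqcup\br{\t\cap\s}$ on faces of $\bd(\cC)$ inside the link. I would make this explicit and record the reconciling identity $\S(g\cdot\t)=g\cdot\S(\t)$. All degenerate cases — e.g.\ $\t\notin\D(g\cdot\cC)$, where both links are void (hence collapsible), or $\t=\varnothing$, where $\link_\varnothing$ is the ambient complex — are handled uniformly by the isomorphism, so no case analysis is needed.
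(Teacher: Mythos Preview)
Your proof is correct and follows the same route as the paper: translate both conditions of Definition~\ref{D:bflo} through the isomorphism $\bd(\cC)|_{([n]\sm\s)\sqcup\br\s}\cong\D(g\cdot\cC)$, obtaining the facet criterion for (i) and the link isomorphism for (ii). You are simply more explicit than the paper---in particular, your observation that ``$g\cdot\t$'' carries two compatible meanings (the Boolean-lattice symmetric difference in the first condition versus the vertex-relabeling $(\t\sm\s)\sqcup\br{\t\cap\s}$ in the second) is a point the paper glosses over but which your write-up handles cleanly.
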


\begin{proof}
		Note that $g\cdot\t \sqcup \br{[n]\sm g\cdot\t} \not\in \bd(\cC)$ if and only if $\t\not\in g\cdot\cC$. The complex $\bd(\cC)|_{([n]\sm\s)\sqcup\br{\s}}$ is isomorphic to $\D(g\cdot\cC)$, and
		\[ \link_\t\D(g\cdot\cC) \cong \link_{g\cdot\t}(\bd(\cC)|_{([n]\sm\s)\sqcup\br{\s}}).\]
		Hence, the conditions of the proposition are equivalent to the conditions of \refD{bflo}.
	\end{proof}

	\bigskip 

\subsection{Spherical Link Obstructions}\label{S:spherelinkobstructions}

Here we introduce another  obstruction  that can be detected via the polar complex
of {\nd} hyperplane codes. We use the following notation to aid our discussion. For a face $F \in\bd(\cC)$, we write $F = F^+ \sqcup \br{F^-}$ to denote the restrictions of $F$  to $[n]$ and $\br{[n]}$. The \emph{support} of $F$ is $\underline{F} = F^+ \cup F^-$, the set of (barred or unbarred) vertices appearing in it. 

For \nd\ arrangements $(\cH,X)$, Lemmas~\ref{L:nondegenerateatoms} and \ref{L:polarnerve} allow us to translate between faces of $\bd(\code(\cH,X))$  and convex subsets of $X$ as follows: The face $F = F^+\sqcup\br{F^-} \in\bd(\cC)$ corresponds to the open convex set
		\[ R_F = X \cap \Bigl(\bigcap_{i\in F^+} H_i^+\Bigr) \cap \Bigl(\bigcap_{{j}\in\br{F^-}} H_j^-\Bigr).\]
Note that for a facet  $F = \s\sqcup\br{([n]\sm\s)}$ of the polar complex,  $R_F$ is precisely the interior of the atom $A_\s$.
In addition, it is easy to see  that 
$\link_F\bd(\cC) = \bd(\cC')$ for some 
$\cC' \subseteq 2^{[n]\sm\underline{F}}$. Therefore, 
we consider the topology of the covered subset of 
$R_F$.
We show the positive and negative halfspaces indexed by the complement of $\underline{F}$ will cover either all of $R_F$ or all but a linear subspace of $R_F$. The following proposition  describes the combinatorics of the nerve of this cover.
\begin{prop}\label{Prop:37}  
  Let $(\cH,X)$ be a stable arrangement, and let $R_F$ be a nonempty region with $|F| < n$.
  Then $(\{H_i^+\cap R_F\}_{i\not\in\underline{F}}, R_F)$ is a stable arrangement.
  Moreover,  the $\nerve(\{H_i^+,H_i^-\}_{i\not\in\underline{F}})$ is either collapsible
  or is the polar complex of the full code on the vertices  $[n]\sm\underline{F}$,
  i.e. $\nerve\left(\{H_i^+,H_i^-\}_{i\not\in\underline{F}}\right )=
  \Gamma\left(2^{[n]\sm\underline{F}}\right )$. 
\end{prop}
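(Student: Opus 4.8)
The plan is to first dispense with the easy part — that $(\{H_i^+\cap R_F\}_{i\notin\underline{F}},R_F)$ is a stable arrangement — and then analyze the nerve by dividing into cases according to the dimension of the linear span of the normals $\{w_i\}_{i\notin\underline{F}}$, restricted to the directions along which $R_F$ is ``full-dimensional.'' For the first part: $R_F$ is open and convex by construction, and \subndy\ of the restricted arrangement is inherited from that of $(\cH,X)$, since any intersection $R_F\cap H_\tau$ for $\tau\subseteq[n]\sm\underline{F}$ is an open subset of $X\cap H_\tau$, which already has the correct dimension $d-|\tau|$ whenever it is nonempty; if $R_F\cap H_\tau=\varnothing$ there is nothing to check. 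This uses only \refD{nondegeneracy} and the fact that $R_F$ is an intersection of $X$ with open halfspaces.

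For the main assertion, the key observation is that $R_F$ is an open convex set, so $\bigcup_{i\notin\underline{F}} H_i^+\cap R_F$ is an open subset of $R_F$; I want to understand its complement inside $R_F$. A point $x\in R_F$ lies outside every $H_i^+$ (for $i\notin\underline{F}$) precisely when $w_i\cdot x - h_i\le 0$ for all such $i$. Call this closed set $K$. The first goal is to show $K$ is either empty (in which case the positive halfspaces alone cover $R_F$, and by \refL{nervelemma} — applied to the convex cover $\{H_i^+\cap R_F\}$ of the convex set $R_F$ — the nerve is collapsible) or is an affine subspace intersected with $R_F$. For the latter: on $K$ each inequality $w_i\cdot x\le h_i$ holds, and if $K$ meets the interior of some $H_i^-$ then that halfspace is ``free'' and one can still hope for collapsibility; the decisive case is when $K$ is contained in $\bigcap_{i\in S} H_i$ for a maximal such index set $S$, where \subndy\ forces $\{w_i\}_{i\in S}$ to be linearly independent and $\bigcap_{i\in S}H_i$ to be an affine subspace of codimension $|S|$ meeting $R_F$. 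One then argues that, modulo relabeling by bitflips (using \refP{bitflipsinvariance} and the induced action on $\bd(\cC)$), we may assume $S=[n]\sm\underline{F}$, so that $K = R_F\cap\bigcap_{i\notin\underline{F}}H_i$ is a single nonempty affine slice; removing it from $R_F$ and taking the nerve of all $2^{|[n]\sm\underline{F}|}$ open regions gives exactly the nerve of the coordinate halfspaces of a cross-polytope, i.e.\ $\Gamma(2^{[n]\sm\underline{F}})$. This last identification is the computation already implicit in the remark that $\bd(2^{[3]})$ is the boundary of the octahedron and, in general, $\bd(2^{[m]})$ is the boundary of the $m$-dimensional cross-polytope: the regions cut out by $m$ linearly independent hyperplanes through a common point, intersected with a small convex neighborhood, realize all $2^m$ sign patterns, and any $m$ of the $2m$ halfspaces $\{H_i^+,H_i^-\}$ have a common point iff no complementary pair $\{H_i^+,H_i^-\}$ is among them.

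The intermediate cases — where $K$ is nonempty but not a single clean affine slice through a full-rank set of normals — need to be folded into the collapsible alternative. Here the plan is to peel off one halfspace at a time: if some $H_i^-$ (equivalently some reoriented $H_i^+$) has the property that $H_i^-\cap R_F$ is disjoint from $K$, or more precisely that the pair consisting of that halfspace and its ``siblings'' forms a free face in the relevant nerve, then a collapse removes it and we induct on $n - |\underline{F}|$; the base case $n-|\underline{F}|$ small is checked directly. Concretely I expect to show: either all the normals $\{w_i\}_{i\notin\underline{F}}$, after a suitable bitflip, point ``consistently'' so that $K$ is a genuine affine subspace of the full codimension and we land in the cross-polytope case, or else there is a bitflip after which $K=\varnothing$ and \refL{nervelemma} applies directly to the convex cover of $R_F$ by the positive halfspaces.

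I expect the main obstacle to be the case analysis controlling the closed ``uncovered'' set $K$: one must show that when $K\ne\varnothing$ it is forced — using \subndy\ of the restricted arrangement — to be exactly an affine subspace of codimension equal to the number of surviving hyperplanes (after bitflips), with $R_F$ ``wrapping all the way around'' it so that all $2^{n-|\underline{F}|}$ sign patterns are realized. The subtlety is that $R_F$, being only an open convex set rather than all of $\R^d$, could in principle fail to meet some orthant of the transverse directions; ruling this out requires an argument that the normals $\{w_i\}_{i\notin\underline{F}}$ span exactly the transverse directions and that $R_F$ is open around a point of $K$, so that a small ball around such a point — which does meet every orthant — already produces every codeword. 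Making this ``small ball'' argument rigorous, and correctly handling the bookkeeping of which bitflip normalizes the arrangement, is where the real work lies; the two extreme conclusions (collapsible vs.\ full cross-polytope) are then immediate once the dichotomy on $K$ is established.
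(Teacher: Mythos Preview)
Your first paragraph (stability of the restricted arrangement) matches the paper and is fine.

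The main argument, however, is misdirected in a way that creates the very ``intermediate cases'' you worry about. The nerve in the proposition is $\nerve(\{H_i^+,H_i^-\}_{i\in\nu})$ with $\nu=[n]\sm\underline{F}$, so the relevant uncovered set is
\[
R_F\setminus\bigcup_{i\in\nu}(H_i^+\cup H_i^-)\;=\;R_F\cap\bigcap_{i\in\nu}H_i\;=\;R_F\cap H_\nu,
\]
not your $K=\{x\in R_F\mid w_i\cdot x\le h_i\text{ for all }i\in\nu\}$, which is the complement of the \emph{positive} halfspaces alone. With the correct set the dichotomy is immediate and clean, exactly as the paper does it: if $R_F\cap H_\nu=\varnothing$ then the open convex sets $\{H_i^\pm\cap R_F\}$ cover the convex set $R_F$, and \refL{nervelemma} applied to \emph{this} cover gives collapsibility of the nerve you actually care about; if $R_F\cap H_\nu\neq\varnothing$, \subndy\ forces $\{w_i\}_{i\in\nu}$ to be linearly independent, and a small ball in $R_F$ about any point of $H_\nu$ realizes all $2^{|\nu|}$ sign patterns, so the nerve is $\Gamma(2^\nu)$.

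Two concrete problems with your route as written: (i) when your $K$ is empty you invoke \refL{nervelemma} for the cover $\{H_i^+\cap R_F\}$, which yields collapsibility of $\nerve(\{H_i^+\})$, not of $\nerve(\{H_i^+,H_i^-\})$; you never bridge that gap. (ii) Your $K$ can be nonempty even when $R_F\cap H_\nu=\varnothing$ (e.g.\ two parallel hyperplanes in $\R^2$), which is why you are forced into a bitflip-and-peel induction to handle ``intermediate cases'' that simply do not arise once you track the right set. The bitflip reduction is not wrong in spirit, but it is unnecessary machinery standing in for the one-line observation that the uncovered locus of the full family $\{H_i^\pm\}$ is $H_\nu\cap R_F$.
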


\begin{proof} Denote  $\nu \od  [n] \sm \underline{F}$.  
 	First we verify the arrangement $(\{H_i^+\cap R_F\}_{i \in \nu }, R_F)$
	is \nd. The region  $R_F$ is open and convex, and intersections of hyperplanes in $R_F$ lie in $X$, so they already satisfied the \subndy\ condition.

	Consider $H_\nu \cap R_F$;  if it is empty, then the union of the positive and negative open half-spaces indexed by $\nu$ is all of the convex set $R_F$, and so by \refL{nervelemma} the nerve is collapsible. If $H_\nu \cap R_F \neq \varnothing$, by \ndy, we have $\dim H_\nu = d - |\nu|$. In this case, the linear independence of $\{w_i \mid i\in\nu\}$ ensures all of the $2^{|\nu|}$ intersection patterns of halfspaces, i.e.\ the nerve is $\bd(2^\nu) = \bd(2^{[n]\sm\underline{F}})$.
\end{proof}

\begin{defn}\label{D:slo}
	Let $F\in\bd(\cC)$ be a non-maximal face such that $\link_F(\bd(\cC))$ is neither collapsible nor $\link_F(\bd(\cC)) = \bd(2^{[n]\sm\underline{F}})$.
	We call $F$ a \emph{sphere link obstruction}.
\end{defn}

\medskip 

By \refL{polarnerve}, we have $\link_F\bd(\cC) = \nerve(\{H_i^+\cap R_F, H_i^-\cap R_F\}_{i \not\in\underline{F}})$. This, together with the Proposition  \ref{Prop:37}, imply 

\begin{bigthm}[Sphere link property]\label{P:ndhcimpliesnoslo}
	Suppose $\cC$ is a {\nd} hyperplane code. Then $\cC$ has no sphere link obstructions.
\end{bigthm}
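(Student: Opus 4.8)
The plan is to unwind the definition of a sphere link obstruction and show directly that its two defining failure conditions are incompatible with $\cC$ being a \nd\ hyperplane code, using the nerve description of links provided by \refL{polarnerve} together with the dichotomy established in \refP{37}. Suppose $\cC$ is a \nd\ hyperplane code, so $\cC = \code(\cH,X)$ for some \nd\ arrangement $(\cH,X)$, and let $F \in \bd(\cC)$ be a non-maximal face; I want to show $F$ is not a sphere link obstruction, i.e.\ that $\link_F\bd(\cC)$ is either collapsible or equal to $\bd(2^{[n]\sm\underline{F}})$.

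First I would record that, since $F$ is non-maximal, $|F| < n$, and that $F \in \bd(\cC)$ means the region $R_F = X \cap \bigl(\bigcap_{i\in F^+}H_i^+\bigr)\cap\bigl(\bigcap_{j\in\br{F^-}}H_j^-\bigr)$ is nonempty (this is exactly the content following \refL{polarnerve} and \refL{nondegenerateatoms}: faces of the polar complex correspond to nonempty open convex subregions of $X$). Next, \refL{polarnerve} applied at the level of the link — more precisely the observation made just before \refPp{nondegenerateshelling} that $\link_F\bd(\cC) = \bd(\cC')$ for some $\cC' \subseteq 2^{[n]\sm\underline{F}}$, together with the identification $\link_F\bd(\cC) = \nerve(\{H_i^+\cap R_F, H_i^-\cap R_F\}_{i\notin\underline{F}})$ — lets me replace the abstract link by a concrete nerve of a half-space cover of the convex set $R_F$. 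This reduction is the conceptual heart of the argument and is essentially dictated by the way the preceding lemmas were set up.

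Now I would invoke \refP{37} with the region $R_F$: since $(\cH,X)$ is \nd\ and $R_F$ is a nonempty region with $|F| < n$, the proposition tells me $\nerve(\{H_i^+,H_i^-\}_{i\notin\underline{F}})$ is either collapsible or equals $\bd(2^{[n]\sm\underline{F}})$. Combining with the identification of the previous step, $\link_F\bd(\cC)$ is therefore collapsible or equals $\bd(2^{[n]\sm\underline{F}})$ — which is precisely the negation of "$F$ is a sphere link obstruction." Since $F$ was an arbitrary non-maximal face, $\cC$ has no sphere link obstructions.

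The only real subtlety — and the step I would be most careful about — is making sure the nerve identification is applied to the correct cover and the correct ambient set: \refL{polarnerve} is stated for the arrangement $(\cH,X)$ on the set $X$, so to apply it to $\link_F\bd(\cC)$ I need the fact (from \refP{37}) that $(\{H_i^+\cap R_F\}_{i\notin\underline{F}}, R_F)$ is itself a \nd\ arrangement, and then that $\link_F\bd(\cC)$ is the polar complex of \emph{its} code; the bookkeeping of which hyperplanes survive (those indexed by $[n]\sm\underline{F}$) and why the barred/unbarred vertices of the link match up with the two orientations of those hyperplanes in $R_F$ needs to be spelled out once, but it is routine given the machinery already in place. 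Everything else is a direct substitution of one proposition into another.
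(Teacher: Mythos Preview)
Your proposal is correct and matches the paper's own argument essentially verbatim: the paper simply states that $\link_F\bd(\cC) = \nerve(\{H_i^+\cap R_F, H_i^-\cap R_F\}_{i\notin\underline{F}})$ by \refL{polarnerve} and then applies Proposition~\ref{Prop:37} to obtain the dichotomy. Your added care about why the link--nerve identification holds is fine (and the stability of the restricted arrangement from Proposition~\ref{Prop:37} does justify reapplying \refL{polarnerve}), though it also follows directly from the general fact $\link_\s\nerve(\cU) = \nerve(\{U_j\cap U_\s\}_{j\notin\s})$ mentioned earlier in the paper.
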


\begin{ex}
	Continuing \refEx{code2}, we consider the polar complex $\bd(\cC_2)$ for the un\nd\ arrangement $(\cH,X)$ in \refF{bitflipfailure}(a). This complex is illustrated in \refF{bitflipfailure}(c). The face $\varnothing$ is a sphere link obstruction: $\link_\varnothing\bd(\cC_2) = \bd(\cC_2),$ and this complex is neither the complex $\bd(2^{[3]})$, which would have 8 facets, nor is it collapsible. Therefore, $\cC_2$ is not a \nd\ hyperplane code.
\end{ex}

\bigskip 
\subsection{Chamber Obstructions}

The intuition  behind the third  obstruction in this section
concerns maximal hyperplane intersections. If a collection $\{H_i\}_{i\in\s}$
of hyperplanes intersects in a point ($\dim H_\s = 0$), 
then that point has fixed position relative to other hyperplanes. 
In particular, there cannot be two distinct regions defined by the other hyperplanes that contain that point. 
More generally, if $H_\sigma \neq \varnothing$ is a maximal non-empty intersection, then it intersects only one atom of the arrangement  $\{ H_j\}_{j\not\in \sigma}$ of the remaining hyperplanes.

\begin{defn}\label{D:chamber}
	The \emph{geometric chamber complex} of a hyperplane arrangement $\cH$ relative to an open convex set $X$, $\cham(\cH,X)$, is the set of $\s\subseteq[n]$ such that $H_\s \cap X \neq \varnothing$. By convention, $H_\varnothing = \R^d$ so $\varnothing \in \cham(\cH,X)$ for all $(\cH,X)$.
	
	\noindent The \emph{combinatorial chamber complex} of a code $\cC$, denoted $\cham(\cC)$, is given by the set of $\s \subseteq [n]$ such that there exists $T \in \bd(\cC)$ with $\underline{T} = [n] \setminus \s$ and $\link_T\bd(\cC) = \bd(2^\s)$. We call such a subset $T$ a \emph{chamber} of $\s$. 
\end{defn}

Both $\cham(\cH,X)$ and $\cham(\cC)$ are simplicial complexes: the former because for any $i\in \sigma$,  $H_{\s\sm i}\supseteq H_\s$; the latter because if $\link_T\bd(\cC) = \bd(2^\s)$ then $\link_{T\cup i} = \bd(2^{\s\sm i})$. 
For \nd\ hyperplane codes, the facets of these simplicial complexes correspond to maximal hyperplane intersections.

\begin{ex}
	Returning to the {\nd} code $\cC_1$ from \refEx{code1}, the maximal faces of $\cham(\cC_1)$ are $2$ and $13$. This is because $\link_{1\bar{3}}(\bd(\cC_1)) = \bd(2^{\{2\}})$ and $\link_{2}(\bd(\cC_1)) = \bd(2^{\{1,3\}})$.
	By inspection, these are also maximal faces of the geometric chamber complex $\cham(\cH,X)$ for the arrangement in \refF{hyperplanecodeexample}(a).
\end{ex}

\begin{prop}\label{P:hyperplanesinglechamber}
	For a \nd\ arrangement $(\cH,X)$, the associated chamber complexes coincide, $\cham(\cH,X) = \cham(\code(\cH,X))$. Moreover, for $\cC = \code(\cH,X)$, each facet $\s$ of $\cham(\cC)$ has a \emph{unique} chamber $T \in \bd(\cC)$.
\end{prop}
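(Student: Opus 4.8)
The plan is to prove the two assertions in turn, both by translating between the combinatorial chamber complex of the code and the geometric chamber complex of the arrangement, using the dictionary set up just before the statement (Lemmas~\ref{L:nondegenerateatoms} and \ref{L:polarnerve}, together with Proposition~\ref{Prop:37}).

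For the equality $\cham(\cH,X) = \cham(\code(\cH,X))$: fix $\cC = \code(\cH,X)$ and let $\s\subseteq[n]$. First I would show $\cham(\cC)\subseteq\cham(\cH,X)$. If $\s\in\cham(\cC)$, pick a chamber $T\in\bd(\cC)$ with $\underline{T}=[n]\sm\s$ and $\link_T\bd(\cC)=\bd(2^\s)$. By \refL{polarnerve}, $\link_T\bd(\cC) = \nerve(\{H_i^+\cap R_T, H_i^-\cap R_T\}_{i\in\s})$; since this nerve is the full polar complex $\bd(2^\s)$ rather than a collapsible complex, Proposition~\ref{Prop:37} (applied to the \nd\ arrangement on $R_T$, which is nonempty because $T\in\bd(\cC)$) forces $H_\s\cap R_T\neq\varnothing$, so in particular $H_\s\cap X\neq\varnothing$ and $\s\in\cham(\cH,X)$. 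Conversely, if $\s\in\cham(\cH,X)$, then $H_\s\cap X\neq\varnothing$ and by \ndy\ $\dim H_\s = d - |\s|$. I want to produce a facet-sized $T$ on the complementary vertices with the right link. The key geometric input is that, near a point $x\in H_\s\cap X$, the hyperplanes $\{H_j\}_{j\notin\s}$ do \emph{not} pass through $x$ for a generic such $x$ — more carefully, because $\dim H_\s = d-|\s|$, one can choose $x\in H_\s\cap X$ avoiding all $H_j$ with $j\notin\s$ (there are finitely many, each meeting $H_\s$ in a proper subset when $j$ contributes, and $H_\s\cap X$ is a nonempty relatively open convex piece). Let $F^- = \{j\notin\s \mid x\in H_j^-\}$ and $F^+ = \{j\notin\s\mid x\in H_j^+\}$, and set $T = F^+\sqcup\br{F^-}$, so $\underline{T}=[n]\sm\s$. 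A small neighborhood of $x$ in $H_\s$ still lies in $R_T$, hence $H_\s\cap R_T\neq\varnothing$, so by Proposition~\ref{Prop:37} again, $\link_T\bd(\cC) = \nerve(\{H_i^+,H_i^-\}_{i\in\s}) = \bd(2^\s)$. It remains to check $T\in\bd(\cC)$, i.e.\ that $R_T\neq\varnothing$ as an atom-region; but $R_T$ contains the neighborhood of $x$ intersected with an open half-space transverse to $H_\s$, which is nonempty by \ndy\ (exactly the argument of \refL{nondegenerateatoms}). This gives $\s\in\cham(\cC)$.

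For uniqueness of the chamber of a facet $\s$ of $\cham(\cC)$: by the equality just established, $\s$ is a facet of $\cham(\cH,X)$, so $H_\s\cap X\neq\varnothing$ is a \emph{maximal} nonempty intersection, meaning no $H_j$ with $j\notin\s$ contains $H_\s\cap X$ — indeed if some $H_j\supseteq H_\s\cap X$ then $\s\cup j\in\cham(\cH,X)$, contradicting maximality. Since $H_\s\cap X$ is connected (it is convex and nonempty) and each $H_j$, $j\notin\s$, meets it in a set of strictly smaller dimension (a relatively closed subset with empty relative interior), the complement $H_\s\cap X \sm \bigcup_{j\notin\s}H_j$ is connected. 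Therefore the sign vector $(\operatorname{sign}(w_j\cdot y - h_j))_{j\notin\s}$ is constant as $y$ ranges over this connected set, which says there is exactly one choice of $(F^+, F^-)$ partitioning $[n]\sm\s$ with $H_\s\cap R_T\neq\varnothing$; that unique $T = F^+\sqcup\br{F^-}$ is the unique chamber of $\s$. Any purported second chamber $T'$ would have $\link_{T'}\bd(\cC)=\bd(2^\s)$, forcing $H_\s\cap R_{T'}\neq\varnothing$ by Proposition~\ref{Prop:37}, hence its sign pattern on $[n]\sm\s$ would be one of the constant values above, so $T'=T$.

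The main obstacle I anticipate is the genericity/connectedness step: carefully justifying that one can pick $x\in H_\s\cap X$ off every $H_j$ ($j\notin\s$), and that $H_\s\cap X\sm\bigcup_{j\notin\s}H_j$ is connected. Both reduce to the standard fact that a nonempty open (in its affine hull) convex set cannot be covered by finitely many affine subspaces of strictly smaller dimension, combined with the \subndy\ hypothesis guaranteeing $\dim(H_\s\cap H_j) < \dim H_\s$ whenever $H_j\not\supseteq H_\s$; the rest of the argument is bookkeeping with the $R_F$ dictionary and repeated appeals to Proposition~\ref{Prop:37}.
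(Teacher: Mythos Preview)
Your approach is essentially the paper's: both directions of $\cham(\cH,X)=\cham(\cC)$ go through the dictionary $F\leftrightarrow R_F$ and Proposition~\ref{Prop:37}, and uniqueness comes from showing $H_\s\cap X$ lies in a single atom of the complementary arrangement. The paper is terser (it simply picks an atom $A_\t$ of $(\{H_j^+\}_{j\notin\s},X)$ meeting $H_\s$ rather than choosing a generic point), but the content is the same.

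There is one genuine slip in your uniqueness argument. From maximality of $\s$ in $\cham(\cH,X)$ you only extract ``no $H_j$ with $j\notin\s$ contains $H_\s\cap X$,'' and then assert that $H_\s\cap X\setminus\bigcup_{j\notin\s}H_j$ is connected because each $H_j$ meets $H_\s\cap X$ in a set of strictly smaller dimension. That connectedness claim is false in general: $H_j\cap H_\s$ has codimension~$1$ in $H_\s$, and removing a hyperplane from an open convex set typically disconnects it. The fix is that maximality gives you the \emph{stronger} conclusion you need directly: if $H_j\cap H_\s\cap X\neq\varnothing$ for some $j\notin\s$, then $\s\cup\{j\}\in\cham(\cH,X)$, contradicting maximality. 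Hence $H_j\cap H_\s\cap X=\varnothing$ for every $j\notin\s$, so $H_\s\cap X$ is already disjoint from all the other hyperplanes and therefore sits inside a single atom of $(\{H_j^+\}_{j\notin\s},X)$. This is exactly how the paper argues, and it removes the need for any connectedness-after-deletion step.
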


	\begin{proof}
		Let $(\cH,X)$ be a \nd\ pair and set $\cC = \code(\cH,X)$. Suppose $\s \in \cham(\cH,X)$, so $H_\s \cap X \neq \varnothing$. Then, for any atom $A_\t$ of the arrangement $(\{H_i^+\cap X\}_{i\not\in\s},X)$ such that $H_\s \cap A_\t \neq \varnothing$, the set $T = \t \sqcup \br{([n]\sm\s)\sm\t}$ is a chamber of $\s$, hence $\s\in\cham(\cC)$.
		For the reverse containment, suppose $\s\in\cham(\cC)$ has chamber $T$. Then
			\[ \bd(2^\s) = \link_T\bd(\cC) = \bd(\code(\{H_i^+\cap R_T\}_{i\not\in\s}, R_T)), \]
		meaning the hyperplanes $\{H_i\}_{i\in\s}$ partition $R_T$ into the maximal number of regions, i.e.\ it is a central arrangement. Thus $H_\s \cap R_T \neq \varnothing$ and therefore $H_\s \cap X \neq \varnothing$ and $\s \in \cham(\cH,X)$.

		Now consider $\s$ a facet of $\cham(\cC)$. Because $\cC = \code(\cH,X)$,  the intersection of hyperplanes $H_\s\cap X$ does not meet any other hyperplanes inside  $X$. Therefore, it is interior to only one atom of the arrangement $(\{H_j^+\}_{j\not\in\s},X)$; the face in $\bd(\cC)$ corresponding to this atom is the unique chamber $T$.
	\end{proof}

\noindent We reformulate \refP{hyperplanesinglechamber} into our third and final obstruction to hyperplane codes.

\begin{defn}\label{D:sco}
	Let $\s\subseteq[n]$ be a maximal face of $\cham(\cC)$ such that there exist two faces $T_1\neq T_2 \in \bd(\cC)$ with $\link_{T_1}\bd(\cC) = \link_{T_2}\bd(\cC) = \bd(2^\s)$. Then we call $\s$ a \emph{chamber obstruction}.
\end{defn}

\begin{bigthm}[Single chamber property]\label{P:ndhcimpliesnoco}
	Suppose $\cC = \code(\cH,X)$ is a \nd\ hyperplane code. Then $\cC$ has no chamber obstructions.
\end{bigthm}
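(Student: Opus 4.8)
The plan is to derive Theorem~\ref{P:ndhcimpliesnoco} as an almost immediate corollary of Proposition~\ref{P:hyperplanesinglechamber}, which already contains the essential geometric content. First I would take $\cC = \code(\cH,X)$ for a \nd\ arrangement $(\cH,X)$ and argue by contradiction: suppose $\s \subseteq [n]$ is a chamber obstruction, so $\s$ is a facet of $\cham(\cC)$ and there exist two distinct faces $T_1 \neq T_2 \in \bd(\cC)$ with $\link_{T_1}\bd(\cC) = \link_{T_2}\bd(\cC) = \bd(2^\s)$. By \refD{chamber}, this precisely says that both $T_1$ and $T_2$ are chambers of $\s$. But \refP{hyperplanesinglechamber} asserts that for $\cC = \code(\cH,X)$ with $(\cH,X)$ \nd, each facet of $\cham(\cC)$ has a \emph{unique} chamber. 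This contradicts $T_1 \neq T_2$, so no chamber obstruction exists.

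The one point requiring a small amount of care is the implicit claim that $\link_{T_i}\bd(\cC) = \bd(2^\s)$ forces $\underline{T_i} = [n]\sm\s$, so that $T_i$ genuinely qualifies as a ``chamber of $\s$'' in the sense of \refD{chamber} (which demands both the link condition \emph{and} $\underline{T} = [n]\sm\s$). This follows because $\link_{T_i}\bd(\cC) = \bd(\cC')$ for some code $\cC'$ on vertex set $[n]\sm\underline{T_i}$ (noted in \refS{spherelinkobstructions}), and $\bd(2^\s)$ is pure of dimension $|\s|-1$ on vertex set $\s \sqcup \br{\s}$; matching vertex sets gives $[n]\sm\underline{T_i} = \s$, i.e.\ $\underline{T_i} = [n]\sm\s$. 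I would state this in one sentence. With that observation in hand, $T_1$ and $T_2$ are two distinct chambers of the same facet $\s$ of $\cham(\cC)$, directly contradicting the uniqueness clause of \refP{hyperplanesinglechamber}.

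I do not expect any serious obstacle here; the theorem is essentially a restatement of the proposition, and all the geometry — that a maximal hyperplane intersection $H_\s \cap X$ lies in the interior of a single atom of the complementary arrangement because $\code(\cH,X) = \cC$ records exactly which atoms are nonempty — has already been carried out in the proof of \refP{hyperplanesinglechamber}. The only thing to be vigilant about is not to conflate the combinatorial hypothesis ``$\s$ is a chamber obstruction'' (which is phrased purely in terms of $\bd(\cC)$) with the geometric picture; the bridge between them is supplied by the equality $\cham(\cH,X) = \cham(\cC)$ from \refP{hyperplanesinglechamber}, ensuring that a combinatorial facet $\s$ of $\cham(\cC)$ is indeed a geometric maximal intersection. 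So the proof is: unwind \refD{sco}, invoke \refP{hyperplanesinglechamber} for both the chamber-complex identification and the uniqueness of the chamber, and conclude by contradiction.
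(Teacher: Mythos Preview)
Your proposal is correct and matches the paper's approach exactly: the paper presents Theorem~\ref{P:ndhcimpliesnoco} as a direct reformulation of \refP{hyperplanesinglechamber} and gives no separate proof. Your extra sentence verifying that $\underline{T_i} = [n]\sm\s$ is a reasonable bit of hygiene, since \refD{sco} only states the link condition while \refD{chamber} also requires the support condition; the paper leaves this implicit.
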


\begin{ex}
	The code $\cC_3$ from \refEx{code3} also has a chamber obstruction, in the form of $\s = \{1,2\}$. 
	There are two faces $\{\bar{3},4\}$
	and $\{3,\bar{4}\}$ with link in $\bd(\cC_3)$ equal to the full polar complex on
	$\{1,2\}$. One can check that this is maximal in $\cham(\cC_3)$, 
	creating a chamber obstruction.
\end{ex}


\section{The main results}\label{S:mainresults}

Our main results consist of  showing that (i) the polar complex of  a \nd\ hyperplane code is shellable and 
(ii) shellability of $\bd(\cC)$ implies $\cC$ has none of the obstructions thus far considered, except possibly the strong bitflip obstruction. 
	First, we define shellability.

\begin{defn}\label{D:shellabilityeasy} 
 Let $\D$ be a pure simplicial complex of dimension $d$ and $F_1,\dots,F_t$ an ordering of its facets. The ordering is a \emph{shelling order} if, for $i > 1$, the complex
	  	\[ \D(\{F_i\}) \cap \D(\{F_1,\dots,F_{i-1}\}) \]
is pure of dimension $d-1$. A simplicial complex is \emph{shellable} if its facets permit a shelling order.
\end{defn}

A shelling order constructs a simplicial complex one facet at a time in such a way that each new facet is glued along maximal faces of its boundary. The facets of $\bd(\cC)$ correspond to codewords of $\cC$, thus a shelling order of $\bd(\cC)$ corresponds to an ordering of the codewords. We explicitly construct such an order in \refS{proofs:shellability} to prove \refPp{nondegenerateshelling}. 

\begin{bigthm}\label{P:nondegenerateshelling}
	Let $\cC \subseteq 2^{[n]}$ be a \nd\ hyperplane code. Then $\bd(\cC)$ is shellable.
\end{bigthm}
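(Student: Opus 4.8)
The plan is to construct an explicit shelling order on the facets of $\bd(\cC)$ using the geometry of the realization. Fix a \nd\ arrangement $(\cH,X)$ with $\cC=\code(\cH,X)$. By \refL{nondegenerateatoms} each nonempty atom $A_\s$ has nonempty interior $R_{\Sigma(\s)}$, and by \refL{polarnerve} the facets of $\bd(\cC)$ are exactly the sets $\Sigma(\s)$ for $\s\in\cC$. So it suffices to order the top-dimensional regions $\{R_{\Sigma(\s)}\}_{\s\in\cC}$ so that each new region, when added, meets the union of the previous ones in a pure codimension-one subcomplex of its boundary. The standard tool here is a generic linear functional: pick $\ell\in(\R^d)^*$ that is not constant on any of the finitely many hyperplanes $H_i$ and separates the values $\ell(p)$ over the finitely many vertices $p$ of the arrangement (points where $d$ of the $H_i$ meet inside $X$, plus we may need to sweep $X$ itself). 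Actually, since $X$ need not be bounded, the cleaner route is to add a generic ``sweep'' direction: order the codewords $\s$ by where a generic ray first enters $R_{\Sigma(\s)}$, i.e.\ use the line-shelling / sweeping-hyperplane construction for hyperplane arrangements, adapted to the fact that we only see the combinatorial type (the code) rather than the full cell complex.

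The key steps, in order: (1) Reduce to the realization: by \refL{polarnerve}, $\bd(\cC)=\nerve(\{H_i^+\cap X,H_i^-\cap X\})$, so a facet $\Sigma(\s)$ is glued, inside $\bd(\cC)$, to $\Sigma(\s')$ along the face recording which half-spaces $R_{\Sigma(\s)}$ and $R_{\Sigma(\s')}$ lie on the same side of — and the codimension-one faces of $\partial\Sigma(\s)$ correspond to the hyperplanes $H_i$ bounding $R_{\Sigma(\s)}$ that have a region on the other side. (2) Choose a generic flag/linear functional $\ell$ and totally order $\cC$ by $\ell$ evaluated at, say, a canonically chosen interior point of each $R_{\Sigma(\s)}$ (or better, by the order in which the affine hyperplanes $\{\ell=t\}$, $t$ increasing, first touch each region); genericity guarantees no ties and that each region is entered through a single facet-hyperplane. (3) Show this is a shelling: when $R_{\Sigma(\s)}$ is added, the ``already seen'' part of its boundary $\partial\Sigma(\s)$ is precisely the set of bounding hyperplanes $H_i$ of $R_{\Sigma(\s)}$ on whose other side lies an earlier region; one must check this set is a nonempty pure codimension-one subcomplex — nonempty because a generic sweep enters every region after the first through at least one wall, and pure/connected because the relevant part of $\partial R_{\Sigma(\s)}$ is the portion ``below'' the entry point, which is the complement in a sphere of a shellable ball. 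The technical heart is translating these geometric statements about $R_{\Sigma(\s)}\subseteq X$ back into statements about the abstract faces $\Sigma(\s)\in\bd(\cC)$; here \refL{nondegenerateatoms}, \refL{polarnerve}, and the \subndy\ of intersections (so that codimensions behave correctly) do the work, e.g.\ an intersection face of $\partial R_{\Sigma(\s)}$ of dimension $d-k$ corresponds to a face of $\Sigma(\s)$ of codimension $k$.

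The main obstacle I expect is handling the unboundedness of $X$ and of the regions $R_F$: a naive line shelling of a hyperplane arrangement works cleanly for the bounded regions of a central or affine arrangement, but here $X$ is an arbitrary open convex set and some atoms are unbounded, so a sweeping hyperplane may enter a region ``at infinity'' or never leave it, which breaks purity of the glued-in part. The fix is either to intersect everything with a large generic ball $B$ (noting that for $B$ large enough $\code(\cH,X\cap B)=\code(\cH,X)$ since every atom has nonempty interior, so no codewords are lost and none are gained as long as $B$ is generic), reducing to the bounded case; or to use a ``radial'' sweep from a generic interior point of $X$, ordering codewords by the direction/parameter at which a ray first hits each region, which is the polytope line-shelling argument of Bruggesser–Mani in disguise. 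Once the bounded reduction is in place, the combinatorial verification that the candidate order satisfies \refD{shellabilityeasy} is essentially the classical proof that affine hyperplane arrangements (equivalently, their zonotopal duals) are shellable, repackaged through the polar-complex dictionary.
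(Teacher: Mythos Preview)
Your overall strategy---order the atoms by a generic linear sweep and verify the shelling condition via the ``entry wall''---matches the paper's, and you correctly flag unboundedness of $X$ and of the regions as the main obstacle. But the fixes you propose for that obstacle are where your plan diverges from the paper and where it is incomplete.

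The paper handles unboundedness in two moves. First, for the special case $X=\R^d$ (\refL{Rdshell}), the sweep is carried out by \emph{induction on $d$}: the regions with $m(\s)=-\infty$ (unbounded toward $-u$) cut out a generic $(d{-}1)$-arrangement on a far-away sweep hyperplane $H(t_0)$, and the inductive hypothesis orders them; each remaining region is entered at a unique arrangement vertex $H_\alpha$, and the paper identifies the minimal new face explicitly as $r(\Sigma(\s))=\Sigma(\s)|\alpha$. Second, for arbitrary $X$, the paper does \emph{not} bound by a ball: it replaces $X$ by the interior of a generic polytope $\cP=\bigcap_j B_j^+$ realizing the same code (\refL{polyhedralapproximation}), enlarges the arrangement to $\cH\cup\cB$ in $\R^d$, and observes that $\bd(\cC)$ is the link in $\bd(\code(\cH\cup\cB,\R^d))$ of the face $\{n{+}1,\dots,n{+}m\}$; shellability then follows from the $\R^d$ case together with the standard fact that links in shellable complexes are shellable.

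Your ball-intersection idea does not achieve this reduction: the boundary of a round ball is curved, so after bounding, many regions attain their sweep minimum on $\partial B$ rather than at an arrangement vertex, and for those you give no candidate for $r(\Sigma(\s))$ and no argument that the attached subcomplex is pure of codimension one. Ordering those ``sphere-entry'' regions compatibly is essentially the same problem one dimension down---which is precisely why the paper needs the induction on $d$. The clean repair is to swap the ball for a generic polytope so that its bounding facets become honest hyperplanes you can fold into the arrangement, and then strip them back out via the link lemma; at that point you have recovered the paper's proof.
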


	It turns out that the structure of shellable polar complexes does not allow for many of the obstructions thus far considered.
 
	\begin{bigthm}\label{T:shellableimplies}
		Let $\cC \subseteq 2^{[n]}$ be a combinatorial code such that $\bd(\cC)$ is shellable. Then,
		\vspace{-0.5\baselineskip}
		\begin{enumerate}[label={\arabic*.},leftmargin=*]
			\item\label{T:shellableimpliesnobflo} $\cC$ has no \emph{weak} bitflip local obstructions,
			\item\label{T:shellableimpliesnoslo} $\cC$ has no sphere link obstructions, and
			\item\label{T:shellableimpliesnoco} $\cC$ has no chamber obstructions.
		\end{enumerate}
	\end{bigthm}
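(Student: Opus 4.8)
The plan is to exploit the well-known combinatorial consequences of shellability for each of the three items in turn, translating every obstruction into a statement about links of faces of $\bd(\cC)$ and then invoking the fact that links of faces in a shellable complex are themselves shellable. The key structural facts I would set up first are: (a) if $\D$ is shellable, then for every face $G \in \D$ the link $\link_G\D$ is shellable (this is standard; one restricts the shelling order to the facets through $G$ and removes $G$); (b) a shellable pure complex is either contractible or homotopy equivalent to a wedge of spheres of its top dimension, and in particular it is collapsible precisely when the shelling has no ``homology facets,'' and it is a sphere precisely when it has exactly one homology facet whose restriction is its whole facet; and (c) by the remark just before \refP{bfloinpolarcomplex}, for any $g$ with support $\s$ the induced subcomplex $\bd(\cC)|_{([n]\sm\s)\sqcup\br\s}$ is isomorphic to $\D(g\cdot\cC)$, and likewise induced subcomplexes on a ``sign-consistent'' vertex set behave well. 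I would also record that $\bd(2^{[n]\sm\underline F})$ is the boundary of a cross-polytope, hence a sphere of dimension $n-|\underline F|-1$ with reduced homology $\ZZ$ in exactly that degree.

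For item \ref{T:shellableimpliesnobflo}: suppose $(g,\t)$ were a weak bitflip local obstruction, so $\t\notin g\cdot\cC$ and $\link_\t\D(g\cdot\cC)$ is non-contractible. By fact (c), $\D(g\cdot\cC)$ is an induced subcomplex of $\bd(\cC)$; but an induced subcomplex of a shellable complex need not be shellable, so I cannot apply (a) directly to $\D(g\cdot\cC)$. Instead I would argue through the polar complex itself: by \refP{bfloinpolarcomplex} the relevant link is $\link_{g\cdot\t}\bigl(\bd(\cC)|_{([n]\sm\s)\sqcup\br\s}\bigr)$, and I would show this is homeomorphic (up to the bar-ignoring isomorphism) to $\link_{G}\bd(\cC)$ for an appropriate face $G$ whose support is $g\cdot\t = \t$. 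Concretely, take $G = \t \sqcup \br{[n]\sm\t}$ shifted by $g$ — i.e. $G = g\cdot\t \sqcup \overline{[n]\sm g\cdot\t}$ restricted to the vertices appearing — wait, more carefully: the face whose link we want is exactly $T = (\t\cap([n]\sm\s)) \sqcup \br{\t\cap\s}$, the face of $\bd(\cC)$ on support $\t$ compatible with $g$. Since $\t\notin g\cdot\cC$, $T$ is a non-facet of $\bd(\cC)$, and $\link_T\bd(\cC)$ is shellable by (a), hence collapsible or a wedge of spheres; matching it against $\link_\t\D(g\cdot\cC)$ via the bar-ignoring map shows the latter is collapsible (if $\link_T\bd(\cC)$ has no homology facet) or has the homotopy type of a wedge of spheres — but in the second case I need to rule out non-contractibility, which is exactly where I must use that $T$ is \emph{not} a facet to force the shelling of $\link_T$ to be a shelling of a full boundary-of-cross-polytope; hmm, this needs the sphere-link argument too. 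Let me instead route item 1 through item 2.

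The cleanest route: prove item \ref{T:shellableimpliesnoslo} and item \ref{T:shellableimpliesnoco} first, then derive item \ref{T:shellableimpliesnobflo}. For \ref{T:shellableimpliesnoslo}: let $F\in\bd(\cC)$ be non-maximal; by (a), $\link_F\bd(\cC)$ is shellable, and by the earlier remark it equals $\bd(\cC')$ for some $\cC'\subseteq 2^{[n]\sm\underline F}$, so it is a pure $(n-|\underline F|-1)$-dimensional shellable complex sitting inside $\bd(2^{[n]\sm\underline F})$, the boundary of a cross-polytope. A pure shellable subcomplex of dimension $k$ of a shellable $k$-sphere is either the whole sphere (if it contains all facets) or collapsible: indeed if it omits a facet then the shelling has its last-added facet attached along its entire boundary removed — more precisely, a shellable complex that is not a pseudomanifold-without-boundary on all its top faces collapses onto a point because one can reverse the shelling, each step being an elementary collapse, as long as at least one facet's restriction is a proper subface, which is guaranteed when the complex is a proper subcomplex of the cross-polytope boundary. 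So $\link_F\bd(\cC)$ is collapsible or all of $\bd(2^{[n]\sm\underline F})$, contradicting the definition of a sphere link obstruction; this proves \ref{T:shellableimpliesnoslo}. For \ref{T:shellableimpliesnoco}: a chamber obstruction gives a maximal $\s\in\cham(\cC)$ with two distinct $T_1,T_2\in\bd(\cC)$, $\underline{T_i}=[n]\sm\s$, and $\link_{T_i}\bd(\cC)=\bd(2^\s)$. Maximality of $\s$ means no vertex of $[n]\sm\s$ can be moved into the support while keeping a full cross-polytope link; I would show that the existence of two such chambers forces a face of $\bd(\cC)$ strictly between and a link that is a shellable subcomplex of a cross-polytope boundary which is neither collapsible nor the whole sphere — contradicting \ref{T:shellableimpliesnoslo}. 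Concretely, consider the smallest face $F_0 = T_1\cap T_2$; its link in $\bd(\cC)$ is shellable, has support contained in $[n]\sm\underline{F_0}\supseteq\s$, contains the two ``opposite-ish'' cross-polytope-boundary subcomplexes coming from $T_1,T_2$, and by maximality of $\s$ cannot be a full cross-polytope boundary, yet is not collapsible since it contains a sphere of the wrong (non-full) dimension as a non-collapsible subcomplex — the details here are the main obstacle and require carefully identifying which face's link to examine. Finally, for \ref{T:shellableimpliesnobflo}: a weak bitflip local obstruction $(g,\t)$ gives, via \refP{bfloinpolarcomplex} and the bar-ignoring isomorphism, a non-facet $T\in\bd(\cC)$ with $\link_T\bd(\cC)$ non-contractible; since $\link_T\bd(\cC)=\bd(\cC')$ is shellable (by (a)) and shellable complexes are contractible or wedges of top-dimensional spheres, non-contractibility means it has a homology facet, so it contains $\bd(2^\s)$-type sphere structure — and being non-collapsible but (if $T$ is not a chamber) not a full cross-polytope boundary, this is a sphere link obstruction on $T$, contradicting \ref{T:shellableimpliesnoslo}; if $T$ happened to be a chamber then $\link_T\bd(\cC)=\bd(2^\s)$ is itself a cross-polytope boundary, hence a genuine sphere and thus non-contractible is consistent — but then $T$ being a facet-complement of maximal $\s$ with $\t\notin g\cdot\cC$ contradicts... here I'd argue $\t\notin g\cdot\cC$ means $T$ is not a facet of $\bd(\cC)$, while $\link_T = \bd(2^\s)$ with $\s$ a facet of $\cham$ having a \emph{unique} chamber would be forced, and uniqueness of the chamber plus $T$ not being a facet yields the contradiction. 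I expect the bookkeeping in \ref{T:shellableimpliesnoco} — pinning down exactly which intermediate face's link to feed into \ref{T:shellableimpliesnoslo} — to be the main technical obstacle; everything else reduces to the two facts that links of shellable complexes are shellable and that shellable complexes are homotopy wedges of top spheres.
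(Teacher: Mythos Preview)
Your argument for item~\ref{T:shellableimpliesnoslo} is essentially the paper's: links in a shellable complex are shellable, links in $\bd(\cC)$ are themselves polar complexes, and a shellable polar complex is either the full cross-polytope boundary or collapsible (this last fact is the paper's \refL{shellablecollapsible}, and your sketch of it is correct in spirit). So that part is fine.

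There is a genuine gap in your treatment of item~\ref{T:shellableimpliesnobflo}. You correctly observe early on that the non-contractible link furnished by a weak bitflip obstruction lives in the \emph{restricted} complex $\bd(\cC)|_{([n]\sm\s)\sqcup\br\s}$, not in $\bd(\cC)$ itself, and that induced subcomplexes of shellable complexes need not be shellable. But in your ``cleanest route'' you then assert that \refP{bfloinpolarcomplex} hands you a non-facet $T\in\bd(\cC)$ with $\link_T\bd(\cC)$ non-contractible --- it does not. It only tells you that $\link_{g\cdot\t}\bigl(\bd(\cC)|_{([n]\sm\s)\sqcup\br\s}\bigr)$ is non-contractible, and the restriction of $\link_T\bd(\cC)$ to the sign-consistent vertices can be non-contractible while $\link_T\bd(\cC)$ itself is collapsible. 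The paper closes this gap with an external lemma (\refL{linklemma}, from \cite{Carina2015}): if $\link_\a(\D|_{\a\cup\b})$ is non-contractible, one can enlarge $\a$ to some $\a'$ disjoint from $\b$ with $\link_{\a'}\D$ non-contractible in the \emph{full} complex. Only then can shellability of $\link_{\a'}\bd(\cC)$ and \refL{shellablecollapsible} be invoked to force $\link_{\a'}\bd(\cC)=\bd(2^{[n]\sm\underline{\a'}})$, whence the all-barred facet witnesses $\t\in g\cdot\cC$. Your proposed case split on whether $T$ is a chamber does not repair this, because you never established that $\link_T\bd(\cC)$ is non-contractible in the first place.

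Your plan for item~\ref{T:shellableimpliesnoco} also has a real gap. You propose to look at $F_0=T_1\cap T_2$ and argue that $\link_{F_0}\bd(\cC)$ is neither the full cross-polytope boundary (correct, by maximality of $\s$) nor collapsible ``since it contains a sphere of the wrong dimension as a non-collapsible subcomplex.'' But containing a non-collapsible subcomplex does not preclude collapsibility: a filled simplex contains its boundary sphere. The paper's argument is more delicate: it inducts on $k=|T_1\sm T_2|$, the base case $k=1$ giving $\s\cup\{i\}\in\cham(\cC)$ directly. For $k>1$ it passes to $\cC'$ with $\bd(\cC')=\link_{T_1\cap T_2}\bd(\cC)$, observes that $\operatorname{star}_{T_1'}\bd(\cC')\cup\operatorname{star}_{T_2'}\bd(\cC')$ is a suspension of $\bd(2^\s)\simeq S^{|\s|-1}$ and hence $\simeq S^{|\s|}$, and then shows that if \emph{every} other $F'$ with $\underline{F'}=\underline{T_1'}$ had collapsible link, one could collapse $\bd(\cC')$ down to this $S^{|\s|}$ (via \refL{link-collapse}), contradicting that a shellable polar complex is contractible or a sphere of its \emph{top} dimension $|\s|+k-1>|\s|$. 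Hence some intermediate $F'$ has $\link_{F'}\bd(\cC')=\bd(2^\s)$, reducing $k$. You correctly anticipated that locating the right intermediate face is the crux, but the mechanism is a homotopy-counting argument, not a direct sphere-link obstruction at $F_0$.
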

	
	\refT{shellableimplies} is proven in \refS{proofs:shellabilityobstructions}. Note the conclusion of \refT{shellableimplies}.1 refers to weak local obstructions, highlighting the gap between the notions of collapsibility and contractibilty.


\section{Discussion}\label{S:discussion}
Hyperplane codes are a special class of convex codes that naturally arise as the output of a one-layer feedforward network \cite{GI14}.   Hyperplane codes are a proper\footnote{See e.g. Example \ref{Ex:code3} and Figure \ref{Fig:code3}.} subclass of  the open convex codes.   We set out to  find obstructions to being a hyperplane code, while focusing on {\nd} hyperplane codes.
There are two   reasons for  primarily considering  the {\nd} hyperplane codes:  (i)  they are `generic' in that  they  are stable to small perturbations, and  (ii) they allow  the 
action of the group of bitflips $(\mathbb Z_2)^n$.  The second  property makes it natural to consider the  polar complex $\Gamma(\C)$ of a code, because the combinatorics of the polar complex captures all the bitflip-invariant properties of the underlying {\nd} hyperplane code. 
We have established  the following relationships among the properties of the polar complex of the code. The necessary conditions for $\C$ being a stable hyperplane code,
\newcommand\follows{\Longleftarrow}
$$
	  \Gamma(C) \text{ is shellable}  \follows \, \, 
	  \begin{matrix} \cC \text{ is a \nd} \\ \text{ hyperplane code } \end{matrix}
	   \implies \, \, 
	  \begin{cases} \cC \text{ has no {\it strong} bitflip obstructions,}   \\ \cC  \text{ has no sphere link obstructions,} \\ \cC \text{ has no chamber obstructions.} \end{cases}
$$
We have also established that almost all currently known necessary conditions follow from the shellability of the polar complex: 
$$
\Gamma(C) \text{ is shellable}  \implies \,  \begin{cases} \cC \text{ has no {\it weak}  bitflip obstructions,}   \\ \cC  \text{ has no sphere link obstructions,} \\ \cC \text{ has no chamber obstructions.} \end{cases} 
$$
 Note  that the shellability of the polar complex implies the lack of {\it weak}  bitflip obstructions, while a {\nd} hyperplane code lacks {\it strong} bitflip obstructions. It is currently an open problem if the gap between the strong and the weak versions of the local obstructions is indeed a property of  shellable polar complexes. Alternatively, codes with shellable polar complexes may also lack the strong bitflip obstructions.
An example of a code whose polar complex is shellable, but  has  the strong bitflip obstruction\footnote{In particular, the appropriate link in Definition \ref{D:bflo}  is contractible, but not collapsible.} 
 would provide a negative answer to the following open question:  Is  shellability of the polar complex equivalent to the code being a \nd\ hyperplane code?  
	  
What makes a code a {\nd}  hyperplane code is still an open question. It seems likely that the shellability of the polar complex is not {\it the only} necessary condition for a code to be a {\nd} hyperplane code. 
From a computational perspective, deciding if a given pure simplicial complex is shellable is known to be an NP-hard  problem \cite{Shellability_is_NP_Complete}. This likely means that answering  the question of whether a given code is produced by a one-layer network may be not computationally feasible. Ruling out that a given code is a hyperplane code may be less computationally intensive however, as it can rely on computing the Betti numbers of the free resolution of the Stanley-Reisner ideal of the polar complex, as illustrated in the following section.


\section{Algebraic signatures of a hyperplane code}\label{S:algebra}

Given a code $\cC$, how can we rule out that $\C$ is a {\nd} hyperplane code? In this section,
we show how the  tools from computational commutative algebra can be used to detect sphere link
obstructions via Stanley-Reisner theory.

\subsection{The neural and the Stanley-Reisner ideal}

The connections between neural codes and Stanley-Reisner theory were 
first developed in \cite{neuralring}, and later expanded upon in
\cite{homomorphisms}, \cite{garcia}, and \cite{Jeffries}. 
The key observation is that
a code $\C \subseteq 2^{[n]}$ can be considered as a set of points in $(\mathbb{F}_2)^n$, 
and the vanishing ideal $I_\cC$ of that variety is a ``pseudo-monomial ideal''
with many similarities to a monomial ideal. In this section,
we show that this connection can be made more explicit via the polar
complex.

First, we state necessary  prerequisites about the neural ring.
Let $ \mathbb F_2$ denote  the field with two elements, and consider the polynomial ring $ R \od \mathbb F_2 [x_1,\dots, x_n]$. 
A polynomial $f\in R$ can be considered as a function $f:2^{[n]}\to \mathbb F_2$ by defining $f(\s)$ as the evaluation of $f$ with $x_i = 1$ for $i \in \s$ and $x_i = 0$ for $i \not\in\s$.
Polynomials of the form 
	\[x^{\sigma} (1-x)^{\tau}\, \od \prod_{i \in \sigma } x_i \prod_{j \in \tau}(1-x_j),\]
where $\sigma, \tau \subseteq[n]$,  are said to be {\em pseudo-monomials}.
Note that the pseudo-monomial $x^\s(1-x)^{[n]\sm\s}$ evaluates to 1 if and only if the support of $x$ equals $\s$; such a pseudo-monomial is called the \emph{indicator function} of $\s$.

\begin{defn}[\cite{neuralring}]
	The \emph{vanishing ideal} of a code $\cC\subseteq2^{[n]}$ is the ideal of polynomials that vanish on all codewords of $\cC$,
	\begin{align*}
		I_\cC & \od \{f \in R \mid f(\s) = 0 \text{ for all } \s\in\cC\}. 
	\intertext{The \emph{neural ideal} of $\cC$ is the ideal generated by indicator functions of non-codewords,}
		J_\cC & \od   \left\langle  x^{\sigma} (1-x)^{[n]\setminus \sigma }\mid \sigma \notin \cC \right\rangle. 
	\intertext{The \emph{boolean ideal} of $\cC$ is the ideal generated by the \emph{boolean relations}, pseudo-monomials with $\s = \t = i$,}
		\mathcal{B} & \od \left \langle x_i (1-x_i) \mid i \in [n] \right \rangle.
	\end{align*}
\end{defn}

\begin{lem}[{\cite[Lemma 3.2]{neuralring}}]
	Let $\cC$ be a neural code. Then $I_\cC = J_\cC + \cB$.
\end{lem}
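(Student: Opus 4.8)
The plan is to prove the two inclusions $J_\cC + \cB \subseteq I_\cC$ and $I_\cC \subseteq J_\cC + \cB$ separately; the first is immediate, and the second is where the real content lies.

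For $J_\cC + \cB \subseteq I_\cC$: each generator $x_i(1-x_i)$ of $\cB$ vanishes at every point of $(\mathbb{F}_2)^n$, since $x_i$ only takes the values $0$ and $1$, so in particular it vanishes on $\cC$; hence $\cB \subseteq I_\cC$. Each generator $x^\sigma(1-x)^{[n]\sm\sigma}$ of $J_\cC$, with $\sigma \notin \cC$, is the indicator function of $\sigma$: it evaluates to $1$ at the point of support $\sigma$ and to $0$ at every other point of $(\mathbb{F}_2)^n$. Since $\sigma \notin \cC$, this point is not a codeword, so the pseudo-monomial vanishes on all of $\cC$, giving $J_\cC \subseteq I_\cC$. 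Adding, $J_\cC + \cB \subseteq I_\cC$.

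For the reverse inclusion, take $f \in I_\cC$. The first step is to reduce modulo $\cB$ to a multilinear (squarefree) representative: since $x_i^2 + x_i = x_i(1-x_i) \in \cB$ over $\mathbb{F}_2$, we may repeatedly replace $x_i^2$ by $x_i$ to obtain a multilinear polynomial $\bar{f}$ with $f \equiv \bar{f} \pmod{\cB}$, i.e.\ $\bar{f}$ lies in the $\mathbb{F}_2$-span of $\{x^\tau \mid \tau \subseteq [n]\}$. Because $\cB \subseteq I_\cC$, the difference $f - \bar f$ lies in $I_\cC$, so $\bar{f} \in I_\cC$ as well. The key step is then the claim that the $2^n$ indicator functions $\{x^\sigma(1-x)^{[n]\sm\sigma} \mid \sigma \subseteq [n]\}$ form an $\mathbb{F}_2$-basis of the space of multilinear polynomials: they are all multilinear, their number matches the dimension $2^n$ of that space (which has basis $\{x^\tau\}$), and they are linearly independent because each is supported on a distinct single point of $(\mathbb{F}_2)^n$ — equivalently, evaluation of multilinear polynomials at the $2^n$ points is a linear isomorphism onto the space of all functions $(\mathbb{F}_2)^n \to \mathbb{F}_2$. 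Writing $\bar{f} = \sum_{\sigma \subseteq [n]} c_\sigma\, x^\sigma(1-x)^{[n]\sm\sigma}$ and evaluating at the point of support $\sigma$ gives $c_\sigma = \bar{f}(\sigma)$, which is $0$ whenever $\sigma \in \cC$. Hence $\bar{f} = \sum_{\sigma \notin \cC} c_\sigma\, x^\sigma(1-x)^{[n]\sm\sigma} \in J_\cC$, and therefore $f = \bar{f} + (f - \bar{f}) \in J_\cC + \cB$.

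The main obstacle is the linear-algebra claim that the indicator functions of all subsets form a basis of the multilinear polynomials (equivalently, that evaluation at the points of $(\mathbb{F}_2)^n$ is an isomorphism); once this is in hand, the identification of $\bar f$ with an element of $J_\cC$ is forced, and the rest is routine bookkeeping with the ideal $\cB$. I expect this to be a short dimension-count-plus-independence argument rather than anything delicate.
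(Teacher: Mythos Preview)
Your proof is correct. Note that the paper does not supply its own proof of this lemma; it merely cites \cite[Lemma 3.2]{neuralring}. Your argument is the standard one: reduce modulo the Boolean relations to a multilinear representative, then use that the $2^n$ indicator pseudo-monomials form a basis of the multilinear polynomials (equivalently, that evaluation at $(\mathbb F_2)^n$ is an isomorphism), so the coefficients are read off as function values and those at codewords vanish. There is nothing missing and no gap.
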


\noindent Pseudomonomials in the vanishing ideal $I_\cC$ correspond to relations of the form $\bigcap_{i\in\s} U_i \subseteq \bigcup_{j\in\t} U_j$ among sets in any cover realizing $\cC$.

\begin{lem}[{\cite[Lemma 4.2]{neuralring}}]\label{L:ICmembership}
	Let $\cC = \code(\cU,X)$ be a combinatorial code. Then
		\[ x^\s (1-x)^\t \in I_\cC \iff \bigcap_{i\in\s}U_i \subseteq \bigcup_{j\in\t}U_j, \]
	where by convention $\bigcap_{i\in\varnothing} U_i = X$ and $\bigcup_{j\in\varnothing} U_j = \varnothing$.
\end{lem}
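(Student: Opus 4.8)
The plan is to prove both directions of the equivalence simultaneously by rewriting each side as the (non-)existence of a codeword of a prescribed form, and then observing that the two existence statements coincide; in effect I would exhibit a single chain of ``if and only if''s running from $x^\s(1-x)^\t\in I_\cC$ down to $\bigcap_{i\in\s}U_i\subseteq\bigcup_{j\in\t}U_j$. First I would unwind the algebraic side: for a codeword $\gamma\in\cC$, the value $(x^\s(1-x)^\t)(\gamma)$ is $\prod_{i\in\s}\gamma_i\prod_{j\in\t}(1-\gamma_j)$ in $\mathbb F_2$ (with $\gamma_i=1$ for $i\in\gamma$ and $\gamma_i=0$ otherwise), which equals $1$ exactly when $\s\subseteq\gamma$ and $\gamma\cap\t=\varnothing$ and equals $0$ otherwise; the empty-product convention covers $\s=\varnothing$ and $\t=\varnothing$. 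Hence, directly from the definition of $I_\cC$, we get that $x^\s(1-x)^\t\in I_\cC$ if and only if \emph{no} $\gamma\in\cC$ satisfies both $\s\subseteq\gamma$ and $\gamma\cap\t=\varnothing$.

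Next I would unwind the geometric side. For $x\in X$ set $\gamma(x)\od\{i\in[n]\mid x\in U_i\}$. Then the containment $\bigcap_{i\in\s}U_i\subseteq\bigcup_{j\in\t}U_j$ \emph{fails} precisely when there is a point $x\in X$ with $x\in U_i$ for every $i\in\s$ and $x\notin U_j$ for every $j\in\t$, that is, with $\s\subseteq\gamma(x)$ and $\gamma(x)\cap\t=\varnothing$; the conventions $\bigcap_{i\in\varnothing}U_i=X$ and $\bigcup_{j\in\varnothing}U_j=\varnothing$ make this reading correct in the degenerate cases as well.

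The bridge between the two sides is the elementary observation that $\gamma\subseteq[n]$ is a codeword of $\code(\cU,X)$ if and only if $\gamma=\gamma(x)$ for some $x\in X$: by definition $\gamma\in\cC$ iff $A^{\cU}_\gamma\neq\varnothing$, and $x\in A^{\cU}_\gamma$ is equivalent to $\gamma(x)=\gamma$. Therefore the statement obtained in the first paragraph (``no $\gamma\in\cC$ has $\s\subseteq\gamma$ and $\gamma\cap\t=\varnothing$'') and the statement obtained in the second (``no $x\in X$ has $\s\subseteq\gamma(x)$ and $\gamma(x)\cap\t=\varnothing$'') are literally the same, and the lemma follows by combining the three.

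Since the argument is a routine unwinding of definitions, I do not anticipate any real obstacle; the only points that deserve a moment of care are the degenerate cases $\s=\varnothing$ and $\t=\varnothing$ (handled by the stated conventions) and the bookkeeping in the bridge step that $x\mapsto\gamma(x)$ is a surjection $X\twoheadrightarrow\cC$. I would therefore present the whole proof compactly as the single chain of equivalences described above, rather than as two separate implications.
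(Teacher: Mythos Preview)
Your argument is correct: the chain of equivalences you describe is exactly the standard unwinding of the definitions, and the bridge via $x\mapsto\gamma(x)$ being a surjection onto $\cC$ is the right way to link the algebraic and geometric sides. Note, however, that the paper does not actually prove this lemma; it is quoted verbatim as \cite[Lemma~4.2]{neuralring} and left without proof, so there is no ``paper's own proof'' to compare against---your write-up would serve perfectly well as the omitted verification.
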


\noindent In particular, the generators of $\cB$ correspond to the {tautological relations} $U_i \subseteq U_i$. The neural ideal records the non-tautological relations.

\begin{defn}[\cite{neuralring}]
	A {pseudo-monomial} $f \in J_{\cC}$ is said to be \emph{minimal} if there is no other pseudo-monomial $g \in J_\cC$  that   divides $f$.
	The \emph{canonical form} of $J_\cC$, denoted $CF(J_\cC)$, is the set of all the minimal pseudo-monomials in $J_\cC$.
  \end{defn}

The elements of the canonical form correspond to the minimal nontrivial relations $\bigcap_{i\in\s} U_i \subseteq \bigcup_{j\in\t} U_j$.
We will see that the canonical form of $J_\cC$ and the Boolean relations also corresponds with the generating set of the Stanley-Reisner ideal of $\bd(\cC)$.
We make these relationships explicit in \refL{canonical} and \refC{dualrelations}.

The Stanley-Reisner correspondence associates to any simplicial complex on $n$ vertices an ideal generated by square-free monomials in a polynomial ring in $n$ variables \cite{stanleyCCA}.
The construction of the polar complex is seen to be particularly natural when considering its associated Stanley-Reisner ideal.
For the unbarred vertices, we set the corresponding variables via $i \mapsto x_i$; for the barred vertices, we associate $\bar{i} \mapsto y_i$.
The Stanley-Reisner ideal of $\bd(\cC)$ is the ideal in $S \od \mathbb F_2 [x_1,\dots,x_n,y_1,\dots,y_n]$ generated by the squarefree monomials indexed by \emph{non}-faces of $\bd(\cC)$. 

\begin{defn}
	Let $\cC\subseteq 2^{[n]}$ be a combinatorial code. The Stanley-Reisner ideal of the polar complex is given by
		\[ I_{\bd(\cC)} = \langle x^\s y^\t \mid \s\sqcup\br{\t}\not\in\bd(\cC) \rangle \subseteq S. \]
\end{defn}

\begin{ex}\label{ex:stanleyreisner}
	Consider the code $\cC_1 = \{1, 12, 123, 2, 23\}$ from \refEx{code1}. 
	The corresponding variety in $\mathbb{F}_2^3$ is $\{100, 110, 111, 010, 011\}$ with canonical form given by
		\[ CF(J_{\cC_1}) = \{(1-x_1)(1-x_2),\; x_3(1-x_2)\}. \]
	The polar complex of $\cC_1$ is given by
		\[ \bd(\cC_1) = \D(\{ 1\bar2\bar3,\, 12\bar3,\, 123,\, \bar12\bar3,\, \bar123 \}). \]
	The minimal nonfaces of $\bd(\cC_1)$ are $\{1\bar1,\, 2\bar2,\, 3\bar3,\, \bar1\bar2,\, \bar23\}$. This gives the Stanley-Reisner ideal
		\[ I_{\bd(\cC_1)} = \langle x_1y_1,\, x_2y_2,\, x_3y_3,\, y_1y_2,\, x_3y_2\rangle. \]
	The first three monomials in this list correspond to the Boolean relations, while the last two can be compared to the canonical form.
\end{ex}

\noindent The intuition intimated by Example~\ref{ex:stanleyreisner} holds true in general. 

\begin{lem}\label{L:canonical}
	For any nonempty combinatorial code $\cC \subseteq 2^{[n]}$, the Stanley-Reisner ideal of the polar complex is induced by the canonical form and the Boolean relations.
	That is,
		\begin{align}
			x^\s y^\t \in I_{\bd(\cC)} \iff x^\s(1-x)^\t \in I_\cC.\label{eq:IC-IGC}
		\end{align}
	and so
		\begin{align}
			I_{\bd(\cC)} = \langle\, x^\sigma y^\tau \mid x^\s(1-x)^\t \in CF(J_\cC)\,\rangle + \langle\, x_i y_i \mid i \in [n] \,\rangle. \label{eq:IgammaC}
		\end{align}
\end{lem}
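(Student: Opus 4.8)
The plan is to prove the equivalence \eqref{eq:IC-IGC} first, since the formula \eqref{eq:IgammaC} follows from it together with the definitions of canonical form and Boolean ideal. For the equivalence, I would unwind both sides in terms of the combinatorics already available. On the left, $x^\s y^\t \in I_{\bd(\cC)}$ means, by definition of the Stanley--Reisner ideal, that $\s \sqcup \br\t$ contains some non-face of $\bd(\cC)$; equivalently, $\s \sqcup \br\t \notin \bd(\cC)$ (using that $\bd(\cC)$ is a simplicial complex, so a set is a face iff all its subsets are). On the right, by \refL{ICmembership} applied to \emph{any} realization $(\cU,X)$ of $\cC$ — or more directly by unwinding $I_\cC = J_\cC + \cB$ — we have $x^\s(1-x)^\t \in I_\cC$ iff every codeword $\rho \in \cC$ has either $\s \not\subseteq \rho$ or $\t \cap \rho \neq \varnothing$; that is, there is no codeword $\rho$ with $\s \subseteq \rho$ and $\t \cap \rho = \varnothing$.

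The core of the proof is then to match these two conditions directly through the definition of $\S(\rho) = \rho \sqcup \br{[n]\sm\rho}$. First I would handle the case $\s \cap \t \neq \varnothing$ separately: then $x^\s(1-x)^\t$ lies in $\cB$ (it is divisible by some $x_i(1-x_i)$), hence in $I_\cC$; and on the other side $\s \sqcup \br\t$ contains the non-face $\{i,\br i\}$ for $i \in \s\cap\t$, so $x^\s y^\t \in I_{\bd(\cC)}$. Both sides hold, so the equivalence is trivially true here. For the remaining case $\s \cap \t = \varnothing$, the set $\s \sqcup \br\t$ is a valid candidate face, and I claim $\s\sqcup\br\t \in \bd(\cC)$ iff there exists $\rho \in \cC$ with $\s \subseteq \rho$ and $\rho \cap \t = \varnothing$. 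Indeed, $\s\sqcup\br\t \in \bd(\cC)$ iff $\s\sqcup\br\t \subseteq \S(\rho)$ for some $\rho\in\cC$; and $\s\sqcup\br\t \subseteq \S(\rho) = \rho\sqcup\br{[n]\sm\rho}$ holds iff $\s \subseteq \rho$ (matching unbarred parts) and $\t \subseteq [n]\sm\rho$, i.e. $\t\cap\rho = \varnothing$ (matching barred parts). Negating both sides yields exactly the equivalence \eqref{eq:IC-IGC}.

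Finally, for \eqref{eq:IgammaC}: by \eqref{eq:IC-IGC}, $I_{\bd(\cC)}$ is generated by all monomials $x^\s y^\t$ with $x^\s(1-x)^\t \in I_\cC$. Since $I_\cC = J_\cC + \cB$, any pseudo-monomial in $I_\cC$ is divisible either by a generator of $\cB$ (some $x_i(1-x_i)$) or by a minimal pseudo-monomial of $J_\cC$, i.e. an element of $CF(J_\cC)$; the corresponding monomial $x^\s y^\t$ is then divisible by $x_iy_i$ or by the monomial image of that canonical form element, respectively. Hence the monomials coming from $CF(J_\cC)$ together with the $x_iy_i$ generate $I_{\bd(\cC)}$, which is \eqref{eq:IgammaC}. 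Here I would take care to cite the fact from \cite{neuralring} that every pseudo-monomial of $J_\cC$ is a multiple of one in $CF(J_\cC)$, and to note the monomialization map $x^\s(1-x)^\t \mapsto x^\s y^\t$ respects divisibility among pseudo-monomials with disjoint $\s,\t$.

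I expect the main obstacle to be purely bookkeeping: being careful that the definition of $I_{\bd(\cC)}$ uses non-faces (not just \emph{minimal} non-faces) as generators, so that the "iff" at the level of individual monomials $x^\s y^\t$ is clean, and checking that the $\s\cap\t\neq\varnothing$ degenerate case is consistently handled on both sides. There is no deep difficulty — the statement is essentially a dictionary between the Boolean-lattice description of $I_\cC$ and the simplicial-complex description of $\bd(\cC)$, mediated by the map $\rho \mapsto \S(\rho)$.
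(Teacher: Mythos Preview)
Your proposal is correct and essentially the same as the paper's proof: both reduce \eqref{eq:IC-IGC} to the observation that $\s\sqcup\br\t$ is a non-face of $\bd(\cC)$ iff no codeword $\rho\in\cC$ has $\s\subseteq\rho$ and $\t\cap\rho=\varnothing$, and then derive \eqref{eq:IgammaC} from divisibility of every pseudo-monomial in $I_\cC$ by either a Boolean relation or an element of $CF(J_\cC)$. Your separate handling of the degenerate case $\s\cap\t\neq\varnothing$ is a harmless extra precaution that the paper simply omits.
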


\begin{proof}[Proof of \refL{canonical}]
	Consider a square-free monomial $x^\s y^\t \in S$.
	By definition, $x^\s y^\t \in I_{\bd(\cC)}$ if and only if $\s\sqcup\br{\t}$ is a nonface of $\bd(\cC)$.
	The set $\s\sqcup\br{\t}$ is a nonface of $\bd(\cC)$ if and only if any codeword in $\cC$ which contains $\s$ is not disjoint from $\t$, that is, $\cC$ satisfies the following property:
		\begin{align}
			\text{for all $\a\in\cC$,} \qquad \s\subseteq \a \quad \implies \quad \a \cap \t \neq\varnothing.\label{eq:Cproperty}
		\end{align}
	If $\cC$ satisfies \eqref{eq:Cproperty}, the pseudomonomial $x^\s(1-x)^\t$ vanishes on all of $\cC$, as $x^\s$ evaluates to 0 on any codeword not containing $\s$, and $(1-x)^\t$ evaluates to 0 on any codeword not disjoint from $\t$, e.g.\ any codeword containing $\s$.
	Conversely, if $x^\s(1-x)^\t$ vanishes on all of $\cC$, every codeword that contains $\s$ must not be disjoint from $\t$, so $\cC$ satisfies \eqref{eq:Cproperty}.
	Therefore, $x^\s(1-x)^\t \in I_\cC$.
	Thus we have established \eqref{eq:IC-IGC} and \eqref{eq:IgammaC} follows, as any pseudomonomial in $I_\cC$ is divisible either by $x_i(1-x_i)$ for some $i$, or by an element of the canonical form $CF(J_\cC)$.
\end{proof}

\noindent The following is an immediate corollary of \refL{ICmembership} and \refL{canonical}.

\begin{cor}\label{C:dualrelations}
	Let $\cC = \code(\cU,X) \subseteq 2^{[n]}$ and $I_{\bd(\cC)}$ the Stanley-Reisner ideal of the polar complex of $\cC$. Then
		\[ x^\s y^\t \in I_{\bd(\cC)} \iff \bigcap_{i\in\s} U_i \subseteq \bigcup_{j\in\t} U_j. \]
\end{cor}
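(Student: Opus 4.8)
The plan is to obtain \refC{dualrelations} as a formal consequence of the two equivalences already established, with no new geometric argument. By \refL{canonical}, the membership $x^\s y^\t \in I_{\bd(\cC)}$ is equivalent to the purely combinatorial condition $x^\s(1-x)^\t \in I_\cC$; this is the step that turns the statement that $\s\sqcup\br{\t}$ is a nonface of $\bd(\cC)$ into property \refEq{Cproperty} and matches it against vanishing of the pseudomonomial on every codeword of $\cC$. By \refL{ICmembership} (namely \cite[Lemma 4.2]{neuralring}), membership $x^\s(1-x)^\t \in I_\cC$ is in turn equivalent to the containment $\bigcap_{i\in\s}U_i \subseteq \bigcup_{j\in\t}U_j$ in any cover $\cU$ realizing $\cC = \code(\cU,X)$. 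Concatenating the two,
\[
	x^\s y^\t \in I_{\bd(\cC)} \iff x^\s(1-x)^\t \in I_\cC \iff \bigcap_{i\in\s}U_i \subseteq \bigcup_{j\in\t}U_j,
\]
which is exactly the assertion to be proved. So the argument is essentially one line naming \refL{canonical} and \refL{ICmembership}.

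The remaining work is only bookkeeping about hypotheses. \refL{canonical} is stated for nonempty codes, so I would either assume $\cC\neq\varnothing$ (harmless, as $\cC=\code(\cU,X)$ is empty only in the degenerate case $X=\varnothing$, where both sides of the claimed equivalence hold vacuously) or handle that case separately. I would also note that $\s$ and $\t$ need not be disjoint: if some $i\in\s\cap\t$ then $x_iy_i$ divides $x^\s y^\t$ and $x_iy_i\in I_{\bd(\cC)}$, since $\{i,\br{i}\}$ is a nonface of every polar complex, so the left-hand side holds; and $\bigcap_{k\in\s}U_k\subseteq U_i\subseteq\bigcup_{j\in\t}U_j$, so the right-hand side holds; thus the equivalence is automatically consistent in this case and needs no separate check. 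Finally I would flag that \refL{ICmembership}, and the polar-complex side too, use the conventions $\bigcap_{i\in\varnothing}U_i=X$ and $\bigcup_{j\in\varnothing}U_j=\varnothing$, which is precisely what makes the corner cases $\s=\varnothing$ or $\t=\varnothing$ agree on both sides.

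I do not expect any genuine obstacle: all of the content has already been discharged in the proofs of \refL{canonical} (which does the polar-complex combinatorics) and \refL{ICmembership} (imported from \cite{neuralring}), and \refC{dualrelations} is literally the composition. Its role is to record explicitly the dictionary between squarefree monomials of the Stanley-Reisner ideal $I_{\bd(\cC)}$ and set-theoretic containments among the sets $U_i$, which is then used to detect sphere link obstructions. Accordingly, the writeup should be short: state the chain of equivalences, cite the two lemmas, and mention the nonemptiness convention.
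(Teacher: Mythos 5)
Your proposal matches the paper's proof exactly: the paper presents \refC{dualrelations} as an immediate corollary of \refL{ICmembership} and \refL{canonical}, composing precisely the two equivalences you name. Your extra bookkeeping on nonemptiness and the $\s\cap\t\neq\varnothing$ case is correct and consistent with the intended reading, though the paper leaves it implicit.
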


\subsection{Sphere link obstructions and multigraded free resolutions}

In \refS{spherelinkobstructions}, we showed that
$\link_{\S}(\bd(\cC))$ is either empty,
collapsible,
or is isomorphic to a sphere of 
dimension $n-|\S|-1$ when $\cC$ is a \nd\ hyperplane code. One consequence of this fact is that if
a \nd\ hyperplane realization of $\cC$ exists, then
a lower bound on the dimension of the realizing space is
\[ d \geq \max_{\S\in \bd(\cC)}\Bigl\{ (n- |\S|) \mid \link_{\S}(\bd(\cC)) \sim S^{n-|\S|-1}\Bigr\}.\]
However, this may not be the true lower bound.

\begin{figure}[h!]
	\begin{tabular}{c @{\hspace*{1cm}} c}
		(a) \raisebox{-0.5\height}{\includegraphics[height=3cm]{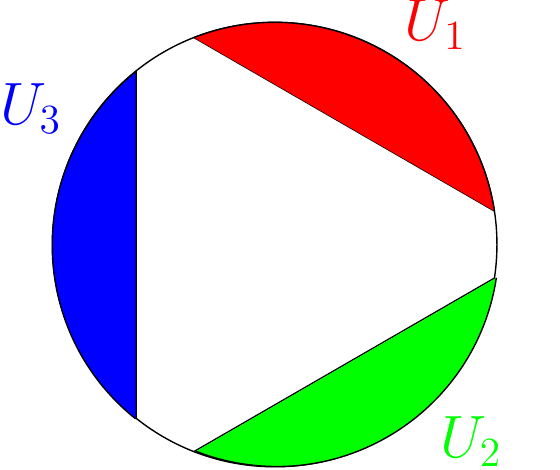}} &
		(b) \raisebox{-0.5\height}{\includegraphics[height=3cm,page=8]{polarcomplex_examples.pdf}}
	\end{tabular}
	\caption{(a)~Realization of $\cC_4 = \{\varnothing, 1,2,3\}$ in $\mathbb{R}^2$.
	Though sphere link dimension is 1, minimal realization dimension is 2. (b)~The polar complex $\bd(\cC_4)$. The only non-collapsible links are of the form $\link_{\br{i} \br{j}}\bd(\cC)$ }
	\label{fig:dimEx}
\end{figure}

\begin{ex}
Consider the code $\cC = \{\varnothing, 1,2,3\}$ consisting of 
four words; this can be realized by hyperplanes in $\R^2$ as in Figure~\ref{fig:dimEx}.
Still, the polar complex $\bd(\cC)$ 
has facets $\overline{123},1\overline{23},\bar{1}2\bar{3},\overline{12}3$,
which has spherical links only at $\S = \{\bar{i},\bar{j}\}$ for $i \neq j \in
\{1,2,3\}$. This might lead us to infer that the minimal realizing dimension
is $n - |\S|=  3 - 2 = 1$; however, it is easy
to prove that it is impossible to realize
by hyperplanes in $\mathbb{R}^1$.

\end{ex}

Another consequence of the sphere link property (\refPp{ndhcimpliesnoslo}) relates to algebraic properties of the Stanley-Reisner ring. 
The dual version of Hochster's formula relates the multigraded minimal free resolution of the Stanley-Reisner ideal to the simplicial homology of the corresponding complex.
A full exposition of minimal free resolutions is beyond the scope of this article, so we give a brief description and direct the reader to \cite[Chapter 1]{CCA} for more information. 

The \emph{multidegree} of a monomial $\left(\prod_{i=1}^n x_i^{a_i}\prod_{j=1}^n y_j^{b_j}\right)\in S$ is the vector of exponents $(a,b) = (a_1,\dots,a_n,b_1,\dots,b_n) \in \mathbb{N}^{2n}$. 
When the exponents are all 0 or 1, we identify the the multidegree with its support as a subset of $[n]\sqcup\br{[n]}$.
The \emph{coarse degree} of a monomial is the sum of the exponents $\sum_{i=1}^n a_i + \sum_{j=1}^n b_j \in \mathbb{N}$.
For a homogeneous ideal $I\subset S$, a \emph{minimal free resolution} of $S/I$ is an exact sequence of free modules that terminates in $S/I \to 0$.
Each module in the minimal free resolution of $S/I$
can be \emph{multigraded} so that each map in the resolution
preserves multidegree.
The \emph{multigraded Betti number} of $S/I$, $\beta_{i,\sigma} = \beta_{i,\sigma}(S/I)$, is the rank of the free module in position $i$ in the free resolution and with multidegree $\sigma$.
Importantly for our purposes, these Betti numbers can be explicitly computed with {\tt Macaulay2} \cite{M2} and similar computational algebra software.

\begin{lem}[Hochster's formula, dual version {\cite[Corollary 1.40]{CCA}}]\label{hochster}
	For $\bd(\cC)$ the polar complex of a code $\cC\subseteq 2^{[n]}$ and $\S$ a face of $\bd(\cC)$,
	\[ \beta_{i+1,\S^c}(S/I_{\bd(\cC)^\vee}) = \dim_k \tilde{H}_{i - 1} (\link_{\S}{\bd(\cC)} ; k). \]
	Here $\S^c = ([n]\sqcup\br{[n]})\sm\S$ denotes the complement of $\S$ in the vertex set of $\bd(\cC)$, and $\bd(\cC)^\vee$ denotes the Alexander dual simplicial complex, $\bd(\cC)^\vee \od \{ F^c \mid F \not\in\bd(\cC)\}$.
\end{lem}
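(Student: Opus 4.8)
The statement is a standard result from combinatorial commutative algebra --- indeed it is the cited \cite[Corollary 1.40]{CCA} --- so for the purposes of the paper it is enough to quote it; I nonetheless sketch a derivation. The plan is to combine the \emph{primal} multigraded Hochster formula with combinatorial Alexander duality and a link--restriction identity. First recall primal Hochster: for a squarefree monomial ideal $I_\Delta \subseteq S$, where $\Delta$ is a simplicial complex on $V \od [n]\sqcup\br{[n]}$, the multigraded Betti numbers are concentrated in squarefree degrees, and computing $\operatorname{Tor}^S_\bullet(S/I_\Delta,k)$ from the Koszul complex on $x_1,\dots,x_n,y_1,\dots,y_n$ identifies the strand in multidegree $\tau \subseteq V$ with a shift of the reduced cochain complex of the induced subcomplex $\Delta|_\tau$. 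This yields $\beta_{i,\tau}(S/I_\Delta) = \dim_k \tilde H_{|\tau|-i-1}(\Delta|_\tau;k)$.

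Next I would record the identity, valid for any face $\S \in \bd(\cC)$,
$$\bd(\cC)^\vee\big|_{\S^c} \;=\; \bigl(\link_\S \bd(\cC)\bigr)^\vee ,$$
where $\S^c = V\sm\S$ and the Alexander dual on the right is formed inside the ground set $\S^c$. This is a one-line set-theoretic check: for $G \subseteq \S^c$, one has $G \in \bd(\cC)^\vee|_{\S^c}$ iff $V\sm G \notin \bd(\cC)$ iff $(\S^c\sm G)\cup\S \notin \bd(\cC)$ iff $\S^c\sm G \notin \link_\S\bd(\cC)$, which is precisely the defining condition of the right-hand side.

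With these in hand the proof is pure bookkeeping. Apply primal Hochster with $\Delta = \bd(\cC)^\vee$, $\tau = \S^c$ and homological degree $i+1$, then substitute the link--restriction identity, set $m \od |\S^c| = 2n - |\S|$, and apply combinatorial Alexander duality --- over a field $k$, $\tilde H_j(\Gamma^\vee;k)\cong \tilde H_{m-j-3}(\Gamma;k)$ for $\Gamma$ a complex on a ground set of size $m$ --- with $\Gamma = \link_\S\bd(\cC)$ and $j = |\S^c| - (i+1) - 1 = m - i - 2$. Since $m-j-3 = m-(m-i-2)-3 = i-1$, the chain of equalities collapses to $\beta_{i+1,\S^c}(S/I_{\bd(\cC)^\vee}) = \dim_k \tilde H_{i-1}(\link_\S\bd(\cC);k)$, as claimed.

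There is no genuine obstacle once conventions are fixed; the only delicate points are the low-dimensional edge cases and the tracking of ground sets. One must take the Alexander dual of $\link_\S\bd(\cC)$ relative to the \emph{fixed} set $\S^c$ (which may contain vertices not appearing in the link) --- Hochster's formula and Alexander duality are both compatible with this choice --- and one must adopt the conventions $\tilde H_{-1}(\{\varnothing\}) = k$ and $\tilde H_\bullet(\{\}) = 0$, so that a link equal to a full simplex (whose dual is void) or to the irrelevant complex is handled correctly. The practical payoff, used in the sequel, is that the numbers $\beta_{i+1,\S^c}(S/I_{\bd(\cC)^\vee})$ are all readable off a single minimal free resolution computed by {\tt Macaulay2}.
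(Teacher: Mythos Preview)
Your proposal is correct. In the paper this lemma is not proved at all: it is stated with a citation to \cite[Corollary 1.40]{CCA} and used as a black box, so the paper's ``proof'' is simply the reference. Your sketch --- primal Hochster on $\bd(\cC)^\vee$, the link--restriction identity $\bd(\cC)^\vee|_{\S^c} = (\link_\S\bd(\cC))^\vee$, then combinatorial Alexander duality to flip the homological index --- is the standard derivation and the index bookkeeping checks out, including the edge cases you flag. In short, you have supplied more than the paper does; for the purposes of the paper a one-line citation would suffice, and your added sketch is a bonus rather than a departure from the authors' approach.
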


We use this lemma to detect sphere link obstructions.

\begin{prop}\label{P:betti}
	Let $\cC$ be a \nd\ hyperplane code with polar complex $\bd(\cC)$. Then, \\ $\beta_{i,\s}(S/I_{\bd(\cC)^\vee}) = 0$ for all $i \geq 1$ except:
	\[ \begin{cases}
		\beta_{1,\S^c}(S/I_{\bd(\cC)^\vee}) = 1			& \text{if $\S$ is a facet}.\\
		\beta_{n-|\S|+1,\S^c}(S/I_{\bd(\cC)^\vee}) = 1	& \text{if $\link_\S\bd(\cC) \sim S^{n-|\S|-1}$}.
	\end{cases} \]
    \end{prop}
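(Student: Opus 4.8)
The plan is to combine the dual Hochster formula (\refL{hochster}) with the structural description of links in $\bd(\cC)$ that comes from the sphere link property (\refPp{ndhcimpliesnoslo}) and Proposition~\ref{Prop:37}. The key point is that for a \nd\ hyperplane code, every link $\link_\S\bd(\cC)$ is one of three things: the void/irrelevant situation (when $\S$ is a facet), a collapsible complex, or a sphere of dimension $n-|\S|-1$ (namely $\bd(2^{[n]\sm\underline\S})$, the boundary of a cross-polytope). I would first recall that \refL{hochster} expresses $\beta_{i+1,\S^c}(S/I_{\bd(\cC)^\vee})$ as $\dim_k \tilde H_{i-1}(\link_\S\bd(\cC);k)$, and note that every multidegree $\s$ occurring with nonzero Betti number in the minimal free resolution of $S/I_{\bd(\cC)^\vee}$ must be squarefree (since $I_{\bd(\cC)^\vee}$ is a squarefree monomial ideal, its resolution is supported in squarefree multidegrees), so it suffices to range over $\s = \S^c$ for faces $\S$ of $\bd(\cC)$; if $\S$ is a nonface of $\bd(\cC)$ then $\link_\S\bd(\cC)$ is void and contributes nothing.

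The heart of the argument is then a case analysis on $\S\in\bd(\cC)$. If $\S$ is a facet, then $\link_\S\bd(\cC)$ is the irrelevant complex $\{\varnothing\}$, whose reduced homology is $k$ concentrated in degree $-1$; by \refL{hochster} with $i-1 = -1$, i.e.\ $i = 0$, this gives $\beta_{1,\S^c} = 1$ and no other Betti numbers in multidegree $\S^c$. If $\S$ is not a facet, Proposition~\ref{Prop:37} (via \refL{polarnerve}, identifying $\link_\S\bd(\cC)$ with the nerve $\nerve(\{H_i^+\cap R_\S, H_i^-\cap R_\S\}_{i\not\in\underline\S})$) tells us the link is either collapsible or equals $\bd(2^{[n]\sm\underline\S})$. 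A collapsible complex is contractible, so $\tilde H_*(\link_\S\bd(\cC);k) = 0$ and no Betti numbers occur in multidegree $\S^c$. If instead $\link_\S\bd(\cC) = \bd(2^{[n]\sm\underline\S})$, this is the boundary of the $(n-|\underline\S|)$-dimensional cross-polytope, hence a sphere $S^{n-|\underline\S|-1}$; its only nonvanishing reduced homology is $k$ in degree $n-|\underline\S|-1$. One must be slightly careful here about $|\S|$ versus $|\underline\S|$: since $\S$ is a face of the polar complex it contains at most one of $i,\bar i$ for each $i$, so $|\S| = |\underline\S|$, and the sphere dimension is $n-|\S|-1$, matching the statement. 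Applying \refL{hochster} with $i-1 = n-|\S|-1$, i.e.\ $i = n-|\S|$, yields $\beta_{n-|\S|+1,\S^c} = 1$ and nothing else in that multidegree.

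Finally I would assemble these: since every nonzero multigraded Betti number $\beta_{i,\s}$ with $i\geq 1$ forces $\s = \S^c$ for some face $\S\in\bd(\cC)$, and the case analysis shows each such $\S$ contributes exactly one Betti number (in homological degree $1$ if $\S$ is a facet, in degree $n-|\S|+1$ if the link is a sphere, and none otherwise), we obtain precisely the stated list of exceptions. The main obstacle I anticipate is not any deep step but rather bookkeeping: getting the index shift in Hochster's formula exactly right (the ``$i+1$'' and ``$i-1$'' offsets, and the convention that $\tilde H_{-1}$ of the irrelevant complex is $k$), and making sure the facet case and the sphere case don't secretly overlap or double count — they don't, because a facet $\S$ has void link rather than a positive-dimensional sphere, but this should be stated explicitly. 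A secondary point worth a sentence is justifying that the resolution of $S/I_{\bd(\cC)^\vee}$ has Betti numbers only in squarefree multidegrees, so that the universal quantifier ``for all $i\geq 1$'' in the proposition is genuinely covered by ranging over complements of faces of $\bd(\cC)$.
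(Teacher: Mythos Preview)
Your proposal is correct and follows essentially the same approach as the paper: apply the dual Hochster formula (\refL{hochster}) and invoke Proposition~\ref{Prop:37} (together with the sphere link property) to reduce each $\link_\S\bd(\cC)$ to one of the three homologically trivial-or-spherical cases. Your writeup is in fact more careful than the paper's own proof: the paper does not explicitly justify that only squarefree multidegrees occur, nor the equality $|\S| = |\underline\S|$, whereas you flag both.
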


    \begin{proof}[Proof of \refP{betti}]

Inserting $i=0$ and $\S$ a facet 
into the dual version of Hochster's formula yields
\[\beta_{1,\S^c}(S/I_{\bd(\cC)^\vee}) = \dim_k \tilde{H}^{-1} (\link_{\S}{\bd(\cC)} ; k).\]
The right-hand side is equal to 1, since the link of a 
facet is the irrelevant simplicial complex, which gives a generator of 
$(-1)$-homology. This gives the first equation from the Proposition.

Setting $i = n - |\S|$ and $\S$
a face of $\bd(\cC)$:
\[\beta_{n-|\S|+1,\S}(S/I_{\bd(\cC)^\vee}) = \dim_k \tilde{H}^{n-|\S|-1} (\link_{\S^c}{\bd(\cC)} ; k).\]
The right-hand side is $1$ precisely when the link is a sphere of the
right dimension. 
In all other cases, the link is collapsible 
(Proposition \ref{Prop:37}) or equal to the void complex (links of non-faces), so the reduced homology is zero.
\end{proof}

\noindent This proposition provides an algebraic signature of \nd\ hyperplane codes.

\begin{ex}
	We again consider the code from \refEx{code3}. First, we translate into its
	polar complex $\bd(\cC_3)$, which has eleven facets for its eleven codewords. Then
	we compute the Stanley-Reisner ideal of its Alexander dual, and the Betti numbers
	associated to a minimal free resolution (e.g.\ using {\tt Macaulay2}).

	The table below is a condensed representation of the Betti numbers of $I_{\bd(\cC_3)}$, where the $(i,j)$-th entry is $\beta_{j,i+j}$ under the coarse grading.
	\[ \begin{array}{|r|R{5mm}R{5mm}R{5mm}R{5mm}R{5mm}|}%
	     \hline
	\mbox{\backslashbox[8mm]{\small$i$}{\small$j$}} & 0 & 1 & 2 & 3 & 4 \\ \hline
	0 & 1 &  &   &   &   \\
	1 &  &  &   &   &   \\
	2 &  &  &   &   &   \\
	3 &  & 11 & 16  & 6  &   \\
	4 &  &    & 1   & 2 &\\
	5 &  &    &     &   & 1 \\ \hline  \end{array} \]
	The value of $\beta_{1,4}$ counts the codewords, which are facets of $\bd(\cC)$.
	The remaining entries of row 3 indicate links with the appropriate dimension.
	Rows 4 and 5, under the multigrading, point to the following nonzero Betti numbers:
		\[
		\beta_{2,234\overline{134}} = 1, \hspace{1cm} \beta_{3,234\overline{1234}} = 1, 
		\hspace{1cm} \beta_{3,1234\overline{134}} = 1, \hspace{1cm} \beta_{4,1234\overline{1234}} = 1.
		\]
	Note that the multigrading of each Betti number corresponds to the link of its complement; specifically,
	$234\overline{134} \mapsto 1\bar{2}, 234\overline{1234} \mapsto 1, 1234\overline{134}\mapsto \bar{2}$, and $1234\overline{1234} \mapsto \varnothing$.
	These entries give us the following sphere link obstructions to $\bd(\cC_3)$ being the polar complex of a \nd\ hyperplane code.
	\begin{enumerate}
		\item $\link_{1\bar{2}}\bd(\cC_3) = \D(\{3\bar{4},\bar{3}4\})$, which has two connected components and hence nontrivial reduced homology of rank 1.
		\item $\link_{1}\bd(\cC_3) = \D(\{2\overline{34},\bar{2}3\bar{4}, \overline{23}4,23\bar{4},2\bar{3}4\}) \sim S^1$, which has the wrong dimension.
		\item $\link_{\bar{2}}\bd(\cC_3) = \D(\{\overline{134},\bar{1}3\bar{4}, \overline{13}4,13\bar{4},1\bar{3}4\}) \sim S^1$, which also has the wrong dimension.
		\item $\link_\varnothing \bd(\cC_3) = \bd(\cC_3)$ has nontrivial homology, but $\cC_3 \neq 2^{[4]}$.
	\end{enumerate}

	\noindent Each of these indicates the presence of a sphere link obstruction. 
	Thus, $\cC$ cannot be a {\nd} hyperplane code.

\end{ex}


\section{Proofs of \refPp{nondegenerateshelling} and \refT{shellableimplies}}\label{S:proofs}

\def\HS{H_{sw}}
\def\fs{f_{sw}}
\def\ms{m_{sw}}
\def\us{w_{sw}}
\def\cL{\mathcal L}
\def\cP{\mathcal P}
\subsection{Shellability}\label{S:proofs:shellability}
	The proof of \refPp{nondegenerateshelling} is organized as follows. First, we prove it in the special case $X = \R^d$. To extend the proof to the general case, we prove \nd\ hyerplane codes can be realized by a pair $(\cH,\cP)$ with $\cP$ the interior of a convex polyhedron with bounding hyperplanes $\cB$ such that $(\cH\cup\cB,\R^d)$ is a \nd\ arrangement. Lastly, we use links to consider $\cP$ as a region in $\R^d$, reducing to the special case.
		
		To prove the special case of \refPp{nondegenerateshelling}, we use the following equivalent definition of a shelling order (see, for example, \cite[Chapter III]{stanleyCCA}).

	\begin{defn}\label{D:shellability} 
		Let $\D$ be a simplicial complex and $F_1,\dots,F_t$ an ordering of its facets. The ordering is a shelling order if the sequence of complexes $\D_i = \D(\{F_1,\dots,F_i\})$, for each $i=2,\dots,t$, satisfies the property that the collection of faces $\D_i\sm\D_{i-1}$ has a \emph{unique} minimal element, denoted $r(F_i)$
			and called the \emph{associated minimal face of $F_i$}.
	\end{defn}
	
	\begin{lem}\label{L:Rdshell}
		If $(\cH,\R^d)$ has \subnd\ intersections, then $\bd(\code(\cH,\R^d))$ is shellable.
	\end{lem}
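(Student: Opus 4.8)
The plan is to produce an explicit shelling order on the facets of $\bd(\code(\cH,\R^d))$, which are in bijection with the codewords $\s \in \cC = \code(\cH,\R^d)$, equivalently with the chambers (full-dimensional atoms) of the central-or-affine arrangement $\cH$ in $\R^d$. The natural tool is a \emph{generic linear functional}: pick a vector $w$ such that $w$ is not parallel to any hyperplane $H_i$ and such that the values of $w$ separate the vertices of the arrangement appropriately (a ``sweep'' direction). Sweeping a hyperplane with normal $w$ across $\R^d$ visits the chambers in a linear order $F_1, F_2, \dots, F_t$; this is the classical ``line shelling'' idea adapted to arrangements rather than polytopes. I would verify this is a shelling in the sense of \refD{shellability}: when the sweep enters chamber $A_{\s_i}$, the set of walls of $A_{\s_i}$ already crossed by the sweep is nonempty (for $i>1$) and the faces of $\bd(\cC)$ newly added at step $i$ have a unique minimal element $r(F_i)$, namely the face of the polar complex recording exactly those coordinates on which $A_{\s_i}$ differs from its already-visited neighbors across the ``back'' walls.

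Concretely, each facet $F_i = \S(\s_i)$; for a wall of $A_{\s_i}$ lying on $H_k$, the sweep direction $w$ tells us whether crossing that wall increases or decreases $w$, i.e.\ whether the neighbor on the other side of $H_k$ comes ``before'' or ``after'' $A_{\s_i}$ in the order. Let $\t_i \subseteq [n]$ index the walls of $A_{\s_i}$ whose opposite neighbor precedes $A_{\s_i}$ (the ``down'' walls). I expect $r(F_i)$ to be the face of $\S(\s_i)$ obtained by deleting from $\S(\s_i)$ exactly the vertices (barred or unbarred) corresponding to indices \emph{not} in $\t_i$ — so $r(F_i)$ keeps, for each $k \in \t_i$, the vertex $k$ if $k\in\s_i$ and $\bar k$ if $k\notin\s_i$. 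The two things to check are: (i) every face of $\S(\s_i)$ containing $r(F_i)$ is new at step $i$, because such a face records agreement with $A_{\s_i}$ along all down-walls, and any chamber sharing those constraints and appearing earlier would have to equal $A_{\s_i}$; and (ii) every face of $\S(\s_i)$ \emph{not} containing $r(F_i)$ already appeared, because dropping some $k\in\t_i$ means the face is also a face of the earlier neighbor across $H_k$. Non-emptiness of $\t_i$ for $i>1$ follows because the first chamber in the sweep order is the unique one all of whose walls are ``up'' walls. Genericity of $\cH$ (the \subndy\ hypothesis) is what guarantees the walls of each chamber behave independently, so that these restrictions cut out exactly one chamber and $r(F_i)$ is well-defined; without it a face of $\S(\s_i)$ could be forced to reappear in a more complicated pattern.

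The main obstacle I anticipate is making the sweep rigorous when $\cH$ is not essential or not central — e.g.\ when some hyperplanes are parallel, or when $\bigcap_i H_i \neq \varnothing$ but is positive-dimensional, or when the arrangement has unbounded chambers so the ``sweep'' enters a chamber ``at infinity.'' One clean way around this is to first reduce to the essential central case: if $\bigcap H_i$ has dimension $e>0$, quotient $\R^d$ by the lineality space to get an essential arrangement with the same code and same polar complex; and to handle affine (non-central) arrangements, use the standard cone construction, adding one hyperplane at infinity, reducing to a central arrangement in $\R^{d+1}$, then recover the shelling of the affine code as the ``lower'' part of the shelling of the cone — or, more simply, just choose $w$ generic and argue directly that the total order by $w$-value of any chosen interior point of each chamber works, handling the unbounded chambers by noting the sweep still linearly orders them by their ``lowest'' direction. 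I would present whichever of these is shortest; the line-shelling argument for polytopes in \cite{stanleyCCA} is the template, and the \subndy\ assumption is exactly the genericity needed to run it.
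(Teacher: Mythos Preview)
Your approach is essentially the paper's: a line shelling via a generic sweep direction, with the associated minimal face $r(F_i)$ recording exactly the coordinates where $\sigma_i$ differs from its already-visited neighbors. In fact your set $\tau_i$ of ``down walls'' coincides with the paper's set $\alpha_i$ of hyperplanes through the minimizing vertex of the atom: any wall $H_k$ not through that vertex bounds the atom on the far side of the cone $R_{\Sigma(\sigma_i)|\alpha_i}$, so the neighbor across $H_k$ lies in that cone and hence has strictly larger sweep-minimum; conversely each $H_k$ through the vertex gives a neighbor whose interior meets the sweep hyperplane at time $t_i$, hence precedes $\sigma_i$. So your checks (i) and (ii) are exactly the paper's verification that $r(\Sigma(\sigma_i)) = \Sigma(\sigma_i)|\alpha_i$.

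The one genuine gap is the piece you flagged yourself: the unbounded chambers with $m(\sigma) = -\infty$. Neither of your proposed fixes quite lands. Choosing an interior point in each chamber does not linearly order those chambers (they all extend to $-\infty$ in the sweep direction, and an arbitrary choice of interior point will not produce a shelling-compatible order). The cone construction moves you to a central arrangement in one higher dimension, but every chamber of a central arrangement is an unbounded cone, so you face the same problem again; you would then need a sphere-shelling argument and a careful translation back to the affine polar complex, which is more work than you suggest. The paper's device is cleaner: induct on $d$. The chambers with $m(\sigma) = -\infty$ are exactly those meeting the sweep hyperplane $H(t_0)$ for $t_0$ small enough, and the restricted arrangement $(\{H_i^+ \cap H(t_0)\},\, H(t_0))$ is a \subnd\ arrangement in $\R^{d-1}$ whose codewords are in bijection with the $m = -\infty$ chambers. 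By induction its polar complex is shellable, and that shelling furnishes the initial segment of the shelling of $\bd(\cC)$; the bounded chambers are then appended in $m$-order exactly as you describe.
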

	\begin{proof}
		\def\HS{H}
		\def\HSm{H^-}
		\def\fs{f}
		\def\ms{m}
		\def\us{u}
		\def\cL{\mathcal L}
		\def\pts{\Omega} 
		Let $\cC = \code(\cH,\R^d)$ with $k = |\cC|$ the number of codewords. Without loss of generality, the $w_i$ defining the hyperplanes $H_i$ are unit vectors that span $\R^d$. Recall the notation
			\[ R_F = \bigcap_{i\in F} H_i^+ \cap \bigcap_{\bar{j}\in F} H_j^- \]
		for $F \in \bd(\cC)$.
		Our proof proceeds by induction on $d$, the ambient dimension. An example of the $d=2$ case is illustrated in \refF{shellingproof}.
		
	\begin{figure}[b]
		\begin{center}
			\begin{tabular}{l @{\hspace{2cm}} l}
				(a) \raisebox{-0.5\height}{\includegraphics[height=3cm,page=1]{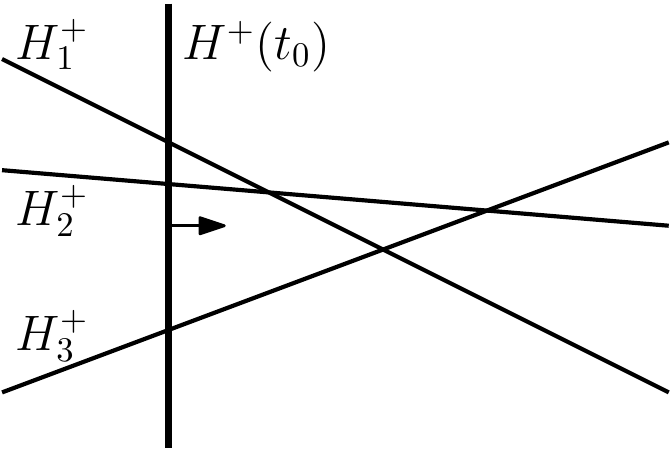}}
				& (d) \raisebox{-0.5\height}{\includegraphics[height=3cm,page=4]{polarcomplex_examples.pdf}}\\
				(b) \raisebox{-0.5\height}{\includegraphics[height=3cm,page=2]{shelling-proof.pdf}}
				& (e) \raisebox{-0.5\height}{\includegraphics[height=3cm,page=5]{polarcomplex_examples.pdf}}\\
				(c) \raisebox{-0.5\height}{\includegraphics[height=3cm,page=3]{shelling-proof.pdf}}
				& (f) \raisebox{-0.5\height}{\includegraphics[height=3cm,page=6]{polarcomplex_examples.pdf}}
			\end{tabular}
			\caption{An example of the shelling order construction in the $d = 2$ case.
			(a)~The atoms discovered at time $t_0$, i.e.\ the atoms $A_\s$ with $\ms(\s) = -\infty$. Note the four atoms of $(\cH,\R^2)$ which intersect $\HS(t_0)$ partition it into four intervals. 
			(b)~As $t$ increases, $\HS(t)$ slides to the right, encountering atoms one at a time. The shaded atom is newly discovered.
			(c)~Uniqueness of $r(\S(\s_6))$ follows because $\mathbf{e}_3\cdot\s_6$ and $\mathbf{e}_1\cdot\s_6$ have already been discovered. 
			(d),(e),(f)~The inductive step and next two steps of the shelling order. The associated minimal face is highlighted with a large mark (panel~(d)) or a dashed line (panels~(e),(f)).
			(d)~The polar complex $\bd(\code(\cL,\HS(t_0)))$. Ordering the four codewords discovered in panel~(a) from top to bottom yields $r(\br{1}2\br{3}) = \br{3}$. 
			(e)~Facet $123$ is added when $\HS(t)$ contains the intersection $H_1\cap H_2$ (panel (b)), thus $r(123) = 12$. 
			(f)~Atom $A_{12}$ is discovered when $\HS(t)$ contains $H_1\cap H_3$ (panel~(c)). Thus $r(12\br{3}) = 1\br{3}$.} 
			\label{F:shellingproof}
		\end{center}
	\end{figure}
		
		The base case $d = 1$ is straightforward and guides the intuition for the general case. We order the codewords of $\cC$ in a natural way based on their atoms, and show the corresponding ordering of facets of $\bd(\cC)$ is a shelling order. Each half-space $H_i^+$ is defined by an inequality of the form $x > h_i$ or $-x > h_i$ (i.e.\ $w_i = \pm1$ for all $i$). Each atom $A_\s$ has nonempty interior $(a_\s,b_\s)$ with $a_\s = h_{i_\s}$ for some $i_\s \in [n]$, with one exception, where $a_\s = -\infty$. Order the codewords $\s_1,\dots,\s_k$ in increasing order of $a_\s$. This is a shelling order: when we add facet $\S(\s)$ to our simplicial complex, this is the first time a facet contains $i_{\s}$ if $w_i = 1$, otherwise it's the first time a facet contains $\br{i}_\s$. In other words, $\s$ is the first codeword in this order which contains $i$ if $w_i = 1$ or the first codeword which does not contain $i$ if $w_i = -1$; all later atoms lie on the same side of the hyperplane $H_i$. See \refF{shellingproof}(d). Thus, every facet of $\bd(\cC)$ has an associated minimal face and this ordering is a shelling order.
		
		Now consider $d > 1$. Denote by $\pts(\cH)$ the set of points where $d$ hyperplanes intersect. We choose a generic ``sweep'' direction, a vector $\us \in \R^d$ which satisfies the following  properties:
			\begin{enumerate}[label={(\roman*)},leftmargin=*]
				\item\label{property:independent} $\us$ is not in the span of any $(d-1)$-element subset of $\{w_1,\dots,w_n\}$.
				\item\label{property:injective} For every pair of distinct points $x,y$ in $\pts(\cH)$, $\us$ is not in the orthogonal complement $(x - y)^\perp$.
			\end{enumerate}
		Such a $\us$ exists because we exclude finitely many subsets of measure zero from $\R^d$. We use $\us$ to define a sliding hyperplane $\HS(t)$ and its corresponding ``discovery time'' function $\ms:\cC \to \R\cup\{-\infty\},$
		\begin{align*}
			\HS(t) & = \{x\in\R^d \mid \us\cdot x - t = 0\}\\
			\ms(\s) & = \inf\{\us\cdot x \mid x \in A_\s\}.
		\end{align*}
		In the $d=1$ case, $m(\s)= a_\s$ and thus induces a total order on codewords.
		For the $d>1$ case, the goal is once again to use $\ms$ to order the codewords.
		To do this,
		\textbf{(1)}~we order the codewords with $\ms(\s) = -\infty$ inductively, then 
		\textbf{(2)}~we show $\ms$ is injective on the remaining codewords, and lastly, 
		\textbf{(3)}~we show every facet has an associated minimal face.

		\textbf{(1)} By construction, $\cH \cup \{\HS^+(t)\}$ is a \nd\ arrangement in $\R^d$ for all but finitely many values of $t$, specifically, the values where $\HS(t)$ contains a point in $\pts(\cH)$. Let $t_0$ be a constant less than all of these values (see \refF{shellingproof}(a) for an illustration). Property \ref{property:independent} ensures $H_i^+ \cap \HS(t_0) \neq \varnothing$ for all $i$, so in particular $\cL \od \{H_i^+ \cap \HS(t_0)\}$ is a {\nd} arrangement in $\HS(t_0) \cong \R^{d-1}$. By inductive hypothesis, $\bd(\code(\cL, \HS(t_0)))$ is shellable. Each nonempty atom of the arrangement $(\cL, \HS(t_0))$ is the intersection of an atom of $(\cH,\R^d)$ with $\HS(t_0)$, and the corresponding codewords are precisely those with $\ms(\s) = -\infty$. Thus, we have an ordering for these codewords which is an initial segment of a shelling of $\bd(\cC)$ (\refF{shellingproof}(d)).
		
		\textbf{(2)} Let $\s\in\cC$ be a codeword with $\ms(\s) > -\infty$. The function $f(x) = \us\cdot x$ is minimized along a face of the (closure of) polyhedron $R_{\S(\s)}$; property \ref{property:independent} ensures this face is a vertex, which is an element of $\pts(\cH)$. Property \ref{property:injective} ensures $f|_{\pts(\cH)}$ is injective. Therefore, $\ms$ induces a total order on codewords $\s$ with $\ms(\s) > -\infty$. Let $\s_1,\dots,\s_k$ be the ordering of codewords of $\cC$ obtained appending this ordering to the order from \textbf{(1)}. We will show each facet has an associated minimal face to complete the proof.
		
		\textbf{(3)} Denote $\bd_i \od \bd(\{\s_1,\dots,\s_i\})$ for $i=1,\dots,k$. From \textbf{(1)}, $r(\S(\s_i))$ is defined whenever $\ms(\s_i) = -\infty$. So, let $\s_i$ be a codeword with with $t_i = \ms(\s_i) > -\infty$, meaning there is a vertex of $R_{\S(\s_i)}$ minimizing $f$. This vertex is an element of $\pts(\cH)$, i.e.\ it is the intersection $H_{\a_i}$ of $d$ hyperplanes (see \refF{shellingproof}(b) and (c)).
		For $F\in\bd(\cC)$ and $\a\subseteq[n]$, we denote
			\[ F|\a \od F \cap (\a \sqcup \br{\a}), \]
		the subset of $F$ with support $\a$.
		We claim $r(\S(\s_i)) = \S(\s_i)|\a_i$ 
		(see \refF{shellingproof}(e) and (f)).
		The region $R_{\S(\s_i)|\a_i}$ is a cone supported by $\HS(t_i)$, so this is the first codeword in our order with this exact combination of ``on'' and ``off'' vertices indexed by $\a_i$. Thus, $\S(\s_i)|\a_i \in \bd_i \sm \bd_{i-1}$.
		
		Now consider $F = \S(\s_i)|\b \in \bd_i \sm \bd_{i-1}$. Suppose, for the sake of contradiction, $\b \not\supseteq \a_i$, that is, there is some $\ell \in \a_i\sm\b$. Then $F \subseteq \S(\mathbf{e}_\ell \cdot \s_i)$. Note $\mathbf{e}_\ell \cdot\s_i \in \cC$ since, by \subndy, all $2^d$ possible regions around the point $H_{\a_i}$ produce codewords. However, since $\HS(t_i)$ intersects the interior of $R_{\S(\mathbf{e}_\ell \cdot\s_i)}$, we have $\ms(\mathbf{e}_\ell \cdot\s_i) < \ms(\s_i)$ and therefore $\S(\mathbf{e}_\ell \cdot \s_i) \in \bd_{i-1}$. We reach a contradiction, as this implies $F \in \bd_{i-1}.$
		Therefore, $r(\S_i) = \S_i|\a_i$ is the unique minimal face in $\bd_i \sm \bd_{i-1}$. This completes the proof.
	\end{proof}

	We now prove that a \nd\ hyperplane code is a subset of codewords of a \nd\ hyperplane arrangement in $\R^d$.
		
	\begin{lem}\label{L:polyhedralapproximation} 
		If $\cC$ is a \nd\ hyperplane code, then $\cC$ can be realized by a \nd\ pair $(\cH, \mathcal P )$ such that $ \mathcal P = \bigcap_{j\in[m]}B_j^+$ is an open polytope with bounding hyperplanes $\cB$ such that $\cH\cup\cB$ has \subnd\ intersections in $\R^d$.
	\end{lem}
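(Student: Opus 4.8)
The plan is to start from an arbitrary \nd\ realization $(\cH_0,X_0)$ of $\cC$ and to improve it in four steps: \textbf{(1)} shrink $X_0$ to a bounded convex domain chosen to be ``generic'' with respect to $\cH_0$; \textbf{(2)} perturb the hyperplanes so that the arrangement becomes fully generic in $\R^d$ without changing the code over the shrunken domain; \textbf{(3)} replace the domain by a polytope; \textbf{(4)} perturb the bounding hyperplanes of that polytope so that, together with the (already generic) $\cH$, they are in general position in $\R^d$. The point of step (1) is that by \ndy\ every degeneracy of $\cH_0$ — i.e.\ every nonempty intersection flat $H_{0,\tau}=\bigcap_{i\in\tau}H_{0,i}$ with $\dim H_{0,\tau}>d-|\tau|$ — lies \emph{outside} $X_0$, so it can be pushed strictly outside a suitable bounded domain and then repaired by a perturbation that does not touch the combinatorics over that domain.

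Concretely, using \refL{nondegenerateatoms} I would pick a point $x_\sigma$ in the nonempty interior of each atom $A^{\cH_0}_\sigma$ ($\sigma\in\cC$) and set $K=\hull\{x_\sigma\mid\sigma\in\cC\}$, a compact convex subset of $X_0$. Each degenerate flat $H_{0,\tau}$ is disjoint from $X_0$, hence from $K$, so I can choose a generic closed Euclidean ball $X$ with $K\subseteq X\subseteq\overline X\subseteq X_0$ such that $\overline X$ is disjoint from every degenerate flat and $\partial X$ meets every hyperplane $H_{0,i}$ and every (expected-dimensional) flat $H_{0,\tau}$ transversally; then $(\cH_0,X)$ is \nd\ and $\code(\cH_0,X)=\cC$ because $K\subseteq X\subseteq X_0$. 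For step (2): the $n$-tuples of hyperplanes that are \emph{not} in general position in $\R^d$ form a finite union of proper algebraic subsets of $(\R^d\times\R)^n$, hence a set of measure zero, so I choose $\cH$ a generic perturbation of $\cH_0$, small enough that (a) each $x_\sigma$ still satisfies the defining strict inequalities of $A^{\cH}_\sigma$, and (b) each hyperplane and each intersection flat that was disjoint from the bounded set $\overline X$ — in particular the flats created from parallel or over-determined subcollections of $\cH_0$, which escape to infinity — stays disjoint from $\overline X$. Since all degeneracies of $\cH_0$ lie outside $\overline X$, the arrangement $\cH_0$ is generic and transversal to $\partial X$ near $\overline X$, so its induced cell structure on $\overline X$ is stable under such perturbations and $\code(\cH,X)=\cC$; now $\cH$ is in general position in $\R^d$ and $(\cH,X)$ is still \nd.

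For steps (3) and (4) I would approximate a slightly smaller ball by a \emph{simple} open polytope $\mathcal P_0$ with $K\subseteq\mathcal P_0\subseteq X$; then $(\cH,\mathcal P_0)$ is \nd\ (its intersections already have \subnd\ intersections inside $X$) and $\code(\cH,\mathcal P_0)=\cC$, since $\mathcal P_0\subseteq X$ still contains every $x_\sigma$. Finally, holding $\cH$ fixed, I perturb the finitely many facet hyperplanes of $\mathcal P_0$ by a small generic amount: for $\cH$ in general position the facet-perturbations for which some subcollection of $\cH\cup\cB$ has a nonempty intersection of the wrong dimension again form a measure-zero set, so a generic small perturbation produces bounding hyperplanes $\cB=\{B_1,\dots,B_m\}$ with $\cH\cup\cB$ in general position in $\R^d$. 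Keeping the perturbation small preserves the combinatorial type of the polytope (so the $B_j$ remain its bounding hyperplanes) and the inclusions $K\subseteq\mathcal P:=\bigcap_{j\in[m]}B_j^+\subseteq X$, whence $\code(\cH,\mathcal P)=\cC$; thus $(\cH,\mathcal P)$ is the required realization.

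The main obstacle is the stability claim in step (2): that a sufficiently small perturbation of $\cH_0$ does not change $\code(\cH_0,X)$. No atom can be destroyed, since each contains a fixed interior point $x_\sigma$; the delicate direction is that no \emph{new} codeword appears, which is exactly why $X$ must be chosen generically in step (1). The cleanest way I see to close this is to argue that $\cH_0$ together with $\partial X$ is generic enough — all hyperplanes and intersection flats transversal to $\partial X$, no degeneracies inside $\overline X$ — that the face poset of the arrangement restricted to $\overline X$ is locally constant in the hyperplane parameters, so that the perturbed arrangement has the same restricted combinatorics and hence the same code. Everything else (polytopal approximation of a ball, genericity of a small perturbation, \ndy\ of restrictions to subregions) is routine.
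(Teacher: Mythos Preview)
Your proof is correct and follows essentially the same strategy as the paper: perturb $\cH$ to have \subnd\ intersections throughout $\R^d$ without changing the code, pass to a polytope containing one interior point per atom, then perturb the bounding hyperplanes generically. The paper shortcuts your steps (1), (3), (4) by taking $\mathcal P$ to be (a perturbation of) the convex hull $K$ of the atom-representatives directly, skipping the detour through a ball; it is also considerably less careful than you are about justifying that the perturbation of $\cH$ preserves the code, simply citing \refL{nondegenerateatoms}.
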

	\begin{proof}
		Let $(\cH,X)$ be a \nd\ pair realizing $\cC$.
		By \refL{nondegenerateatoms}, we can perturb the hyperplanes $\cH$ to an arrangement $\cH'$ while preserving the atoms of the arrangement $(\cH,X)$, i.e.\ $\code(\cH',X) = \code(\cH,X)$.
		Thus, $\cC$ has a realization $(\cH',X)$ such that $\cH'$ has \subnd\ intersections outside of $X$ as well.
		
		Applying \refL{nondegenerateatoms} again, we can choose a point $p_\s$ in the interior of $A_\s^{\cH'}$ for every $\s\in\cC$.
		Let $\cP$ be the interior of the convex hull of the set of points $\{p_\s \mid \s\in\cC\}$; by perturbing the points slightly we may assume $\cP$ is full-dimensional.
		Let $\cB =\{B_{n+1}^+,\dots,B_{n+m}^+\}$ denote the bounding hyperplanes of this polytope, i.e.\  $\cP = \bigcap_{j=n+1}^{n+m} B_j^+$.
		Since $\cP\subseteq X$, we conclude  $\code(\cH',\cP) \subseteq \code(\cH',X)$. 
		Since we chose a points $p_\s$ for every codeword of $\C$,  $\s\in\cC $ implies $ A_\s^{\cH'}\cap \cP \neq \varnothing$  and therefore $\code(\cH',X) \subseteq \code(\cH',\cP)$.
		Thus we have $\cC = \code(\cH',\cP)$ and $(\cH',\cP)$ is a \nd\ arrangement.
		
		The hyperplanes in $\cH'\cup\cB$ do not necessarily have \subnd\ intersections. Again, we apply \refL{nondegenerateatoms}: one can perturb each hyperplane in  $\cB$ to hyperplanes $\cB^\prime$, so that these hyperplanes have  \subnd\ intersections, yet the appropriate code  is preserved, i.e.   $\C= \code  (\cH' ,\cP )= (\cH', \cP^\prime)$, where  $ \cP^\prime$ is the open polyhedron $\cP' = \bigcap_{B\in\cB'} B^+$. This completes the proof.
	\end{proof}

	\noindent We extend \refL{Rdshell} to the general case with the following standard lemma \cite{BW1997}.
	
	\begin{lem}[{\cite[Proposition 10.14]{BW1997}}]\label{L:shellablelink}
		Let $\D$ be a shellable simplicial complex. Then $\link_\s\D$ is shellable for any $\s\in\D$, with shelling order induced from the shelling order of $\D$.
	\end{lem}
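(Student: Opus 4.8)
The plan is to carry out the standard argument: take a shelling $F_1,\dots,F_t$ of $\D$, throw away those facets that do not contain $\s$, strip $\s$ from the rest, and check that the resulting sequence of facets shells $\link_\s\D$. The first step is purely bookkeeping. Since $\D$ is shellable it is pure, and a short maximality argument shows that every facet $G$ of $\link_\s\D$ has the form $G=F\sm\s$ for a unique facet $F$ of $\D$ with $\s\subseteq F$ (indeed $G\cup\s$ lies in some facet $F$, which then contains $\s$, and $F\sm\s$ is a face of the link containing $G$); in particular $\link_\s\D$ is itself pure, of dimension $\dim\D-|\s|$. Writing $F_{i_1},\dots,F_{i_s}$ ($i_1<\dots<i_s$) for the subsequence of facets of $\D$ that contain $\s$, set $G_k\od F_{i_k}\sm\s$. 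The cases $s\le 1$ are vacuous (and $\s=\varnothing$ simply recovers the original shelling), so I will assume $s\ge 2$ and claim $G_1,\dots,G_s$ is a shelling order.

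I would verify this with the restriction-face formulation of \refD{shellability}. Put $\D_j\od\D(\{F_1,\dots,F_j\})$ and $\Lambda_k\od\D(\{G_1,\dots,G_k\})$. Fix $k$ and let $\rho\subseteq G_k$; then $\rho\cap\s=\varnothing$, so $\rho\in\Lambda_{k-1}$ iff $\rho\cup\s\subseteq F_{i_l}$ for some $l<k$, and — since every facet of $\D$ containing $\s$ is one of the $F_{i_l}$ — this is equivalent to $\rho\cup\s\in\D_{i_k-1}$. By the interval decomposition induced by a shelling, $\rho\cup\s\in\D_{i_k}\sm\D_{i_k-1}$ iff $r_\D(F_{i_k})\subseteq\rho\cup\s\subseteq F_{i_k}$, which, because $\rho\subseteq F_{i_k}\sm\s$, is exactly the condition $r_\D(F_{i_k})\cap G_k\subseteq\rho\subseteq G_k$. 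Hence $\Lambda_k\sm\Lambda_{k-1}$ has a unique minimal element, the candidate restriction face $r(G_k)\od r_\D(F_{i_k})\sm\s=r_\D(F_{i_k})\cap G_k$.

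The only thing left — and the one place the order is actually used, so the crux of the proof — is that $r(G_k)\neq\varnothing$ for every $k\ge 2$. Suppose instead $r_\D(F_{i_k})\subseteq\s$. Then $r_\D(F_{i_k})\subseteq\s\subseteq F_{i_k}$, so by \refD{shellability} the face $\s$ itself is new at step $i_k$; in particular $\s\not\subseteq F_j$ for all $j<i_k$. But $\s\subseteq F_{i_1}$ and $i_1<i_k$ (as $k\ge 2$), a contradiction. Therefore each $r(G_k)$ with $k\ge2$ is a nonempty face and genuinely the unique minimal element of $\Lambda_k\sm\Lambda_{k-1}$, so $G_1,\dots,G_s$ is a shelling order of $\link_\s\D$, induced from the given shelling of $\D$ exactly as claimed. (Equivalently, in the language of \refD{shellabilityeasy}, the same computation shows $\D(\{G_k\})\cap\Lambda_{k-1}$ is generated by the facets $G_k\sm\{v\}$ for $v\in r_\D(F_{i_k})\sm\s$, each of dimension $\dim(\link_\s\D)-1$, which is the required purity.)
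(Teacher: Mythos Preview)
Your argument is correct and is the standard one. Note, however, that the paper does not actually prove this lemma: it is stated with a citation to \cite[Proposition 10.14]{BW1997} and used as a black box. So there is no ``paper's own proof'' to compare against; what you have written is essentially the textbook verification that restricting a shelling to the facets containing $\s$ and deleting $\s$ yields a shelling of the link, with restriction faces $r(G_k)=r_\D(F_{i_k})\sm\s$. Your check that $r(G_k)\neq\varnothing$ for $k\ge 2$ is exactly the point where the induced order is used, and it is handled correctly.
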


	\begin{proof}[Proof of \refPp{nondegenerateshelling}]
		By \refL{polyhedralapproximation}, $\cC$ can be realized as $\cC = \code(\cH,\mathcal P)$ with
			\[ \mathcal P = \bigcap_{j=n+1}^{n+m} B_j^+ \]
		an open polyhedron such that the arrangement $\cH \cup \cB$ has \subnd\ intersections in $\R^d$. Set $\cC' = \code(\cH\cup\cB,\R^d)$, a code on vertex set $[n+m]$. By \refL{Rdshell}, $\bd(\cC')$ is shellable. Set $F = \{n+1,\dots,n+m\} \in \bd(\cC')$. Then we have
			\[ \link_F \bd(\cC') = \bd\left(\code\Bigl(\cH,\bigcap_{j=n+1}^{n+m} B_j^+\Bigr)\right) = \bd(\cC). \]
		By \refL{shellablelink}, as the link of a shellable complex, $\bd(\cC)$ is shellable.
	\end{proof}

\subsection{Obstructions following from shellability}\label{S:proofs:shellabilityobstructions}

	In general, shellable simplicial complexes are homotopy-equivalent to a wedge sum of spheres, where the number and dimension of the spheres correspond to the facets with $r(F) = F$ in some shelling order \cite{Kozlov2008}. First we prove a stronger version of this statement for the polar complex of a code, which will be used throughout the proofs of all parts of \refT{shellableimplies}. 
	Note the condition of this lemma is intrinsic to the polar complex of the code and does not rely on any particular realization. 
	
	\begin{lem}\label{L:shellablecollapsible}
		If $\bd(\cC)$ is shellable, then either $\,\cC = 2^{[n]}$ or $\bd(\cC)$ is collapsible.
	\end{lem}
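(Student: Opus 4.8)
The plan is to work directly from a shelling order $F_1,\dots,F_t$ of $\bd(\cC)$ and show that, unless every one of the $2^n$ facets is present, we can dismantle the complex by collapses. The key structural fact is that in a shelling of a \emph{pure} complex, each newly added facet $F_i$ (for $i>1$) is glued to $\bd_{i-1}$ along a nonempty union of facets of $\partial F_i$; equivalently, the "new faces" $\bd_i \sm \bd_{i-1}$ have a unique minimal element $r(F_i)$, and $r(F_i) \neq \varnothing$ precisely because the gluing is along at least one facet of the boundary. So the only facet that could have $r(F_i)=\varnothing$ — i.e. contribute a "top cell" making the complex non-collapsible — is $F_1$ itself, and that happens only if the whole complex is a single simplex or if some intermediate attachment is along \emph{all} of $\partial F_i$, closing up a sphere.

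First I would run the shelling in reverse: peel off $F_t, F_{t-1},\dots$, at each stage collapsing $\bd_i$ down to $\bd_{i-1}$. The relevant observation is that $(r(F_i), F_i)$ — or more precisely the interval of faces $[r(F_i), F_i] = \bd_i \sm \bd_{i-1}$ — can be removed by a sequence of elementary collapses whenever $r(F_i) \subsetneq F_i$, since this interval is a Boolean interval with a free face. This is the standard fact that "shelling in reverse is a collapse, modulo full-boundary attachments." The one obstruction is a step where $r(F_i) = \varnothing$, which forces $\bd_{i-1}$ to already contain the entire boundary $\partial F_i$ (every proper face of $F_i$) — then attaching $F_i$ creates a new top-dimensional homology class and we cannot collapse past it.

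The heart of the argument is then to show: if some step has $r(F_i) = \varnothing$ and $i > 1$, then in fact $\cC = 2^{[n]}$. Here is where I would use the special combinatorics of the polar complex. The facets of $\bd(\cC)$ are exactly the sets $\Sigma(\sigma) = \sigma \sqcup \br{[n]\sm\sigma}$, and two facets $\Sigma(\sigma), \Sigma(\sigma')$ share a codimension-one face iff $\sigma, \sigma'$ differ in exactly one coordinate; the facet-adjacency graph of $\bd(2^{[n]})$ is the Boolean hypercube graph on $\{0,1\}^n$. A facet $\Sigma(\sigma)$ of $\bd(\cC)$ has all of $\partial\Sigma(\sigma)$ already present in $\bd_{i-1}$ exactly when, for every $j \in [n]$, the ridge $\Sigma(\sigma)\sm\{v_j\}$ (where $v_j$ is the vertex $j$ or $\br j$ of $\Sigma(\sigma)$) lies in some earlier facet — and since a ridge of the polar complex lies in at most two facets, namely $\Sigma(\sigma)$ and $\Sigma(\e_j\cdot\sigma)$, this means \emph{every} neighbor $\e_j\cdot\sigma$ of $\sigma$ in the hypercube is already a codeword occurring earlier. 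So at such a step, $\sigma$ together with all $n$ of its hypercube-neighbors are codewords. I would then argue that this "closed neighborhood" forces more: iterating, one shows the set of codewords $\{\tau : \Sigma(\tau) \text{ appears at or before step } i\}$ must be closed under taking hypercube-neighbors, hence all of $\{0,1\}^n$, i.e. $\cC = 2^{[n]}$; and since $\bd_i \subseteq \bd(\cC)$, this forces $\cC = 2^{[n]}$. In the remaining case — no step has $r(F_i) = \varnothing$ for $i>1$ — reverse-shelling collapses $\bd(\cC)$ down to the single facet $F_1$ (a simplex), which is itself collapsible, so $\bd(\cC)$ is collapsible.

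The main obstacle I anticipate is the combinatorial bootstrapping in the last step: carefully showing that "every hypercube-neighbor of $\sigma$ appeared before $\sigma$" propagates to force all of $2^{[n]}$, rather than just a local cluster. The clean way is probably to note that if $r(F_i)=\varnothing$ with $i$ minimal, then $\bd_{i-1}$ already contains all proper faces of $F_i = \Sigma(\sigma)$; in particular it contains every vertex of $[n]\sqcup\br{[n]}$ and every ridge through $\Sigma(\sigma)$, so every $\e_j\cdot\sigma \in \cC$. But one must also rule out that a \emph{later} attachment is needed — and here the key point is that once $\partial\Sigma(\sigma)\subseteq\bd_{i-1}$, the same must be arranged for the "antipodal" facets by the purity/shelling condition at later steps, and a counting argument (a shellable pure complex where some interior facet has full boundary already filled must, by the $h$-vector being nonnegative and the reverse-collapse bookkeeping, have all facets) closes it. I would present this counting/propagation argument as the technical core and keep the reverse-collapse part as a routine invocation of the standard "reverse shelling" lemma.
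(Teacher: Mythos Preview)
Your reverse-shelling framework is the same as the paper's inductive approach, but there are two problems.

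First, a notational slip that propagates through the write-up: the obstruction to collapsing is $r(F_i) = F_i$, not $r(F_i) = \varnothing$. You correctly note that $r(F_i) \neq \varnothing$ for $i > 1$, and that the collapse $\bd_i \searrow \bd_{i-1}$ works whenever $r(F_i) \subsetneq F_i$; the remaining case is therefore $r(F_i) = F_i$, which is exactly the situation where every proper face of $F_i$ already lies in $\bd_{i-1}$. Your description of the geometry (``full boundary already filled'') is right, but you repeatedly label it $r(F_i)=\varnothing$, which is the opposite extreme.

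Second, and more substantively, the ``bootstrapping'' step is a genuine gap. From $r(F_i) = F_i$ you correctly deduce that every hypercube-neighbor $\e_j\cdot\sigma_i$ appears among $\sigma_1,\dots,\sigma_{i-1}$. But nothing forces those neighbors to themselves have full boundary in $\bd_{i-1}$, so the claim that the set $\{\sigma_1,\dots,\sigma_i\}$ is closed under taking neighbors does not follow, and the propagation to all of $2^{[n]}$ is unjustified. Your fallback suggestion of an $h$-vector/counting argument can be made to work, but only once you know that a proper pure subcomplex of $\partial\Diamond_n$ has $\tilde H_{n-1}=0$; that homological fact is exactly what the paper invokes. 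The paper's route is: by induction $\bd(\cC')$ is collapsible, hence contractible, so $\bd(\cC) \simeq \bd(\cC)/\bd(\cC') \simeq S^{n-1}$; but if $\cC \neq 2^{[n]}$ then $\bd(\cC)$ sits inside $\bd(2^{[n]}\sm\{\tau\})$, a closed $(n-1)$-ball, and there is no embedding $S^{n-1}\hookrightarrow \R^{n-1}$. This topological step replaces your propagation argument entirely and is what you are missing.
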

	
	\begin{proof}
		We induct on the number of codewords of $\cC$.
		Let $F_1,\dots,F_t$ be a shelling order of $\bd(\cC)$, with $\s_1,\dots,\s_t$ the corresponding order of codewords in $\cC$.
		For ease of notation, let $\cC' = \{\s_1,\dots,\s_{t-1}\}$ denote the first $t-1$ codewords in this shelling order.
		By construction, $\bd(\cC')$ is shellable.
		Because it has one fewer codeword than $\cC$, it cannot be the full code and therefore, by inductive hypothesis, $\bd(\cC')$ is collapsible.
		
		By definition, $r(F_t)$ is the unique minimal element of the collection $\bd(\cC) \sm \bd(\cC')$ and hence the only facet that contains $r(F_t)$ is $F_t$.
		If $r(F_t) \subsetneq F_t$, then $(r(F_t),F_t)$ is a free pair, and $\bd(\cC) \searrow_{\,r(F_t)} \bd(\cC')$ which is collapsible.
		
		In the case $r(F_t) = F_t$, we claim we must have $\cC = 2^{[n]}$.
		Suppose not, for the sake of contradiction, and let $\t \in 2^{[n]}\sm\cC$. 
		Note $\bd(2^{[n]}\sm\{\t\})$ is homeomorphic to a closed $(n-1)$-ball (as it is a sphere missing top-dimensional open disc).
		Since $\bd(\cC')$ is a collapsible subcomplex of a simplicial complex, $\bd(\cC)$ is homotopy-equivalent to the quotient space $\bd(\cC) / \bd(\cC')$ (see \cite[Proposition 0.17 and Proposition A.5]{Hatcher}).
		Because $r(F_t) = F_t$, the boundary of the simplex $\D(\{F_t\})$ is contained in $\bd(\cC')$, and therefore $\bd(\cC)/\bd(\cC')$ is homotopy equivalent to $S^{n-1}$.
		We reach a contradiction, as $\bd(\cC) \subseteq \bd(2^{[n]}\sm\{\t\})$, but there is no embedding $S^{n-1} \hookrightarrow \R^{n-1}$ (see, e.g.\ \cite[Corollary 2B.4]{Hatcher}).
		Therefore, in this case we have $\cC = 2^{[n]}$.
	\end{proof}
	
	\noindent To prove \refT{shellableimplies}.1, we need one more lemma. Note that this lemma concerns  with \emph{contractibility} of certain subcomplexes, hence it can only be used to show $\cC$ has no \emph{weak} local obstructions.
	
	\begin{lem}[{\cite[Lemma 4.4]{Carina 2015}}]\label{L:linklemma}
		Let $\D$ be a simplicial complex on vertex set $V$. Let $\a,\b\in\D$ with $\a\cap\b =\varnothing$, $\a\cup\b \subsetneq V$, and $\link_\a(\D|_{\a\cup\b})$ \emph{not} contractible. Then there exists $\a'\in\D$ such that (i) $\a'\supseteq \a$, (ii) $\a'\cap\b = \varnothing$, and (iii) $\link_{\a'}(\D)$ is not contractible.
	\end{lem}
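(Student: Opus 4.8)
The plan is to argue by induction on $k = |V\sm(\a\cup\b)|$, which satisfies $k\geq 1$ by the hypothesis $\a\cup\b\subsetneq V$. The one nontrivial ingredient is the standard \emph{deletion fact}: for any simplicial complex $\Gamma$ and vertex $v$, the decomposition $\Gamma = \del_v\Gamma \,\cup\, (v\ast\link_v\Gamma)$ exhibits $\Gamma$ as the union of $\del_v\Gamma$ with the contractible closed star $v\ast\link_v\Gamma$, glued along $\link_v\Gamma$; hence if $\link_v\Gamma$ is contractible then a gluing argument for contractible subcomplexes (e.g.\ \cite[Proposition 0.17]{Hatcher}) gives $\Gamma\simeq\del_v\Gamma$. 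In contrapositive form: if $\Gamma$ is contractible and $\del_v\Gamma$ is \emph{not}, then $v$ is genuinely a vertex of $\Gamma$ and $\link_v\Gamma$ is not contractible. I will combine this with three bookkeeping identities: $\link_\a(\Delta|_{V\sm\{v\}})=\del_v(\link_\a\Delta)$ for $v\notin\a$; the special case $\link_\a(\Delta|_{\a\cup\b})=\del_v(\link_\a\Delta)$ when $V=\a\cup\b\cup\{v\}$ (both equal $\{\tau\subseteq\b\mid\tau\cup\a\in\Delta\}$); and transitivity of links $\link_{\a\cup\{v\}}\Delta=\link_v(\link_\a\Delta)$.

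For the base case $k=1$, write $V\sm(\a\cup\b)=\{v\}$. If $\link_\a\Delta$ is not contractible, take $\a'=\a$. Otherwise $\link_\a\Delta$ is contractible while $\del_v(\link_\a\Delta)=\link_\a(\Delta|_{\a\cup\b})$ is not; by the deletion fact $v$ is a vertex of $\link_\a\Delta$ (so $\a\cup\{v\}\in\Delta$) and $\link_v(\link_\a\Delta)=\link_{\a\cup\{v\}}\Delta$ is not contractible. Then $\a'=\a\cup\{v\}$ works: it contains $\a$, avoids $\b$ (as $v\notin\b$), lies in $\Delta$, and has non-contractible link.

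For the inductive step $k\geq 2$, pick $v\in V\sm(\a\cup\b)$ and set $\Delta'=\Delta|_{V\sm\{v\}}$, a complex on $V\sm\{v\}$ with $|(V\sm\{v\})\sm(\a\cup\b)|=k-1\geq 1$. Since $v\notin\a\cup\b$ we have $\Delta'|_{\a\cup\b}=\Delta|_{\a\cup\b}$, so the hypotheses hold for $(\Delta',\a,\b)$, and the inductive hypothesis produces $\a''\supseteq\a$ with $\a''\cap\b=\varnothing$, $\a''\in\Delta'\subseteq\Delta$, and $\link_{\a''}\Delta'$ not contractible; by the first identity $\link_{\a''}\Delta'=\del_v(\link_{\a''}\Delta)$. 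If $\link_{\a''}\Delta$ is not contractible, take $\a'=\a''$. If it is contractible, the deletion fact forces $v$ to be a vertex of $\link_{\a''}\Delta$ (so $\a''\cup\{v\}\in\Delta$) and $\link_{\a''\cup\{v\}}\Delta=\link_v(\link_{\a''}\Delta)$ to be non-contractible; then $\a'=\a''\cup\{v\}$ satisfies all three conclusions.

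The only place requiring care — and what drives the case split in both the base and inductive steps — is the handling of degenerate complexes: the irrelevant complex $\{\varnothing\}$ and the void complex $\{\,\}$ both realize to the empty space and are not contractible, so one must verify that ``$\del_v\Gamma$ not contractible'' genuinely implies ``$v$ is a vertex of $\Gamma$'' (otherwise $\del_v\Gamma=\Gamma$ and we are already in the other case). Beyond this bookkeeping the argument is routine; the single conceptual input is the deletion fact, which is a one-line consequence of standard homotopy theory, so I do not anticipate a substantial obstacle.
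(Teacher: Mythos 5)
The paper does not prove this lemma; it is imported by citation from the reference keyed \texttt{Carina 2015}, so there is no in-paper proof to compare against. Your argument is, however, correct and self-contained. The key mechanism is exactly right: the closed-star decomposition $\Gamma = \del_v\Gamma \cup (v*\link_v\Gamma)$ with intersection $\link_v\Gamma$ gives, via the contractible-subcomplex gluing lemma, that $\Gamma\simeq\del_v\Gamma$ whenever $\link_v\Gamma$ is contractible, and you invoke only the contrapositive. Your three bookkeeping identities all check out for $v\notin\alpha$: $\link_\alpha(\Delta|_{V\setminus\{v\}}) = \del_v(\link_\alpha\Delta)$, its restriction to $V=\alpha\cup\beta\cup\{v\}$, and $\link_{\alpha\cup\{v\}}\Delta=\link_v(\link_\alpha\Delta)$. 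The induction on $k=|V\setminus(\alpha\cup\beta)|$ then cleanly reduces to the case split ``$\link_{\alpha''}\Delta$ contractible or not,'' and the one genuinely delicate point — that contractibility of $\Gamma$ plus non-contractibility of $\del_v\Gamma$ forces $v$ to be an actual vertex of $\Gamma$, because both the void complex $\{\,\}$ and the irrelevant complex $\{\varnothing\}$ realize to $\varnothing$ and are therefore \emph{not} contractible — you have identified and handled correctly. Conditions (i) and (ii) are preserved at each step since the added vertex $v$ lies outside $\alpha\cup\beta$, and (iii) is what the deletion argument produces. I see no gap.
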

	
	\begin{proof}[Proof of \refT{shellableimplies}.\ref{T:shellableimpliesnobflo}] Assume that the polar complex $\Gamma(\C)$ is shellable. 
		To show that $\cC$ has no weak local obstructions, first suppose $\t\in\D(\cC)$ and $\link_\t\D(\cC)$ is not contractible. We will show $\t\in\cC$. Note that $\D(\cC) = \bd(\cC)|_{[n]\sqcup\varnothing}$, thus  we apply \refL{linklemma} to the pair $\a = \t\sqcup\varnothing,$ $\b = ([n]\sm\t)\sqcup\varnothing$ in the polar complex $\bd(\cC)$: there exists a face $T \in \bd(\cC)$ such that (i) $T = T^+ \sqcup \br{T^-} \supseteq \t\sqcup\varnothing$, (ii) $T\cap(([n]\sm\t)\sqcup\varnothing) = \varnothing$, and (iii) $\link_T\bd(\cC)$ is not contractible.
	Statements (i) and (ii) together imply $T^+ = \t$. Statement (iii) together with \refL{shellablecollapsible}, implies $\link_T\bd(\cC) = \bd(2^{[n]\sm\underline{T}})$. Therefore this link contains the facet $F$   consisting of all barred vertices in $ [n]\sm\underline{T}$. Thus   $T \cup F = \t \sqcup \br{[n]\sm\t}$ is a face of $\bd(\cC)$ and therefore $\t\in\cC$; hence $\t$ cannot be a local obstruction.
		
		For any $g\in(\ZZ_2)^n$, the above argument extends to $g\cdot\cC$ verbatim, since $\Gamma(g\cdot \C)=g\cdot \Gamma( \C)$, and $g\cdot \Gamma( \C)$ is also shellable. Thus, $\cC$ has no bitflip local obstructions.
	\end{proof}
	
	\begin{proof}[Proof of \refT{shellableimplies}.\ref{T:shellableimpliesnoslo}]
		Links of $\bd(\cC)$ are polar complexes of a code on a smaller set of vertices, and links of shellable complexes are shellable (\refL{shellablelink}). Therefore, we can apply \refL{shellablecollapsible} to conclude $\link_F\bd(\cC)$ is either collapsible or $\bd(2^{[n]\sm\underline{F}})$ for any $F \in \bd(\cC)$. 		Thus, no face $F$ can be a sphere link obstruction.
	\end{proof}
	
	We use one final lemma to prove \refT{shellableimplies}.3, which concerns faces of simplicial complexes with collapsible links.
	
	\begin{lem}\label{L:link-collapse}
		Let $\D$ be a simplicial complex with $\a\in\D$ such that $\link_\a\D$ is collapsible. Then $\D \searrow \del_\a\D$.
	\end{lem}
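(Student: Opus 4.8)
The plan is to argue by induction on the number of faces of $\link_\a\D$, peeling off one elementary collapse at a time. Since $\a\in\D$ we have $\varnothing\in\link_\a\D$, so $\link_\a\D\neq\{\}$; and since $\link_\a\D$ is collapsible, it is not the irrelevant complex $\{\varnothing\}$ either. Hence $\link_\a\D$ admits a collapsing sequence $\link_\a\D = L_0 \searrow L_1 \searrow \cdots \searrow \{\}$; fix one, and let $(\s,\t)$ be the free pair used in its first step, so that $L_1 = \del_\s\link_\a\D$ is again collapsible (it is the tail of the sequence) and has strictly fewer faces than $\link_\a\D$.

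The crux is a single-collapse claim: $(\a\cup\s,\a\cup\t)$ is a free pair in $\D$, and the resulting complex $\D' \od \del_{\a\cup\s}\D$ satisfies $\del_\a\D' = \del_\a\D$ and $\link_\a\D' = \del_\s\link_\a\D = L_1$. For freeness, any face of $\D$ containing $\a\cup\s$ contains $\a$, hence equals $\a\cup\nu$ for a unique $\nu\in\link_\a\D$ with $\nu\supseteq\s$, and such a face is a facet of $\D$ exactly when $\nu$ is a facet of $\link_\a\D$; since $\t$ is the unique facet of $\link_\a\D$ containing $\s$ (this is precisely the free-pair condition in $\link_\a\D$), the set $\a\cup\t$ is the unique facet of $\D$ above $\a\cup\s$, and $\a\cup\s\subsetneq\a\cup\t$ because $\s\subsetneq\t$ and both are disjoint from $\a$. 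The two identities for $\D'$ are direct bookkeeping: a face avoiding $\a$ automatically avoids $\a\cup\s$, giving $\del_\a\D' = \del_\a\D$; and $\nu\in\link_\a\D'$ iff $\nu\cap\a=\varnothing$, $\a\cup\nu\in\D$, and $\a\cup\nu\not\supseteq\a\cup\s$, which — using $\s\cap\a=\varnothing$ — says exactly $\nu\in\link_\a\D$ and $\s\not\subseteq\nu$, i.e.\ $\nu\in L_1$.

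Granting the claim, we conclude as follows. If $\s=\varnothing$, then $L_1 = \del_\varnothing\link_\a\D = \{\}$, so $\a\notin\D'$ and in fact $\D' = \del_\a\D$ outright, whence $\D\searrow_\a\del_\a\D$ in a single step. If $\s\neq\varnothing$, then $\a\in\D'$ and $\link_\a\D' = L_1$ is collapsible with fewer faces than $\link_\a\D$; by the inductive hypothesis $\D'\searrow\del_\a\D' = \del_\a\D$, and prepending the collapse $\D\searrow_{\a\cup\s}\D'$ yields $\D\searrow\del_\a\D$. The degenerate case $\a=\varnothing$, where $\del_\a\D=\{\}$, falls under the same argument, reducing to the hypothesis that $\D=\link_\varnothing\D$ is collapsible.

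The step I expect to require the most care is verifying that $(\a\cup\s,\a\cup\t)$ is free \emph{in all of $\D$}, not merely inside the closed star of $\a$: one must be sure that the faces of $\del_\a\D$, which the collapse leaves untouched, can never contain $\a\cup\s$ and so cannot produce a second facet above it. This is immediate — any face containing $\a\cup\s$ contains $\a$ and hence lies outside $\del_\a\D$ — but it is the hinge of the whole argument, together with the observation that a genuine collapsing sequence of $\link_\a\D$ never passes through $\{\varnothing\}$, which is what keeps the induction well-founded.
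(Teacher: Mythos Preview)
Your proof is correct and follows essentially the same approach as the paper: both lift the collapsing sequence of $\link_\a\D$ to a collapsing sequence in $\D$ by taking the union of each free pair with $\a$, ending at $\del_\a\D$ since the final collapse in the link is along $\varnothing$. The only difference is organizational: the paper writes out the entire lifted sequence $(\s_1\cup\a,\t_1\cup\a),\dots,(\s_k\cup\a,\t_k\cup\a)$ at once and appeals to ``repeat the argument,'' whereas you package the same computation as an induction on the number of faces of the link, with more explicit verification of the bookkeeping identities $\del_\a\D'=\del_\a\D$ and $\link_\a\D'=\del_\s\link_\a\D$.
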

	
	\begin{proof}
		Let $(\s_1,\t_1),\dots,(\s_k,\t_k)$ be the sequence of free pairs along which $\D_1 = \link_\a\D$ is collapsed (in particular, $\s_k = \varnothing$), resulting in the sequence of simplicial complexes
			\[ \link_\a\D = \D_1 \searrow_{\s_1} \D_2 \searrow_{\s_2} \cdots \searrow_{\s_k} \D_{k+1} = \{\}.\]
		Consider the sequence $(\s_1\cup\a,\t_1\cup\a),\dots,(\s_k\cup\a,\t_k\cup\a)$ in $\D$. We claim, $(\s_1\cup\a,\t_1\cup\a)$ is a free pair: $\s_1\cup\a \subsetneq \t_1\cup\a$ and $\t_1\cup\a$ is a facet of $\D$. If $\s_1\cup\a\subseteq\t'$ for some facet $\t'$, then $\t'\sm\a$ is a facet of $\link_\a\D$ which contains $\s_1,$ hence $\t' = \t$. This argument can be repeated for the pair $(\s_2\cup\a,\t_2\cup\a)$ in $\del_{\s_1\cup\a}\D$, and so on, to show that this is a sequence of free pairs in $\D$. Thus, we have a sequence of collapses
			\[ \D \searrow_{\s_1\cup\a} \cdots \searrow_{\s_k\cup\a} \del_{\s_k\cup\a}\D. \]
		Since $\s_k\cup\a = \a$, we have $\D \searrow \del_\a\D$.
	\end{proof}
		
	\begin{proof}[Proof of \refT{shellableimplies}.\ref{T:shellableimpliesnoco}]
		Assume the polar complex $\bd(\cC)$ is shellable. 
		We demonstrate that if $\s\in\cham(\cC)$ has more than one chamber, then $\s$ is not maximal.
		
		Suppose $T_1 \neq T_2$ are chambers of $\s$, that is
			\[ \link_{T_1}\bd(\cC) = \link_{T_2}\bd(\cC) = \bd(2^\s). \]
		We will proceed by induction on $k = |T_1\sm T_2|>0$. Since $\underline{T_1} = \underline{T_2} = [n]\sm\s$, $k$ is the number of indices where one $T_i$ has a barred vertex and the other does not.
		
		For the base case $k=1$, suppose $T_1 \sm T_2 = i$.
		Then \[ \link_{T_1\cap T_2}\bd(\cC) = \bd(2^{\s\cup \{i\}}) \] so $\s\cup i \in \cham(\cC)$ and $\s$ is not maximal.
			
			Now suppose $|T_1\sm T_2| = k > 1$. 
		We produce a face $F$ such that
		$\link_F\bd(\cC) = \bd(2^\s)$ and $|T_1\sm F| < k$, 
		giving the induction step.  
		Let $T = T_1\cap T_2,$ and consider $\link_T\bd(\cC).$ 
		This is a shellable subcomplex of $\bd(2^{[n]\sm\underline{T}})$; 
		denote its corresponding code by $\cC'$. 
		Let $T_1' = T_1\sm T$ and $T_2' = T_2\sm T$; by design these are disjoint with $|T_1' \sm T_2'| = |T_1'| = |T_2'| = k$ and 
		$\link_{T'_i}\bd(\cC') = \bd(2^\s)$ for $i=1,2$. Because they are disjoint, $\operatorname{star}_{T_1'}\bd(\cC') \cup \operatorname{star}_{T_2'}\bd(\cC')$ is a suspension of $\bd(2^\s)$, making it homotopy equivalent to $S^{|\s|}$.
					
		Consider a face $F' \in \bd(\cC')$ such that $\underline{F'} = \underline{T_1'}$. By construction, $\link_{F'}\bd(\cC')$ is a subcomplex of $\bd(2^\s)$. If $\link_{F'}\bd(\cC') \neq \bd(2^\s)$, then the link is collapsible by Lemmas~\ref{L:shellablelink} and \ref{L:shellablecollapsible}; \refL{link-collapse} implies that $\bd(\cC')$ collapses to $\del_{F'}\bd(\cC')$.
		
		There are $2^k - 2$ faces $F'\neq T_1,T_2$ with 
		$\underline{F'} = \underline{T_1}$. 
		If none of these $F'$ had $\link_{F'}\bd(\cC') = \bd(2^\s)$,
		this would lead to a contradiction: 
		we would have a sequence of collapses
		\[ \bd(\cC') \searrow \operatorname{star}_{T_1'}\bd(\cC') \cup \operatorname{star}_{T_2'}\bd(\cC'). \]
		Since $\bd(\cC)$ is shellable, 
		by \refL{shellablecollapsible} it
		is homotopy equivalent to $S^{n-1}$ or is contractible.
		Collapsing preserves homotopy type, so we reach a contradiction.
				
		Therefore, for one of these $F'$ we must have 
		$\link_{F'}\bd(\cC') = \bd(2^\s)$. 
		Thus $\link_{F'\cup T}\bd(\cC) = \bd(2^\s)$ and so we have another face in $\bd(\cC)$ whose link yields $\bd(2^\s)$, namely $F = F'\cup T$. 
		Since $|T_1 \sm F| < k$, by induction $\s$ is not maximal in $\cham\cC$. Therefore, if $\s$ is maximal in $\cC$, it must have a unique chamber, and thus $\cC$ has no chamber obstructions.
	\end{proof}

\addtocontents{toc}{\protect\setcounter{tocdepth}{0}}

\section*{Acknowledgments}
This work was  supported by the joint NSF DMS/NIGMS grant R01GM117592,
 NSF IOS-155925  to VI.   
Research by ZR was partially supported by a Math+X 
Research Grant.
The authors would like to thank Carina Curto, 
Art Duval, Jack Jeffries, Katie Morrison and 
Anne Shiu for helpful discussions.

\bibliography{references.bib}

\end{document}